\newcommand{\doubleplus}{\mathbin{+\mkern-10mu+}}
\definecolor{myblue}{RGB}{50, 160, 255}
\definecolor{codegreen}{rgb}{0,0.6,0}
\definecolor{codegray}{rgb}{0.5,0.5,0.5}
\definecolor{codepurple}{rgb}{0.58,0,0.82}
\definecolor{backcolour}{rgb}{0.95,0.95,0.92}
\lstdefinelanguage{lmntal}{
  morekeywords = [1]{int, float, unary, ground, string}, % type
  morekeywords = [2]{:-, |, .}, % rules
  morecomment  = [l]{\%},
  morecomment  = [s]{/*}{*/},
  morestring   = *[d]{"},
  alsoletter   = {:-.|'},
  alsodigit    = {:-.|},
  sensitive    = true
}
\definecolor{backcolour}{rgb}{0.95,0.95,0.92}
\definecolor{codegreen}{rgb}{0,0.6,0}
\definecolor{codelightgreen}{rgb}{0.3, 0.6, 0.3}
\definecolor{codegray}{rgb}{0.5,0.5,0.5}
\definecolor{codepurple}{rgb}{0.58,0,0.82}
\newcommand{\fn}[1]{$\mathit{fn}(#1)$}
\newenvironment{condition}[1][]{%
 \begin{oframed}%
    \noindent \textbf{#1:}%
    \rmfamily%
}{%
%  \medskip%
  \end{oframed}%
}
\renewcommand{\arraystretch}{1.3}
\definecolor{link}{HTML}{800006}
\definecolor{cite}{HTML}{2E7E2A}
\definecolor{file}{HTML}{131877}
\definecolor{url} {HTML}{8A0087}
\definecolor{menu}{HTML}{727500}
\definecolor{run} {HTML}{137776}
\let\originaladdchaptertocentry\addchaptertocentry
\renewcommand*{\addchaptertocentry}[2]{%
  \IfArgIsEmpty{#1}{% entry without number
    \originaladdchaptertocentry{#1}{#2}%
  }{% entry with number
    \originaladdchaptertocentry{\chapapp~#1}{#2}%
  }%
}
\newcommand\tcb@cnt@evidenceautorefname{Evidence}
\newcommand\tcb@cnt@exampleautorefname{Example}
\newcommand\tcb@cnt@definitionautorefname{Definition}
\newcommand\tcb@cnt@theoremautorefname{Theorem}
\newcommand\tcb@cnt@proofautorefname{Proof}
\newcommand\tcb@cnt@lemmaautorefname{Lemma}
\setlist[itemize]{leftmargin = 6mm}
\newcommand{\equivby}[1]{\equiv_{\mbox{\scriptsize #1}}}
\newcommand{\mydagger}{{\color{Magenta} \(\dagger\)}}
\newcommand{\vb}[1]{{\textnormal{\texttt{#1}}}}%
\begin{document}
% Title page

\begin{titlepage}
  \newgeometry{left=1truecm, right=1truecm, bottom=1truecm}
  \begin{center}
    {\Large Graduation Thesis\\[2.5truecm]}
    {\Huge
      Implementing G-Machine
      in HyperLMNtal}\\[0.5truecm]
    \begin{CJK}{UTF8}{ipxm}
      {\Large HyperLMNtal を用いた G-Machine の実装}
      \end{CJK}
    \\[6truecm]
      {\Large
        \begin{tabular}{ll}
          Submition Date:& January 22, 2021 \\
          Supervisor:&  Prof. Dr. Kazunori Ueda \\
        \end{tabular}\\[3truecm]
        Waseda University\\
        Computer Science and Engennering\\[0.8truecm]
        \begin{tabular}{ll}        
          Student ID:& 1W172146-1\\
        \end{tabular}\\[0.2truecm]
        Jin Sano 
      }
  \end{center}
  \restoregeometry
\end{titlepage}

\frontmatter

% Abstract
\vspace*{1.5truecm}
\section*{Abstract}
\thispagestyle{empty}

Since language processing systems generally allocate/discard memory with complex reference relationships, including circular and indirect references, their implementation is often not trivial.
Here, the allocated memory and the references can be abstracted to the labeled vertices and edges of a graph.
And there exists a \emph{graph rewriting language}, a programming language or a calculation model that can handle graph intuitively, safely and efficiently.
Therefore, the implementation of a language processing system can be highly expected as an application field of graph rewriting language.
To show this, in this research, we implemented \emph{G-machine}, the virtual machine for lazy evaluation, in hypergraph rewriting language, \emph{HyperLMNtal}.

HyperLMNtal is extended from the graph rewriting language/calculus model \emph{LMNtal}.
However, it lacked the rigid definition: it was more an extension of the implementation than on the calculus model.
The semantics of LMNtal features fine-grained concurrency based on local rewriting.
However, since we could not determine the locality of the hyperlink in HyperLMNtal, we couldn't incorporate it into the LMNtal semantics.
Thus, we first introduced a scope (\emph{link creation}) and defined the locality of a hyperlink to formalize the syntax/semantics.
Now, HyperLMNtal is not just a programming language extended from the basic calculation model, but also a concurrent calculation model based on strict and formal definitions.

G-machine is a virtual machine that performs lazy evaluation, which is the basis of implementation of lazy functional programming languages such as Haskell.
The implementation of G-machine requires a heap, which is a more general graph than just a tree so as can share subgraphs.
Therefore, HyperLMNtal is ideal to implement this.
In our research, we implemented a compiler which translates the source language, the core language, into the execution code for G-machine and G-machine, in HyperLMNtal.
We have succeeded to implement the compiler in 404 lines and G-machine in 570 lines and showed that we can implement the language processing system that handles complex data structures in graph rewriting language tersely.
In addition, we achieved to visualize G-machine using the HyperLMNtal visualizer.

\vspace*{1.5truecm}
\begin{CJK}{UTF8}{ipxm}
\section*{概要}
\thispagestyle{empty}

言語処理系は一般的に，動的なメモリの確保及び破棄を行い，循環参照や間接参照を含む複雑な参照関係にあるデータを扱うため，その実装は自明ではないことが多い.
% 言語処理系は一般に動的なメモリの確保，破棄を行い，循環参照，間接参照を含む複雑な参照関係にあるデータを扱うため，その実装は多くの場合自明ではない.
ここで，メモリ領域及びその参照はグラフ理論におけるグラフのラベル（データ）付き頂点と辺に抽象化できる.
また，グラフを直感的に安全に効率良く扱うことができるプログラミング言語または計算モデルとして，グラフ書き換え言語がある.
従って言語処理系の実装は，グラフ書き換え言語の応用分野として高い期待が持てるのではないかと考えた.
そこで，本研究では，ハイパーグラフ書き換え言語；HyperLMNtal を用いて，遅延評価を行う仮想機械；G-machine の実装を行った.

HyperLMNtal はグラフ書き換え言語であり計算モデルであるLMNtalの拡張である．
ただし，これは実装上の拡張であり，厳密な意味論が存在していなかった.
LMNtal の意味論上の特徴は局所的な書き換えに基づく細粒度の並行性だが,
HyperLMNtal のハイパーリンクは
局所性が判別できないため，そのままではLMNtal の意味論に自然に組み込むことはできなかった.
% そこで，LMNtal 同様，
% 並行計算に基づいた計算モデルである
% \(\pi\)計算を参考に，
そこで，まず，スコープ(link creation) を導入し，
ハイパーリンクに局所性を定義することで，意味論を厳密に定めた.
これによって，HyperLMNtal は，基本計算モデルから拡張された単なるプログラミング言語であるだけではなく，それ自体が，厳密で形式的な定義に基づいた並行計算モデルとなった．

G-machine は Haskell などのプログラミング言語の処理系の実装の基盤となっている遅延評価を実行する仮想機械である.
G-machine は，サブグラフの共有が行えるような，木に限らない，より一般のグラフであるヒープを必要とする.
そこで， HyperLMNtal を用いて，G-machine，及び関数型言語から G-machine の実行用コードを得るコンパイラを作成した.
結果として，G-machine は 404 行，コンパイラは 570 行で実装でき，グラフ書き換え言語で複雑なデータ構造を扱う言語処理系の実装が非常に簡潔にできることがわかった.
また，HyperLMNtal のビジュアル化機能を用いて，G-machine の挙動を可視化できた.

\end{CJK}

% Table of contents
\tableofcontents % 目次出力
\listoffigures   % 図目次
%\listoftables   % 表目次
\newpage

% Body
\mainmatter

\chapter{Introduction}
\label{cha:intro}
\par

\section{Background of the research}
\subsection{Implementation of language processing system}
Language processing systems generally allocate/discard memory with complex reference relationships, including circular and indirect references.
Thus, their implementation is often not trivial.
Here, the allocated memory and the references can be abstracted to the labeled vertices and edges of a graph.
That is, language processing systems can be regarded to be dealing a graph throughout their execution steps.
To implement a compiler, functional languages are often used.
However, functional languages cannot handle general graphs other than trees%
\footnote{
We can \emph{emulate} graph using a tree with keys.
However, we must pay at least \(\mathcal{O}(\log N)\) to traverse an edge.
% to obtain the adjacent vertex from a vertex.
In LMNtal (a graph rewriting language) on the other hand, we can achieve this in \(\mathcal{O}(1)\).
}.
And there exists a cost in time and space, to achieve a closure (e.g. we need a garbage collector).
Thus, it is not very suitable  especially to implement a runtime environment.
On the other hand, \emph{graph rewriting language}\cite{handbookgraph} is a programming language or a calculation model that can handle graph intuitively, safely and efficiently.
Therefore, the implementation of a language processing system can be highly expected as an application field of graph rewriting language.
To show this, in this research, we implemented the \emph{G-machine}\cite{g-machine}, the virtual machine for lazy evaluation, in hypergraph rewriting language, \emph{HyperLMNtal}\cite{hyperlmntal}.

\subsection{Hypergraph rewriting language: HyperLMNtal}

HyperLMNtal is a extension of the graph rewriting language/calculus model LMNtal\cite{logiclmntal}.
A graph in LMNtal is consisted of atoms, labeled vertices/data, possibly connected by links, the edges that connects precisely two endpoints.
Even though achieving to let graph to be its 1st class citizen, LMNtal is pointer safe;
we won't face any null or dangling pointers when we are using LMNtal. 
In short, we can handle graph safely and easily using LMNtal.

However, since links in LMNtal are only allowed to have exactly two endpoints, it is
rather hard to implement a shared data whose number of references changes dynamically.
Although it is possible, it was not very suitable from both the viewpoint of programming and the efficiency of the implementation.
On the other hand, HyperLMNtal allows existence of the hyperlink,
which is a link but allowed to have arbitrary number of endpoints.
Thus, we can easily implement a shared data with HyperLMNtal.

\section{Previous relevant researches}
This section gives brief summary of the previous relevant researches.

\subsection{Translating HIRAM into GP2}
Detlef Plump has shown that the simple imperative language, Hi-Level Random Access Machine (from now on, we abbreviate this as HIRAM), can be translated into the rule-based graph rewriting language, GP2\cite{gp2} in \cite{imp2gp2}.
The syntax of the HIRAM is shown in \Cref{table:hiram-syntax}%
\footnote{
This is mostly based on the \cite{imp2gp2} but we changed some notations
}.

Where \(a\) denotes an address (the index of the register) (\(a \in \mathbb{N} \cup \{0\}\)), \(\mathit{num}\) denotes a integer numeral (\(\mathit{num} \in \mathbb{Z}\)) and the \(\sigma\) (\(s\) in the original paper) denotes the states of the registers as follows:

\[
\begin{Bmatrix}
  \begin{array}{ccl}
  0 & \mapsto & \mathit{list}_0\\
  &\vdots \\
  n - 1 & \mapsto & \mathit{list}_{n - 1}\\
  n     & \mapsto & \mathit{list}_n\\
  n + 1 & \mapsto & \mathtt{empty}\\
  n + 2 & \mapsto & \mathtt{empty}\\
  n + 3 & \mapsto & \mathtt{empty}\\
  &\vdots \\
  \end{array}
\end{Bmatrix}
\]

where \(n\) is a finite number%
\footnote{
This is a simplified definition.
Please take a look at \cite{imp2gp2} for more detailed/formal definition
}.
A bit more precisely, it is a function from the address to the value (list) stored at the address.
Notice that a number is also a list of length 1 in HIRAM.
Thus \(a \subseteq \mathit{num} \subseteq \mathit{list}\) and the registers in HIRAM are also allowed to store addresses, \(a\), to perform pointer manipulations.
\(\sigma[\mathit{list}/a]\) denotes the updated state defined as follows:

\[
\sigma[\mathit{list}/a_1](a_2) \overset{def}{=}
\left\{
\begin{array}{ll}
  \mathit{list}   & \mbox{if } a_1 = a_2 \\
  \sigma(a_2) & \mbox{if } a_1 \neq a_2
\end{array}
\right.
\]

HIRAM has a loop instruction and is Turing complete. 
Moreover, it features a list, a structured data.
And in that sense, it maybe at a higher level.
However, it even lacks a function/procedure calls
and is certainly not a modern programming language but sort of an assembly language:
it is just a list of instructions for a random access machine.

\begin{figure}
  \hrulefill
  \begin{center}
    \scalebox{1.0}{
    \begin{tabular}{lrlp{0.4\textwidth}} 
      (Program)
      &\(\mathit{prog}\)
      &::= \(\mathit{prog} \vb{;} \mathit{prog}\) & Sequential composition\\
      &&\ $|$\ \(\vb{if } b \vb{ then } \mathit{prog} \vb{ else } \mathit{prog}\)
      & Branching\\
      &&\ $|$\ \(\vb{while } b \vb{ do } \mathit{prog}\)
      & Loop\\
      &&\ $|$\ \(a \vb{ := \$} \mathit{list}\)
      & \(\sigma[\mathit{list}/a]\)\\
      &&\ $|$\ \(a_1 \vb{ := } a_2\)
      & \(\sigma[\sigma(a_2)/a_1]\)\\
      &&\ $|$\ \(a_1 \vb{ := head } a_2\)
      & \(\sigma[\mathit{head}(\sigma(a_2))/a_1]\) \par
      (fails if \(\sigma(a_2) = \vb{empty}\))\\
      &&\ $|$\ \(a_1 \vb{ := tail } a_2\)
      & \(\sigma[\mathit{tail}(\sigma(a_2))/a_1]\) \par
      (fails if \(\sigma(a_2) = \vb{empty}\))\\
      &&\ $|$\ \(a_1 \vb{ := } a_2 \vb{:} a_3\)
      & \(\sigma[\sigma(a_2) \vb{:} \sigma(a_3)/a_1]\)\\
      &&\ $|$\ \(a_1 \vb{ := *} a_2\)
      & \(\sigma[\sigma(\sigma(a_2))/a_1]\)\\
      &&\ $|$\ \(\vb{*} a_1 \vb{ := } a_2\)
      & \(\sigma[\sigma(a_2)/\sigma(a_1)]\)\\
      &&\ $|$\ \(a_1 \vb{ := inc } a_2\)
      & \(\sigma[\sigma(a_2) + 1/a_1]\)\\
      &&\ $|$\ \(a_1 \vb{ := dec } a_2\)
      & \(\sigma[\sigma(a_2) - 1/a_1]\)\\
      \\
      (Condition)
      &\(b\)
      &::= \(a_1 \vb{ = } a_2\) &
      True if \(\sigma(a_1) = \sigma(a_2)\);\par
      false otherwise\\
      &&\ $|$\ \(a_1 \vb{ > } a_2\)
      & True if \(\sigma(a_1), \sigma(a_2) \in \mathbb{Z}\) and \par
      \(\sigma(a_1) > \sigma(a_2)\); \par
      false otherwise\\
      \\
      (list)
      &\(\mathit{list}\)
      &::= \vb{empty} & Empty list\\
      &&\ $|$\ \(\mathit{num}\) & Integers are lists of length 1\\
      &&\ $|$\ \(\mathit{list} \vb{:} \mathit{list}\) & Concatenation\\
    \end{tabular}
      }          
  \end{center}
  \hrulefill
  \caption{Syntax of HIRAM programs}
  \label{table:hiram-syntax}
\end{figure}

\subsection{Implementing a stack machine and a compiler in LMNtal}

Kokubo has implemented a stack machine and a compiler in graph rewriting language, LMNtal\cite{lmnstack}.
The syntax of the source language in this research is given in \Cref{table:lmnstack-syntax}%
\footnote{
Again, we have changed some notations.
Since the syntax in the original paper seems to be using some notations incorrectly.
}.
Now, the source language has function definitions/calls:
it is certainly a programming language.
However, it lacks the structured data (e.g. list).
And moreover, it has not accomplished the \emph{higher order function}.
Functions in the language is not the 1st class citizen.
That is, we cannot apply nor return any function.

\begin{figure}
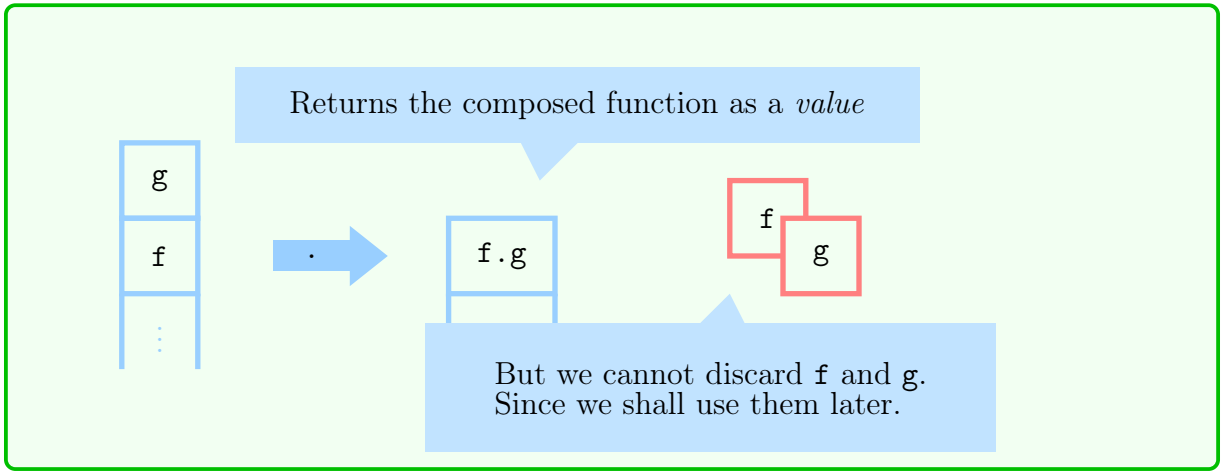

  \hrulefill
  \begin{center}
    \scalebox{0.85}{
    \begin{tabular}{llll} 
      (Program)
      \hfill\(\mathit{prog}\)
      &::= \(\mathit{def}_1 \vb{;} \ldots \vb{;} \mathit{def}_n \vb{;} \mathit{expr}\)
      & \(n \geq 0\)\\
      \\
      (Function Definition)
      \hfill\(\mathit{def}\)
      &::= \(\vb{def }\mathit{var} \vb{(} \mathit{var}_1, \ldots, \mathit{var}_n \vb{) = } \mathit{expr}
      \) & \(n \geq 0\)\\
      \\
      (Expression)
      \hfill\(\mathit{expr}\)
      &::= \(\mathit{num}\) && Integer numeral\\
      &\ $|$\ \(\mathit{var}\) && Variable \\
      &\ $|$\ \(\mathit{var} \vb{(} \mathit{expr}_1, \ldots, \mathit{expr}_n \vb{)}\)
      & \(n \geq 0\) & Function call\\
      &\ $|$\ \(\vb{var }\mathit{var} \vb{ = }\mathit{expr} \vb{ in }\mathit{expr}\) && Local definition\\
      &\ $|$ \(\mathit{expr}\ \mathit{binop}\ \mathit{expr}\)
      && Infix binary application \\
      &\ $|$ \(\mathtt{if\ } \mathit{expr}
      \mathtt{\ then\ } \mathit{expr} \mathtt{\ else\ } \mathit{expr}\)
      && If expression \\
      \\
      (Binary Operators) \hfill $\mathit{binop}$ & ::=
      \(\mathit{arithop}\ |\ \mathit{relop}\)
      \\
      \hfill $\mathit{arithop}$ & ::= \(
      \textnormal{\texttt{ + }}
      | \textnormal{\texttt{ - }}
      | \textnormal{\texttt{ * }}
      | \textnormal{\texttt{ / }}
      \)
      && Arithmetic \\
      \hfill $\mathit{relop}$ & ::= \(
      \textnormal{\texttt{ = }}
      | \textnormal{\texttt{ < }}
      | \textnormal{\texttt{ > }}
      \)
      && Comparison \\
    \end{tabular}
    }
  \end{center}
  \hrulefill
  \caption{Syntax of the language implemented in \cite{lmnstack}}
  \label{table:lmnstack-syntax}
\end{figure}

In short, the source languages in these previous researches are not functional language.

\subsection{Graph reduction}

Functional languages feature a \emph{higher order function} and thus they are certainly powerful than the imperative languages.
However, because we have to accomplish this, we cannot just use 1 simple stack to implement them.
Rather, we need a \emph{graph}.
This problem is known as the \emph{funarg problem}\cite{funarg-lisp}.

\begin{example}{Funarg problem}{}
  We can discard the argument of the function
  if the function has returned its value
  in the stack of the (call by value) imperative language.

  \begin{tikzpicture}[scale=1]
    \useasboundingbox (-1,0) rectangle(5, 5.5);
    
    \resource{0}{3}{2};
    \resource{0}{2}{1};
    \draw [line width = 2pt, color = myblue!50] (0, 2) -- (0, 1);
    \draw [line width = 2pt, color = myblue!50] (1, 2) -- (1, 1);        
    \node[] at (0.5, 1.5){\color{myblue!50}\(\vdots\)};
    
    \RightArrow{2}{2.5};
    \node[] at (2.5, 2.5){\texttt +};
    
    \resource{4.5}{2}{3};
    \draw [line width = 2pt, color = myblue!50] (4.5, 2) -- (4.5, 1);
    \draw [line width = 2pt, color = myblue!50] (5.5, 2) -- (5.5, 1);        
    \node[] at (5, 1.5){\color{myblue!50}\(\vdots\)};
    
    \draw [line width = 2pt, color = red!50, dashed]  
    (8,2) rectangle (9,3);
    \node[] at (8.5, 2.5){\color{black!40}\texttt 1};
    
    \filldraw [line width = 2pt, color = red!50, fill = green!5!white, dashed]  
    (8.7, 1.5) rectangle (9.7, 2.5);
    \node[] at (9.2, 2){\color{black!40}\texttt 2};
    
    \fill[ color = myblue!30 ] (5, 4) rectangle (11, 5);
    \fill[ color = myblue!30 ] 
    (8, 4.5)--(8.5, 3.5)--(9.5, 4.5)--cycle;
    \node[] at (8, 4.5){ 1 and 2 can be discarded };
    
  \end{tikzpicture}

  On the other hand, higher order functions may use the applied values
  even after it returned the value (including function).

  \begin{tikzpicture}[scale=1]
    \renewcommand{\baselinestretch}{0.8}
    \useasboundingbox (-1,0) rectangle(5, 5.5);

    \resource{0}{3}{g};
    \resource{0}{2}{f};
    \draw [line width = 2pt, color = myblue!50] (0, 2) -- (0, 1);
    \draw [line width = 2pt, color = myblue!50] (1, 2) -- (1, 1);        
    \node[] at (0.5, 1.5){\color{myblue!50}\(\vdots\)};
    
    \RightArrow{2}{2.5};
    \node[] at (2.5, 2.5){\texttt .};
    
    \draw [line width = 2pt, color = myblue!50]  
    (4.3, 2) rectangle (5.7, 3);
    \node[] at (5, 2.5){\texttt{f.g}};
    \draw [line width = 2pt, color = myblue!50] (4.3, 2) -- (4.3, 1);
    \draw [line width = 2pt, color = myblue!50] (5.7, 2) -- (5.7, 1);        
    \node[] at (5, 1.5){\color{myblue!50}\(\vdots\)};
    
    \draw [line width = 2pt, color = red!50]  
    (8, 2.5) rectangle (9, 3.5);
    \node[] at (8.5, 3){\texttt f};
    
    \filldraw [line width = 2pt, color = red!50, fill = green!5!white]  
    (8.7, 2) rectangle (9.7, 3);
    \node[] at (9.2, 2.5){\texttt g};
    
    \fill[ color = myblue!30 ] (1.5, 4) rectangle (10.5, 5);
    \fill[ color = myblue!30 ] 
    (5, 4.5)--(5.5, 3.5)--(6.5, 4.5)--cycle;
    \node[] at (6, 4.5){
      Returns the composed function as a \emph{value}
    };
    
    \fill[ color = myblue!30 ] (4, -0.1) rectangle (11.5, 1.6);
    \fill[ color = myblue!30 ] 
    (7, 1)--(8, 2)--(8.5, 1)--cycle;        
    \node[align=left] at (7.75, 0.75){
      But we cannot discard \texttt{f} and \texttt{g}.\\
      Since we shall use them later.
    };
    
  \end{tikzpicture}
\end{example}

A strict (call by value) functional language can possibly be implemented with just trees, although it would be a very na\"ive implementation.
However, especially for the lazy (call by need) functional language, we need a more general graph more than trees so that the evaluator can share subgraphs (subexpressions).
This, \emph{graph reduction}/\emph{graph rewriting}, is a very popular research area and there is a lot of studies as \cite{ongraphrewriting}.
However, there was no attempt to implement a compiler and a runtime environment for a (lazy) functional language in a graph rewriting language.
Therefore, in this research, we implemented G-machine\cite{g-machine}, the virtual machine for lazy evaluation, in hypergraph rewriting language, HyperLMNtal\cite{hyperlmntal}.

\section{Contributions}
The contributions of this research are basically the following 2:
\begin{itemize}
\item
  We formalized the syntax and the semantics of HyperLMNtal
  %1点目は，HyperLMNtal 意味論の形式化である.
\item
  We implemented G-machine and a compiler for it in HyperLMNtal
\end{itemize}

\subsection{Formalizing the syntax and the semantics of HyperLMNtal }
HyperLMNtal is extended from the graph rewriting language/calculus model LMNtal.
However, it lacked the rigid definition: it was more an extension of the implementation than on the calculus model.
The semantics of LMNtal features fine-grained concurrency based on local rewriting.
However, since we can not determine the locality of the hyperlink in HyperLMNtal, we couldn't incorporate it into the LMNtal semantics.
Thus, we first introduced a scope (link creation) adopted from the \(\pi\)-calculus\cite{pi} and defined the locality of a hyperlink to formalize the syntax/semantics.
Now, HyperLMNtal is not just a programming language extended from the basic calculation model, but also a concurrent calculation model based on strict and formal definitions.

% HyperLMNtal は実装上の拡張であり，厳密な意味論が存在していなかった.
% LMNtal の意味論上の特徴は局所的な書き換えに基づく細粒度の並行性だが,
% HyperLMNtal のハイパーリンクは
% 端点に制限がないことから，
% 局所性が判別できないため，そのままではLMNtal の意味論に自然に組み込むことはできなかった.
% そこで，LMNtal 同様，
% 並行計算に基づいた計算モデルである
% \(\pi\)計算\cite{pi}のチャネルに関する構文と操作的意味論を参考に，
% スコープ(link creation) を導入し，
% ハイパーリンクに局所性を定義することで，意味論を厳密に定めた.
% 計算モデルは，自由変数がいきなり出現しないということが重要で，LMNtal では自由リンクの集合は構造合同なプロセス間で変化せず，プロセスが遷移しても変化しない.
% 新たに提案した意味論では，構造合同なプロセスの自由ハイパーリンクの集合が等しいこと，遷移によって単調に減少すること，を証明した.
% また，先述の意味論上の困難とその解決を明瞭に示すための単純化された意味論に加えて，階層グラフを扱う現実装に対応した意味論も定義した.
% 特に，HyperLMNtal に構造合同規則を自然に取り入れられることを示した意義は大きい.
% これによって，HyperLMNtal は，基本計算モデルから拡張された単なるプログラミング言語であるだけではなく，それ自体が，局所的書き換えによる細粒度の並行性を持つ，厳密で形式的な定義に基づいた並行計算モデルとなった．
% 
% 
\subsection{Implementing G-machine in HyperLMNtal}
The G-machine is a virtual machine that performs lazy evaluation, which is the basis of implementation of lazy functional programming languages such as Haskell.
The implementation of G-machine requires a heap, which is a more general graph than just a tree so as can share subgraphs.
Therefore, HyperLMNtal is ideal to implement this.

In our research, we implemented a compiler which translates the source language, the core language, into the execution code for G-machine and G-machine, in HyperLMNtal.
We have succeeded to implement the compiler in 404 lines and G-machine in 570 lines and showed that we can implement the language processing system that handles complex data structures in graph rewriting language tersely.
In addition, we achieved to visualize G-machine using the HyperLMNtal visualizer\cite{graphene}\cite{lavit}.

% G-machine\cite{g-machine}\cite{implementingfl} は Haskell などのプログラミング言語の処理系の実装の基盤となっている遅延評価を行う仮想機械である.
% G-machine は，実行の過程において，
% 実行コード（へのポインタ），アドレスのスタック，スタック及びコードのスタックであるダンプ，トップレベルの関数への参照を解消するための大域環境，に加えて，有向グラフであるヒープを必要とする.
% ここで，ヒープはサブグラフの共有が行えるような，木に限らない，より一般のグラフである必要があり，HyperLMNtal は，この実装に適していると考えた.
% そこで， HyperLMNtal を用いて，G-machine，及び関数型言語から G-machine の実行用コードを得るコンパイラを作成した.
% これらの設計は文献\cite{implementingfl}に基づいている.
% ここで，パーザは文献\cite{lmnstack}において，型推論は文献\cite{unificationhyperlambda}
% などでそれぞれLMNtal，HyperLMNtal で既に実装されているため，今回はその実装は行わなかった.
% 結果として，G-machine は 404 行，コンパイラは 578 行で実装でき，HyperLMNtal で複雑なデータ構造を扱う言語処理系の実装が非常に簡潔にできることがわかった.
% 
% また，HyperLMNtal のビジュアル化機能\cite{graphene}を用いて，G-machine の挙動を可視化できた.
% G-machine 及びその派生の実装は多数行われているものの，1 ステップづつ状態をグラフィカルに把握できるようなものは私の調べた限り存在していない．
% したがってこれも大きな成果であると考える．
% 
% これらの実装を用いて，いくつかの例題について，実際に動かしながら実行ステップ数を確認するなどし，コンパイラによる最適化などについて考察した.
% また，この実装を通じて，HyperLMNtal を用いたプログラミング及びその実行効率をより良くするための手法をいくつか考案し，提案した．
% 

\section{Structure of this paper}
\par
The structure of this paper is as the following:

\begin{description}
\item[Chapter 2]\mbox{}\\
  We introduce the hypergraph rewriting language HyperLMNtal.
\item[Chapter 3]\mbox{}\\
%  We introduce the hypergraph rewriting language HyperLMNtal and the difficulty of which we define its semantics
  We introduce the new formal syntax and semantics of HyperLMNtal.
\item[Chapter 4]\mbox{}\\
  We introduce the design of G-machine.
\item[Chapter 5]\mbox{}\\
  We introduce the G-machine we have implemented in HyperLMNtal and some examples that run on the G-machine.
\item[Chapter 6]\mbox{}\\
  We give some discussions about HyperLMNtal as a programming language.
\item[Chapter 7]\mbox{}\\
  We give a conclusion of this paper and discuss the future task.
\end{description}

\chapter{HyperLMNtal: An introduction}
\label{cha:lmntal}
\par
% In this chapter, I will briefly introduce the graph rewriting language LMNtal\cite{lmntal2004} \cite{lmntal2008} and then give an explanation of the current processing system SLIM\cite{slim}.

HyperLMNtal\cite{hyperlmntal} is a hypergraph rewriting language, which is 
an extension of the graph rewriting language LMNtal.
In this chapter, we first briefly introduce the core syntax and the semantics of
the graph rewriting language LMNtal\cite{lmntal2004, lmntal2008}
in \Cref{sec:lmntal-syntax} and \Cref{sec:lmntal-op-sem} respectively and
its extension run on the current processing system SLIM\cite{slim}
in \Cref{sec:lmntal-ext}.
And then explain the current implementation of HyperLMNtal in \Cref{sec:hyperlmntal-old}.

% and then give an explanation of the current processing system SLIM\cite{slim}.
% HyperLMNtal \cite{hyperlmntal} is an extension of LMNtal.
% HyperLMNtal allows the existence of the \emph{hyper links}, which can be connected to arbitrary number of ports (endpoints).
% For example, they can be used to represent a connection that the number of endpoints changes dynamically through the calculation process.
% Though the implementation is rather simple and the concept can be easily understood,
% the semantics is not that easy.
% 
% I will firstly explain briefly about the current implementation of HyperLMNtal.

%  The one outstanding feature of LMNtal is that its ability to (easily) deal with (sort of) graph, which is a more generalized concept rather than \emph{tree}; the structure that \(\lambda\) calculus or \(\pi\) calculus deal with.

\section{Syntax of LMNtal}
\label{sec:lmntal-syntax}

LMNtal is comprised of two kinds of identifiers:
\emph{Link names}, denoted by \(X\), which can be identifiers starting with capital letters in the concrete syntax
and \emph{atom names}, denoted by \(p\), which can be identifiers that are distinct with link names in the concrete syntax
(i.e. identifiers starting with lower letters, numbers, special symbols, etc).
The only reserved atom name is ``\texttt{=}'', which is the name for a \emph{connector} of links. 

The syntax of LMNtal is given in \Cref{table:lmntal-syntax}.
Intuitively, a \emph{process} is the LMNtal program it self.
Surprisingly there is no such thing as \emph{function/procedure calls}:
the calculation process proceeds with the \emph{matching} of a sub-graph (process) with the \emph{process templates} on left-hand side (LHS/Head) of a \emph{rule} 
and pushing the sub-graph based on process templates (and some substitution rules \(\theta\) described in \cite{lmntal2004}) on right-hand side (RHS/Body) of the matched rule.
The curly braces describe the grouping of the processes, which can compose hierarchies.
However, we can think of a \emph{Flat LMNtal} that does not feature this by omitting those with daggers(\mydagger).
\emph{Process context} and \emph{rule context} represent the rest process/rules of a membrane.
\emph{Residual} is the free links of a process context.

\begin{figure}
  \hrulefill
  \begin{center}
    \begin{tabular}{rlclll} 
      (Process) & $P$ &$::=$& $\zero$    && Null \\
      &&$|$& \(p (X_1, \ldots, X_m)\) & \(m \geq 0\) & Atom \\
      &&$|$& \((P, P)\)               && Molecule \\
      &&$|$& \(\{ P \}\)              && Cell \mydagger \\
      &&$|$& \((T \means T)\)         && Rule
      \vspace{1em}\\                                                     
      (Process Template) & $T$ &$::=$& $\zero$ && Null \\
      &&$|$& \(p (X_1, \ldots, X_m)\) & \(m \geq 0\) & Atom \\
      &&$|$& \((T, T)\)               && Molecule \\
      &&$|$& \(\{ T \}\)              && Cell \mydagger  \\
      &&$|$& \((T \means T)\)         && Rule \\
      &&$|$& \(\rulevar p\)           && Rule context \mydagger \\
      &&$|$& \(\procvar p\,[X_1, \ldots, X_m|A]\)
      & \(m \geq 0\) & Process context \mydagger 
      \vspace{1em}\\
      (Residual) & $A$ &$::=$& \([\,]\) && Empty \mydagger \\
      &&$|$& \(\fvstar X\)      && Bundle \mydagger   
    \end{tabular}
  \end{center}
  \hrulefill
  \caption{Syntax of LMNtal}
  \label{table:lmntal-syntax}
\end{figure}

% There are some syntactic conditions that a LMNtal process must satisfy.

Link names outside rules should occur at most twice: the endpoints of a link should be not more than two.
If a link name occurs once in a process, then it represents a \emph{free link} of the process.
And if a link name occurs twice in a process, then it represents a \emph{local link} of the process. As like the other calculus models, local links can be \(\alpha\)-converted: the name of the links convey no \emph{data} at all.
As above, we can distinguish local and free links (c.f. local variable and free variable in other calculus models such as \(\lambda\) calculus or \(\pi\) calculus) just by counting the number of the occurrences of link names without some notations like \(\lambda\) or \(\nu\). 

Link names in left/right-hand sides of a rule must
occur twice on left/right-hand side
or occur once on each of the hands sides of a rule. 
That is, the free links of the left and the right-hand sides of a rule must be the same. Thus rewriting of the process won't delete/yield free links.
This is also a notable point; if we had a link that has exactly two endpoints, then the link should always store the property, and there is no chance that a \emph{null}/\emph{dangling} pointer (link) could occur.

\subsection{Abbreviations}

\begin{enumerate}
\item
  \texttt{()} (the parenthesis of the nullary atom) and \texttt{[\,]} can be omitted.
\item
  \(p(s_1, \ldots, s_m), q(t_1, \ldots, t_n)\)
  can be abbreviated as \(q(t_1, \ldots, p(s_1, \ldots, s_{m-1}), \ldots, t_n))\)
  if the $s_m$ and the $t_i$ has the same link name.
\end{enumerate}

Also, we define the operator precedence as ``\(.\)'' \(<\) ``\(\means\)'' \(<\) ``\(,\)'' where period is just an another form of a comma but with lower connectivity.
We can omit parentheses if there is no syntactic ambiguities. 

\section{Operational Semantics of LMNtal}
\label{sec:lmntal-op-sem}

We first describe the structural congruence $(\equiv)$ rules of a process
(i.e. the definition of the equivalence of the process) and the reduction relation $(\longrightarrow)$ (i.e. the calculation step).

\subsection{Structural congruence}

The relation $\equiv$ on processes is defined as the minimal equivalence relation
which satisfies the rules in \Cref{table:lmntal-equiv},
where
$P[Y / X]$ denotes a \textit{link substitution}: substitution of $X$ with $Y$ in \(P\).

(E1)--(E3) let molecules to be multisets.
(E4) is a rule for the $\alpha$-conversion of local link names.
Though we have to choose the new link name \(Y\), we don't have to worry about \emph{variable capturing} in a scope with \(\lambda\) or \(\nu\).
Again, this gives a notable simplicity to LMNtal rather than other calculus models.
(E5)--(E6) makes $\equiv$ a congruence; the smaller parts should always be equivalent if the comprised processes are equivalent.
(E7)--(E10) are the rules for connectors.

We prove that (E8) is admissible, in some sense, it is redundant, in \Cref{th:symmetry-of-connector}.

\begin{figure}
  \hrulefill
  \begin{center}
    \begin{tabular}{ lll }
      (E1) & \(\zero, P \equiv P\)        & \\
      (E2) & \(P, Q \equiv Q, P\)           & \\
      (E3) & \(P, (Q, R) \equiv (P, Q), R\) & \\
      (E4) & \(P \equiv P[Y/X]\)            & if $X$ if a local link of $P$ \\
      (E5) & \(P \equiv P' \Rightarrow P, Q \equiv P', Q\) & \\
      (E6) & \(P \equiv P' \Rightarrow \{P\} \equiv \{P'\}\) &\\
      (E7) & \(X = X \equiv \zero \)        & \\
      (E8) & \(X = Y \equiv Y = X \)        & \\
      (E9) & \(X = Y, P \equiv P[Y/X]\)     &
      if $P$ is an atom and $X$ occurs free in $P$ \\
      (E10)& \(\{X = Y, P\} \equiv X = Y, \{P\}\) &
      if exactly one of $X$ and $Y$ occurs free in $P$ \\
    \end{tabular}
  \end{center}
  \hrulefill
  \caption{Structural congruence on LMNtal processes}
  \label{table:lmntal-equiv}
\end{figure}

\begin{theorem}{Symmetry of $=$}{symmetry-of-connector}
  \[X = Y \equiv Y = X\]
\end{theorem}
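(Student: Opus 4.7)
The plan is to introduce a fresh auxiliary link name $Z$ and route both $X = Y$ and $Y = X$ through a common intermediate process, exploiting that (E9) can be applied in two different ways to the same process.

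Concretely, I would pick $Z$ distinct from both $X$ and $Y$ and consider the process $Z = X, Z = Y$, which is a well-formed LMNtal process since $Z$ occurs exactly twice (so is local) while $X$ and $Y$ each occur once (so remain free). The key observation is that this process is symmetric enough in $X$ and $Y$ that two different applications of (E9) collapse it to $X = Y$ and $Y = X$ respectively. One application treats the connector $Z = X$ as the eliminable equation and $P = Z = Y$ as the atom in which $Z$ is substituted by $X$, yielding $(Z = Y)[X/Z] = X = Y$; the other application, after reordering via (E2), treats $Z = Y$ as the eliminable equation and $P = Z = X$ as the atom, yielding $(Z = X)[Y/Z] = Y = X$. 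Chaining these equivalences by transitivity and symmetry of $\equiv$ then gives $X = Y \equiv Z = X, Z = Y \equiv Z = Y, Z = X \equiv Y = X$.

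The main subtlety is choosing the right auxiliary structure. A chain-like configuration such as $X = Z, Z = Y$ only collapses back to $X = Y$: since (E9) is asymmetric between the first and second variable of a connector, only the connector $Z = Y$ (whose first variable $Z$ appears in the other atom $X = Z$) can be eliminated, and it always produces $X = Y$. The ``fork'' $Z = X, Z = Y$ works precisely because $Z$ appears as the first variable of \emph{both} connectors, so either one can play the role of the eliminable equation while the other plays the role of the atom $P$; this is exactly the flexibility needed to derive both $X = Y$ and $Y = X$ from a common intermediate without presupposing (E8).
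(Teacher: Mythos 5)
Your proposal is correct and follows essentially the same route as the paper's own proof: both introduce the fork $Z = X, Z = Y$ and apply (E9) in two ways (one after reordering with (E2)) to obtain $X = Y$ and $Y = X$ from the same intermediate process. Your additional remark on why the fork works where a chain $X = Z, Z = Y$ would not is a sound observation, though not part of the paper's argument.
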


\begin{proof}{}{}
  \begin{align*}
    & Z = X, Z = Y
    \\
    \equivby{E9}
    & X = Y & \because (Z = Y)[X/Z] = (X = Y)
  \end{align*}
  and
  \begin{align*}
    & Z = X, Z = Y
    \\
    \equivby{E2}
    & Z = Y, Z = X
    \\
    \equivby{E9}
    & Y = X & \because (Z = X)[Y/Z] = (Y = X)
  \end{align*}
  
  Therefore, \(X = Y \equiv Y = X\).

\end{proof}

\subsection{Reduction relation}

The reduction relation $\longrightarrow$ (calculation step) on processes are defined as the minimal relation which satisfies the rules given in \Cref{table:lmntal-trans}.

(R1) states that the reductions can proceed in a local process (besides the other processes together compromise a molecule).
This characterize LMNtal as a \emph{concurrent} language.
(R2) states that the reductions in a cell (membrane) can proceed by it own, even if there is no help from outer processes.
(R3) introduce the structural congruence described before. A matching of a process and a LHS of a rule will be done with changing the process (not LHS of a rule) using the congruence rules.
(R4) and (R5) are used when extracting/entering connectors from/in membranes.
(R6) is the most important rule. 
The detailed explanation of the meaning of the \(\theta\) is given in \cite{lmntal2004} and We won't explain this fully:
in short, it describes the matching of process/rule contexts; the matching of the rest process/rules.

\begin{figure}
  \hrulefill
  \begin{center}
    \renewcommand{\arraystretch}{1.2}
    \begin{tabular}{ lll } 
      (R1) & \(\dfrac{P \longrightarrow P'}{P, Q \longrightarrow  P', Q}\) & 
      \vspace{1em} \\
      (R2) & \(\dfrac{P \longrightarrow P'}{\{P\} \longrightarrow  \{P'\}} \) & 
      \vspace{1em} \\
      (R3) &
      \( \dfrac{Q \equiv P \hspace{1em} P \longrightarrow P' \hspace{1em} P' \equiv Q'}{Q \longrightarrow Q'} \) &
      \vspace{0.5em} \\
      (R4) & \( \{X = Y, P\} \longrightarrow X = Y, \{P\} \) &
      if $X$ and $Y$ occur free in \(\{X = Y, P\}\) \\
      (R5) & \(X = Y, \{P\} \longrightarrow \{X = Y, P\}\) &
      if $X$ and $Y$ occur free in $P$ \\
      (R6) & \(T\theta, (T \vdash U) \longrightarrow U\theta, (T \vdash U)\) & \\
    \end{tabular}
  \end{center}
  \hrulefill
  \caption{Reduction relation on LMNtal processes}
  \label{table:lmntal-trans}
\end{figure}

\section{Extensions of LMNtal}
\label{sec:lmntal-ext}

As like the other programming languages originated from calculus models (e.g. LISP is based on \(\lambda\)-calculus), LMNtal has some extensions for practical programming.

Rules can be denoted with their names as ``\(\mathit{rulename}\ @@\ H \means B\)''.
Rules can have \emph{guards} as ``\(\mathit{Head} \means \mathit{Guard} \verb+|+ \mathit{Body}\)''.
The features of guard are for 
(\romannumeral 1) type checking (as shown in \Cref{table:lmntal-types})
(\romannumeral 2) comparing two processes (as shown in \Cref{table:lmntal-comparison})
and
(\romannumeral 3) performing some primitive instructions (as shown in \Cref{table:lmntal-primitive-operators}).
Typed process can be copied and removed.

\begin{figure}
  \hrulefill
  \begin{center}
    \renewcommand{\arraystretch}{1.2}
    \begin{tabular}{ lp{0.7\textwidth} } 
      \texttt{ground(\$p)}
      & checks that a process \texttt{\$p} constituting a connected sub-graph
      that has one or more free links \\
      \texttt{unary(\$p)}
      & checks that a process \texttt{\$p} is an atom with one link \\
      \texttt{int(\$p)}
      & checks that a process \texttt{\$p} is an atom with one link
      whose name is an integer \\
      \texttt{string(\$p)}
      & checks that a process \texttt{\$p} is an atom with one link
      whose name is an string \\
    \end{tabular}
  \end{center}
  \hrulefill
  \caption{Primitive types in LMNtal}
  \label{table:lmntal-types}
\end{figure}

\begin{figure}
  \hrulefill
  \begin{center}
    \begin{tabular}{ l l } 
      \verb|$p + $q| & addition \\
      \verb|$p - $q| & subtraction \\
      \verb|$p * $q| & multiplication \\
      \verb|$p / $q| & division \\
    \end{tabular}
  \end{center}
  \hrulefill
  \caption{Primitive operators in LMNtal}
  \label{table:lmntal-primitive-operators}
\end{figure}

\begin{figure}
  \hrulefill
  \begin{center}
    \begin{tabular}{ llll }      
      ground & unary & int & meanings\\
      \verb|$p = $q| & \verb|$p == $q| & \verb|$p =:= $q| 
      & The structures of \verb|$p| and \verb|$q| are the same \\
      \verb|$p \= $q| & \verb|$p \== $q| & \verb|$p =\= $q|
      & The structures of \verb|$p| and \verb|$q| are not the same \\ 
       - & - & \verb|$p < $q|
      & \verb|$p| is less than \verb|$q| \\
       - & - & \verb|$p > $q|
      & \verb|$p| is greater than \verb|$q| \\ 
       - & - & \verb|$p =< $q|
      & \verb|$p| is less than or equals to \verb|$q| \\
       - & - & \verb|$p >= $q|
      & \verb|$p| is greater than or equals to \verb|$q| \\ 
   \end{tabular}
  \end{center}
  \hrulefill
  \caption{Comparison operators in LMNtal}
  \label{table:lmntal-comparison}
\end{figure}

\section{The implementation of HyperLMNtal}
\label{sec:hyperlmntal-old}
\par
HyperLMNtal\cite{hyperlmntal} is an extension of LMNtal.
HyperLMNtal allows the existence of the \emph{hyper links}, which can be connected to arbitrary number of ports (endpoints).
For example, they can be used to represent a connection that the number of endpoints changes dynamically through the calculation process.

% We will briefly explain the current implementation of HyperLMNtal.

\subsection{The syntax and the manipulation of hyperlinks in the current implementation}
\label{subsec:hyperlmntal-manipulation}

We briefly explain the newly added syntax and the manipulations for hyperlinks.

Hyperlinks are expressed as ``\texttt{\$x}'' (i.e. an extension of the process context) or as ``\(!X\)''.

\texttt{new} guard construct creates a new hyperlink with a fresh local id.
In the following example, the new hyperlink \texttt{\$x} is created by \texttt{new}.
\[
H \means \mathtt{ new(\$x)}\ |\ B
\]
Using this, all hyperlinks should have distinct names so that they are distinguishable.
Current compiler implementation allows users to write a new hyperlink with exclamation mark on RHS without explicitly writing \texttt{new}. At that situation, compiler will automatically add it.

\texttt{hlink} guard construct checks whether the given link is a hyperlink.
\[
H \means \mathtt{hlink(\$x)}\ |\ B
\]

\texttt{><} on RHS of a rule is called a \emph{fusion}, which perform the merge of the given two hyperlinks.
\[
H \means \mathtt{!X ><\ !Y}, B
\]

\texttt{num} guard construct binds the number of the endpoints of the given hyperlink at the first argument to the second argument.
\[
H \means \mathtt{num(!X, \$n)}\ |\ B
\]

Hyperlinks on LHS with the same name should match the hyperlinks with the same id.
Notice the hyperlinks on LHS with different names can also match the same hyperlinks
(i.e. non-injective matching).
For example,
\lstinputlisting[language=lmntal]{example/non-injective-hyperlinks.lmn}
will be reduced to \texttt{c} and the remaining rules.

Our current implementation cannot handle hyperlinks in the initial state.
Therefore Hyperlinks must not appear outside of rules.
Also, our current implementation does not allow to connect hyperlinks with a connector (=).

\chapter{Formalizing HyperLMNtal}
\label{cha:formalize-hlmntal}
\par

We have implemented HyperLMNtal in spite of the fact that the hyperlinks can be (inaccurately) modeled with membranes
for 
(\romannumeral 1) the efficiency and
(i.e. using membrane costs too much when we just want to represent a multi-connected link)
(\romannumeral 2) the convenience for the programmer
(i.e. hyperlinks are much easier to read/write rather than membranes).
Therefore, in the past, we have not focused on its semantics that much:
HyperLMNtal was a more an extension of the implementation rather than a calculus model.

Though the implementation of HyperLMNtal is rather simple and the concept can be easily understood,
we found that the semantics cannot be defined easily.
We will firstly briefly discuss the difficulty in the semantics in \Cref{sec:background-of-hyperlmntal-sem}.
And then introduce the new semantics which we propose in \Cref{sec:flat-hyperlmntal-syntax}, \Cref{sec:flat-hyperlmntal-op-sem} and \Cref{sec:hyperlmntal-semantics}.

\Cref{sec:flat-hyperlmntal-syntax} and \Cref{sec:flat-hyperlmntal-op-sem} describe the syntax and the semantics of Flat HyperLMNtal.
This Flat HyperLMNtal has no normal links but hyperlinks.
This is just because we want to discuss the properties of hyperlinks.
In \Cref{sec:hyperlmntal-semantics}, we extend it so that it will be able to deal with membranes, process/rule contexts, etc.

\section{Introduction to the formalization of HyperLMNtal}
\label{sec:background-of-hyperlmntal-sem}

\subsection{The difficulty in HyperLMNtal semantics}
\label{subsec:hyperlmntal-difficulty}

The links in LMNtal are only allowed to appear once or twice.
This enables us to distinguish the former to be the local link and the latter to be the free link:
the locality of the normal link can be determined by just counting the number of the appearance.
However, since hyperlinks can appear more than twice, it is difficult to determine whether the given link is a local link or a free link in a process.

If we cannot ensure the locality of the link, we cannot ensure the connector (fusion) has ended the link substitution by looking the local process.

\begin{example}{The difficulty in defining fusion}{}
  For example,
  \[ (a(!X), (!X \bowtie\ !Y, b(!X, !Y))) \]
  should be congruent with
  \[ (a(!X), b(!X, !X)) \]
  but
  \[ (a(!X), b(!Y, !Y)) \]

  Therefore, we cannot just define the structural congruence rule as
  \[!X \bowtie\ !Y, P \equiv P[!Y/!X] \mbox{ if \(!X\) occurs free in \(P\)}\]
  
  We must ensure that there is no \(!X\) occurs outside of the \(P\).
  That is, \(!X\) should be the local link of ``\(!X \bowtie\ !Y, P\)''.
  
  In (not-hyper) LMNtal, a link appeared twice is a local link,
  therefore the condition of (E9), ``\textit{if \(X\) occurs free in \(P\)}'',
  requires the \(X\) to be a local link in ``\(X = Y, P\)''
  (\(\because\) \(X\) occurs in both \(X = Y\) and \(P\)).

  This problem would not happen if we always ``left'' the $!X \bowtie\ !Y$.
  For example, as 
  \[!X \bowtie\ !Y, P \equiv\ !X \bowtie\ !Y, P[!Y/!X] \]
  
  However, if we cannot ensure the locality of the hyperlink $!X$,
  then we cannot determine whether we can diminish the \(!X \bowtie\ !Y\) or not.

  If we cannot diminish the \(!X \bowtie\ !Y\),
  then we cannot create it neither (applying the congruence rule reversely).
  Therefore, the existence of the \(!X \bowtie\ !Y\) would let the process
  different from that does not have it,
  which is certainly not the desired behavior based on the current implementation.
\end{example}

This force the semantics to always care about the whole program,
which spoils the concurrency based on local reducibility
and makes it hard to introduce the congruence rules
since the congruence of a process relies on the congruence of its constitutions.
That is, we must be able to determine the congruence on smaller processes.

The current implementation does not just look for a local process through the computation process,
it perform rewriting of the graphs \emph{globally},
thus this did not draw so much interest before;
the semantics was regarded as something obvious according to our implementation.
However, strict definition is necessary especially when we want to
(\romannumeral 1) introduce a new idea (e.g. capability typing) and
(\romannumeral 2) alter/extend the calculus model for the desired demands
(e.g. program transformation for more efficient implementation, the extension on \emph{hlground}, altering the calculus model to deal with directed hyper-graphs).
% \footnote{These are irrelevant with my work (implementation of G-Machine).
% A more better (relevant) reasons should come here}.

\subsection{Brief overview of the proposed semantics of HyperLMNtal}
Since we need to distinguish whether the hyperlink in a given process is a local link (occurs only in the process) or a free link (may occur outside of the process),
we introduce a \emph{name restriction} mechanism
as ``\(\lambda X.E\)'' (\emph{abstraction}) in \(\lambda\) calculus and ``\((\nu X) P\)'' (\emph{channel creation}) in \(\pi\) calculus.
% We reconstructed the proposed HyperLMNtal semantics based on \(\pi\) calculus since it is more similar.

In the proposed Flat HyperLMNtal semantics,
\(\nu X.P\) ensures that the hyperlink \(X\) (in the proposed abstract syntax and semantics of Flat HyperLMNtal, the names of hyperlinks are denoted with \(X\) rather than \(!X\) and we shall call them just ``link'' for the simplicity) is a local link in the process \(P\).

Then the correspondence of the (E9) of the LMNtal,
\[
X = Y, P \equiv P[Y/X]
\ \mbox{
  \begin{minipage}{0.5\textwidth}
  if \(X\) occurs free in \(P\) and \(P\) is an atom\\
  \emph{And \(X\) is a local link in \(X = Y, P\)}
  \end{minipage}
}
\]
 is 
\[
\nu X.(X \bowtie Y, P) \equiv \nu X.P[Y/X]
\ \mbox{ where \(X\) or \(Y\) occurs free in \(P\)}
\]

From the next section, we explain this formalized syntax and semantics in more detail.

\section{Syntax of Flat HyperLMNtal}
\label{sec:flat-hyperlmntal-syntax}

\subsection{Processes}
As well as LMNtal, HyperLMNtal is comprised of two kinds of identifiers:
\begin{itemize}
\item 
  $X$ denotes a link name. 
  % In the concrete syntax, link names are denoted by identifiers starting with capital letters. 
  
\item 
  $p$ denotes an atom name. 
  % In the concrete syntax, atom names are denoted by identifiers those are distinct from link names.
  The only reserved name is $\bowtie$.
\end{itemize}

The syntax is given in \Cref{table:flat-hyperlmntal-syntax}. 

\begin{figure}
  \hrulefill
  \begin{center}
    \begin{tabular}{ lrclll } 
      (Process) & $P$&$::=$& \(\zero\) && Null \\
      &&$|$& \(p (X_1, \ldots ,X_m)\) & \(m \geq 0\) & Atom \\
      &&$|$& \((P, P)\) && Molecule \\
      &&$|$& \(\nu X.P\) && Link creation \\
      &&$|$& \((P \vdash P)\) && Rule \\
    \end{tabular}
  \end{center}
  \hrulefill
  \caption{Syntax of Flat HyperLMNtal}
  \label{table:flat-hyperlmntal-syntax}
\end{figure}

An atom \(X \bowtie Y\) is called a (link) \emph{fusion}.

The set of the free link names in a process $P$ is denoted as \fn{P} and is defined inductively in \Cref{table:free-names}.

\begin{figure}
  \hrulefill
  \[
  \begin{aligned}
      \mathit{fn}(\mathbf{0})          &= \emptyset \\
      \mathit{fn}(p(X_1, \ldots ,X_m)) &= \bigcup_{i=1}^m \{X_i\} \\
      \mathit{fn}((P, Q))              &= \mathit{fn}(P) \cup \mathit{fn}(Q) \\
      \mathit{fn}(\nu X.P)             &= \mathit{fn}(P) \setminus \{X\} \\
      \mathit{fn}((P \vdash Q))        &= \emptyset
  \end{aligned}
  \]
  \hrulefill
  \caption{A set of free link names of a Flat HyperLMNtal process}
  \label{table:free-names}
\end{figure}

\subsection{Rules}
Given a rule $(P \vdash Q)$, $P$ is called the left-hand side and $Q$ is called the right-hand side of the rule.
A rule $(P \vdash Q)$ must satisfy the following conditions.

\begin{enumerate}
\item Rules must not appear in $P$.
\item \(\mathit{fn}(P) \supseteq \mathit{fn}(Q)\).
\end{enumerate}

Intuitively, the latter condition indicates that we have to denote a \emph{new} hyperlink in a scope of a \(\nu\) (new) on RHS, which we believe follows a common sense.

\section{Operational Semantics of Flat HyperLMNtal}
\label{sec:flat-hyperlmntal-op-sem}

We first define structural congruence $(\equiv)$ and then define the reduction relation $(\longrightarrow)$ on processes.

\subsection{Structural congruence}

We define the relation $\equiv$ on processes as the minimal equivalence relation satisfying the rules shown in \Cref{table:equiv}.
Where $P[Y/X]$ is a link substitution that replaces all free occurrences of $X$ with $Y$ as defined in \Cref{table:hyperlink-substitution}.
Notice if a free occurrence of $X$ occurs in a location where $Y$ would not be free, $\alpha$-conversion may be required.
Here, we use $=$ to denote the syntactic equivalence of the links and processes.

The proposed congruence rules lack the corresponding rules for
(E4) \(P \equiv P[Y/X]\) (if $X$ occurs free in $P$) and
(E8) \(X = Y \equiv Y = X\)
in (not-hyper) LMNtal.
This is because that they can be derived from the other rules.
In other word, they are admissible.
We prove this in \Cref{th:alpha-equiv} and \Cref{th:symmetry-of-bowtie}.

Also, we prove that the sets of the free link names in congruent processes are the same in \Cref{th:freelinks-of-equiv}.

\begin{figure}
  \hrulefill
  \begin{center}
    \begin{tabular}{l}
      \( \zero[Y/X] \overset{def}{=} \zero \) \vspace{0.5em}\\ 
      \( p(X_1, \ldots, X_m)[Z/Y] \overset{def}{=} p(X_1[Z/Y], \ldots, X_m[Z/Y]) \)
      \vspace{0.3em}\\
      \hspace{11em} where \(
      X_i[Z/Y] \overset{def}{=}
      \left\{
      \begin{array}{ll}
        Z   & \mbox{if } X_i = Y \\
        X_i & \mbox{if } X_i \neq Y
      \end{array}
      \right.
      \) \vspace{0.5em}\\ 
      \((P, Q)[Y/X] \overset{def}{=} (P[Y/X], Q[Y/X])\) \vspace{0.5em}\\
      \(
      (\nu X.P)[Z/Y] \overset{def}{=}
      \left\{
      \arraycolsep = 0pt
      \begin{array}{ll}
        \nu X.P & \mbox{ if } X = Y \\
        \nu X.P[Z/Y] & \mbox{ if } X \neq Y \land X \neq Z \\
        \nu W.(P[W/X])[Z/Y] & \mbox{ if }
        X \neq Y \land X = Z
        \land W \notin \mathit{fn}(P) \land W \neq Z
      \end{array}
      \right.
      \) \vspace{0.5em}\\
      \( (P \vdash Q)[Y/X] \overset{def}{=} (P \vdash Q) \) \\
    \end{tabular}
  \end{center}
  \hrulefill
  \caption{Link substitution of Flat HyperLMNtal}
  \label{table:hyperlink-substitution}
\end{figure}

\begin{figure}
  \hrulefill
  \begin{center}
    \begin{tabular}{ ll } 
      (E1)  & \((\mathbf{0}, P) \equiv P\) \\
      (E2)  & \((P, Q) \equiv (Q, P)\) \\
      (E3)  & \((P, (Q, R)) \equiv ((P, Q), R)\) \\
%      (E4) & \(\nu X.P \equiv \nu Y.P[Y/X]\) \\
%      & where \(Y \notin \mathit{fn}(Q)\) \\
      (E4)  & \(P \equiv P' \Rightarrow (P, Q) \equiv (P', Q)\) \\
      (E5)  & \(P \equiv Q \Rightarrow \nu X.P \equiv \nu X.Q\) \\
      (E6)  & \(\nu X.(X \bowtie Y, P) \equiv \nu X.P[Y / X]\)\\
      & where \(X \in \mathit{fn}(P) \lor Y \in \mathit{fn}(P)\) \\
      (E7)  & \(\nu X.\nu Y.X \bowtie Y \equiv \zero\) \\
      (E8)  & \(\nu X.\zero \equiv \zero\)\\
      (E9)  & \(\nu X.\nu Y.P \equiv \nu Y.\nu X.P\)\\
      (E10) & \(\nu X.(P,Q) \equiv (\nu X.P,Q)\)\\
      & where \(X \notin \mathit{fn}(Q)\) \\
    \end{tabular}
  \end{center}
  \hrulefill
  \caption{Structural congruence on Flat HyperLMNtal processes}
  \label{table:equiv}
\end{figure}

\begin{lemma}{Absorption of a futile link creation}{absorb-link-creation}
  \[\nu X.P \equiv P \mbox{ where } X \notin \mathit{fn}(P)\]
\end{lemma}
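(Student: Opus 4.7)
The plan is to show the result by a short chain of congruences, building up a trivial scope of $\nu X$ that we can then eliminate using the rules already given. Since $X$ does not occur free in $P$ by hypothesis, we have plenty of room to move $\nu X$ past $P$ under (E10), and we can use (E8) to kill off an empty scope.

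First I would rewrite $P$ as $(\zero, P)$ using (E1), then move under $\nu X$ via (E5) to obtain
\[
\nu X.P \;\equiv\; \nu X.(\zero, P).
\]
Next, because $X \notin \mathit{fn}(P)$, the side condition of (E10) is satisfied with the roles of the two components being $\zero$ and $P$, so
\[
\nu X.(\zero, P) \;\equiv\; (\nu X.\zero, \, P).
\]
Then (E8) collapses the first component, giving $\nu X.\zero \equiv \zero$, and (E4) lifts this to
\[
(\nu X.\zero, \, P) \;\equiv\; (\zero, P).
\]
Finally, (E1) yields $(\zero, P) \equiv P$, and composing these four steps (using transitivity of $\equiv$) gives $\nu X.P \equiv P$, as required.

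The argument is essentially routine; there is no deep obstacle. The only point that demands a small amount of attention is the application of (E10): one must verify that the side condition $X \notin \mathit{fn}(Q)$ holds for the component $Q = P$, which is immediate from the hypothesis. Symmetry of $\equiv$ is also used implicitly when applying (E1) in the ``backward'' direction to introduce $\zero$, but this is automatic since $\equiv$ is an equivalence relation.
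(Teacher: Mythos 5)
Your proof is correct and follows exactly the same chain as the paper's own argument: introduce $\zero$ with (E1) under the scope via (E5), pull the scope onto $\zero$ with (E10) using $X \notin \mathit{fn}(P)$, collapse $\nu X.\zero$ with (E8) lifted by (E4), and finish with (E1). Nothing to add.
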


\begin{proof}{}{}
  \[\begin{array}{llr}
  & \nu X.P &
  \\
  \equiv_{\mbox{\scriptsize E5}}
  & \nu X.(\zero, P)
  & \mbox{where } P \equiv_{\scriptsize \mbox{E1}} (\zero, P)
  \\
  \equiv_{\mbox{\scriptsize E10}}
  & (\nu X.\zero, P)
  & \mbox{where } X \notin \mathit{fn}(P)
  \\
  \equiv_{\mbox{\scriptsize E4}}
  & (\zero, P)
  & \mbox{where } \nu X.\zero \equiv_{\mbox{\scriptsize E8}} \zero
  \\
  \equiv_{\mbox{\scriptsize E1}} & P
  \end{array}\]    
\end{proof}

\begin{lemma}{Absorption of a futile link substitution}{futile-link-substitution}
  $P[X/X]$ is syntactically equivalent with $P$
\end{lemma}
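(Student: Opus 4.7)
The plan is to proceed by structural induction on $P$, following the case analysis in the definition of link substitution given in the preceding figure. The base cases, molecule case, and rule case are essentially immediate from the defining equations; the only case that deserves attention is $P = \nu Y.Q$.

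First I would dispose of the easy cases. For $P = \zero$ and $P = (Q \vdash R)$, the defining equations of substitution directly give $P[X/X] = P$. For an atom $p(X_1, \ldots, X_m)$, a one-line subcase analysis on each argument shows $X_i[X/X] = X_i$ (either $X_i = X$, in which case the result is $X$, or $X_i \neq X$, in which case it is left untouched), so the whole atom is unchanged. For a molecule $(Q, R)$, the definition pushes substitution componentwise, and two applications of the induction hypothesis finish the case.

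The main (and essentially only nontrivial) case is $P = \nu Y.Q$. Here I would apply the three-clause definition of $(\nu Y.Q)[X/X]$ and inspect each clause. If $Y = X$, the first clause applies and yields $\nu X.Q$ unchanged, with no recursive call. If $Y \neq X$, then since the replacement name and the substituted-for name are both $X$, the side condition $Y \neq X$ (bound name distinct from replacement) is satisfied, so the second clause applies and yields $\nu Y.Q[X/X]$, which equals $\nu Y.Q$ by the induction hypothesis. The third clause, which requires simultaneously $Y \neq X$ and $Y = X$, is vacuous in this setting because the substituted-for name and the replacement name coincide; so no $\alpha$-renaming ever needs to be performed.

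The main (and only) subtlety is confirming that the third clause of the $\nu$-case cannot fire, which is where one might otherwise have feared that substitution would silently rename the bound hyperlink and break literal syntactic equality. Once this observation is made, the induction is routine, and the overall argument is short.
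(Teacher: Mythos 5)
Your proposal is correct and follows essentially the same route as the paper's proof: structural induction on $P$, with the only point of substance being that in the $\nu Y.Q$ case the third clause of the substitution definition would require $Y \neq X \land Y = X$ and therefore can never fire, so no $\alpha$-renaming disturbs syntactic equality. The paper makes exactly this observation, so nothing further is needed.
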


\begin{proof}{}{}
  We prove this by structural induction on processes.

  \begin{itemize}[label=\(\lozenge\), itemsep=2ex]
  \item \emph{Case \zero} :
    \begin{quote}
      \(\zero[X/X] \overset{def}{=} \zero\)
    \end{quote}
  \item \emph{Case \(p(X_1, \ldots, X_m)\)} :
    \begin{quote}
      \( p(X_1, \ldots, X_m)[X/X] \overset{def}{=} p(X_1[X/X], \ldots, X_m[X/X])
      = p(X_1, \ldots, X_m)\) 

      \hspace{1em} since \( X_i[X/X] = X_i \) where \(
      X_i[X/X] \overset{def}{=}
      \left\{
      \begin{array}{ll}
        X   & \mbox{if } X_i = X \\
        X_i & \mbox{if } X_i \neq X
      \end{array}
      \right.
      \)
    \end{quote}
  \item \emph{Case \((P, Q)\)} :
    \begin{quote}
      We have \(P[X/X] = P\) and \(Q[X/X] = Q\) by induction hypothesis. 

      Therefore,
      \((P, Q)[X/X] \overset{def}{=} (P[X/X], Q[X/X]) = (P, Q)\)
    \end{quote}
  \item \emph{Case \(\nu Y.P\)} :
    \begin{quote}
      \(
      (\nu Y.P)[X/X] \overset{def}{=}
      \left\{
      \begin{array}{ll}
        \nu Y.P & \mbox{if } Y = X \vspace{1em}\\
        \nu Y.P[X/X] & \mbox{if } Y \neq X \\
        = \nu Y.P & \because \mbox{
          \begin{minipage}[t]{15em}
            \(P[X/X] = P\) \\ 
            by induction hypothesis
          \end{minipage}
          }
      \end{array}
      \right.
      \)
      
      Since \(Y \neq X \land Y = X\) could never happen,
      there is no chance for ``variable capturing'' and
      $\alpha$-conversion for its avoidance
      (the third option of the link substitution scheme for the link creation),
      which possibly makes the process not syntactically equivalent
      (\(\alpha\)-equivalent though), won't happen.
    \end{quote}
  \item \emph{Case \((P \vdash Q)\)} :
    \begin{quote}
      \( (P \vdash Q)[X/X] \overset{def}{=} (P \vdash Q)\)
    \end{quote}
  \end{itemize}
\end{proof}

\begin{theorem}{$\alpha$-equivalence}{alpha-equiv}
  \[\nu X.P \equiv \nu Y.P[Y / X] \mbox{ where } Y \notin \mathit{fn}(P)\]
\end{theorem}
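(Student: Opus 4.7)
The plan is to bridge both sides of the congruence through a common process $B := \nu X.\nu Y.(X \bowtie Y, P)$. Two degenerate cases dispose of most of the work: when $X = Y$, \Cref{lem:futile-link-substitution} gives $P[Y/X] = P$ and the goal is reflexive; when $X \notin \mathit{fn}(P)$, both sides are congruent to $P$ by \Cref{lem:absorb-link-creation}, modulo the small observation that $P[Y/X] \equiv P$ here (subsumed by the induction below). The substantive case is $X \neq Y$ with $X \in \mathit{fn}(P)$.

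First I would show $B \equiv \nu Y.P[Y/X]$ by swapping the outer binders with (E9) to obtain $\nu Y.\nu X.(X \bowtie Y, P)$, applying (E6) to the inner $\nu X$ (the side condition $X \in \mathit{fn}(P) \lor Y \in \mathit{fn}(P)$ holds by hypothesis) to consume the fusion and reach $\nu Y.\nu X.P[Y/X]$, and then observing that $X \notin \mathit{fn}(P[Y/X])$, since every free $X$ in $P$ has been replaced, so \Cref{lem:absorb-link-creation} eliminates the inner $\nu X$.

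Next I would show $B \equiv \nu X.P$. Because (E6) only substitutes the left-hand argument of $\bowtie$, I first rewrite $X \bowtie Y$ to $Y \bowtie X$ using the symmetry of $\bowtie$ (\Cref{th:symmetry-of-bowtie}, which can be proved independently of the present theorem via an auxiliary fresh binder in the pattern $\nu Z.(Z \bowtie X, Z \bowtie Y)$ together with (E6) and \Cref{lem:absorb-link-creation}), lifted through the molecule and the outer $\nu$ by (E4) and (E5); then (E6) applied to the inner $\nu Y$ yields $\nu X.\nu Y.P[X/Y]$. Since $Y \notin \mathit{fn}(P)$ the substitution should act as the identity on $P$, so \Cref{lem:absorb-link-creation} removes the vacuous $\nu Y$ and leaves $\nu X.P$. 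Transitivity of $\equiv$ then gives the claim.

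The main obstacle is the innocuous-looking clause "$P[X/Y] \equiv P$ when $Y \notin \mathit{fn}(P)$": this is \emph{not} syntactic equality in general, because the substitution clause for $\nu X.Q$ triggers an $\alpha$-renaming precisely when the bound variable coincides with the substitution target. Rather than try to prove this as a standalone lemma (which would itself need $\alpha$-equivalence), I would organize the whole argument as a structural induction on $P$: for each such sub-process $\nu X.Q$ the induction hypothesis supplies the $\alpha$-equivalence that the substitution has enacted, while (E10) and (E9) reduce the molecule and link-creation cases of the outer induction to the atom and $\zero$ base cases, where the bridge argument above applies directly.
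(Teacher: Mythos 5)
Your proposal is correct and rests on the same core idea as the paper's proof: manufacture a bridge process consisting of a $\nu$-bound fusion sitting next to $P$, then apply (E6) in two different ways (together with (E9) and \Cref{lem:absorb-link-creation}) to reach $\nu X.P$ on one side and $\nu Y.P[Y/X]$ on the other. The differences are worth noting, though. The paper's bridge is $\nu X.\nu Y.(Y \bowtie X, (X \bowtie Y, P))$ with \emph{two} fusion atoms, one oriented each way, precisely so that (E6) can consume whichever fusion has the bound name on its left; this keeps the proof independent of \Cref{th:symmetry-of-bowtie}, which the paper only establishes afterwards. Your single-fusion bridge $\nu X.\nu Y.(X \bowtie Y, P)$ instead invokes the symmetry of $\bowtie$, and you correctly observe that its proof (via $\nu Z.(Z \bowtie X, Z \bowtie Y)$) does not depend on $\alpha$-equivalence, so there is no circularity --- only a reordering of results. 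The second difference is to your credit: the paper's step justified by ``$P[X/Y] = P$ since $Y \notin \mathit{fn}(P)$'' silently asserts a \emph{syntactic} identity that \Cref{lem:futile-link-substitution} does not cover and that can fail when $P$ contains a binder $\nu X.Q$, since the third clause of the substitution scheme then forces an $\alpha$-renaming; you spot this and propose a structural induction on $P$ so that the needed identity holds up to $\equiv$ by the induction hypothesis. That refinement is left as a sketch (the molecule and link-creation cases via (E10)/(E9) would need to be written out, and the induction measure should be the size of $P$ since $Q[W/X]$ is not literally a subterm), but it addresses a genuine imprecision in the paper rather than introducing a new gap.
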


\begin{proof}{}{}
  We prove this using \Cref{lem:absorb-link-creation} and \Cref{lem:futile-link-substitution}.
  \begin{itemize}[label=\(\lozenge\), itemsep = 2ex]
  \item \emph{Case \(X \in \mathit{fn}(P)\)} :
    \begin{quote}
      \(\begin{array}{ll}
      \nu X. \nu Y. (Y \bowtie X, (X \bowtie Y, P)) &
      \\
      \equivby{E5, E6}
      \nu X. \nu Y. (X \bowtie X, P)
      & \because P[X/Y] = P \mbox{ since } Y \notin \mathit{fn}(P)
      \\
      \equivby{E5, \Cref{lem:absorb-link-creation}}
      \nu X.(X \bowtie X, P)
      & \because Y \notin \mathit{fn}((X \bowtie X, P))
      \\
      \equivby{E6}
      \nu X.P
      & \because P[X/X] = P \mbox{ by \Cref{lem:futile-link-substitution}}
      \\
      \end{array}\)

      and

      \(\begin{array}{ll}
      \nu X. \nu Y. (Y \bowtie X, (X \bowtie Y, P)) &
      \\
      \equivby{E2, E3, E5, E9}
      \nu Y. \nu X. (X \bowtie Y, (Y \bowtie X, P)) &
      \\
      \equivby{E5, E6}
      \nu Y. \nu X. (Y \bowtie Y, P[Y/X]) & 
      \\
      \equivby{E5, \Cref{lem:absorb-link-creation}}
      \nu Y. (Y \bowtie Y, P[Y/X])
      & \because X \notin \mathit{fn}((Y \bowtie Y, P[Y/X]))
      \\
      \equivby{E6}
      \nu Y.P[Y/X]
      & \because (P[Y/X])[Y/Y] = P[Y/X] \\
      & \ \ \mbox{ by \Cref{lem:futile-link-substitution}}
      \end{array}\)
    \end{quote}
  \item \emph{Case \(X \notin \mathit{fn}(P)\)} :
    \begin{quote}
      In this case, we first forcibly include free link \(X\)
      in order to exploit the former proof. 

      \(\begin{array}{ll}
      \nu X. P
      \\
      \equivby{\Cref{lem:absorb-link-creation}}
      P
      \\
      \equivby{E1}
      (\zero, P)
      \\
      \equivby{E4, E7}
      (\nu X. \nu X. X \bowtie X, P)
      \\
      \equivby{E4, \Cref{lem:absorb-link-creation}}
      (\nu X. X \bowtie X, P)
      & \because X \notin \mathit{fn}(\nu X. X \bowtie X)
      \\
      \equivby{E10}
      \nu X. (X \bowtie X, P)
      & \because X \notin \mathit{fn}(P)
      \\
      \equivby{The former proof}
      \nu Y. (X \bowtie X, P)[Y/X]
      & \because X \in \mathit{fn}((X \bowtie X, P))
      \\
      =
      \nu Y. (Y \bowtie Y, P[Y/X])
      \\
      \equivby{E6}
      \nu Y. (P[Y/X])[Y/Y]
      & \because Y \in \mathit{fn}((Y \bowtie Y, P[Y/X]))
      \\
      \equivby{\Cref{lem:futile-link-substitution}}
      \nu Y. P[Y/X]
      \end{array}\)
    \end{quote}
  \end{itemize}
  
  Therefore, \(\nu X.P \equiv \nu Y.P[Y/X]\) if \(Y \notin \mathit{fn}(P)\).

\end{proof}

\begin{theorem}{Symmetry of $\bowtie$}{symmetry-of-bowtie}
  \[X \bowtie Y \equiv Y \bowtie X\]
\end{theorem}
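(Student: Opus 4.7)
The plan is to mirror the LMNtal proof of the analogous \Cref{th:symmetry-of-connector} by introducing a fresh intermediate link and exploiting a symmetric witness. Concretely, I would pick $Z$ fresh with $Z \neq X$ and $Z \neq Y$, and work with the symmetric term $\nu Z.(Z \bowtie X, Z \bowtie Y)$, which can be reduced in two different ways to both $X \bowtie Y$ and $Y \bowtie X$. The reason we must wrap $Z$ in a $\nu$ (unlike the LMNtal proof) is that (E6) only fires on a link whose locality is guaranteed by a surrounding binder.

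First, I would apply (E6) with bound name $Z$, taking the erased fusion to be $Z \bowtie X$ and the remainder to be $P := Z \bowtie Y$. The side condition $Z \in \mathit{fn}(P)$ holds trivially since $Z \in \mathit{fn}(Z \bowtie Y)$, so (E6) gives
\[
\nu Z.(Z \bowtie X,\, Z \bowtie Y) \;\equiv\; \nu Z.(Z \bowtie Y)[X/Z] \;=\; \nu Z.(X \bowtie Y).
\]
Since $Z \notin \mathit{fn}(X \bowtie Y)$, \Cref{lem:absorb-link-creation} then strips the binder, yielding $\nu Z.(X \bowtie Y) \equiv X \bowtie Y$.

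Second, I would perform the mirror derivation on the same intermediate. Using (E2) and (E5) I swap the two parallel fusions inside the $\nu Z$ binder to obtain $\nu Z.(Z \bowtie Y,\, Z \bowtie X)$, apply (E6) once more with the roles reversed (now $P := Z \bowtie X$, whose side condition $Z \in \mathit{fn}(Z \bowtie X)$ again holds), and strip the outer binder by \Cref{lem:absorb-link-creation}. This chain yields $\nu Z.(Z \bowtie X,\, Z \bowtie Y) \equiv Y \bowtie X$. By transitivity of $\equiv$, the two chains compose to $X \bowtie Y \equiv Y \bowtie X$.

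I do not anticipate any real obstacle. The only subtleties are choosing $Z$ fresh (so both invocations of \Cref{lem:absorb-link-creation} apply) and verifying the side condition of (E6) at each application, both of which are immediate because the bound $Z$ appears as an endpoint of the surviving fusion in each case. No $\alpha$-conversion is required beyond the initial freshness choice, so the full proof will be a short sequence of rewrites rather than a case analysis.
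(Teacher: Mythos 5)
Your proposal is correct and matches the paper's own proof essentially step for step: the paper likewise introduces the witness $\nu Z.(Z \bowtie X, Z \bowtie Y)$, reduces it via (E6) and \Cref{lem:absorb-link-creation} to $X \bowtie Y$, then swaps the fusions with (E2)/(E5) and repeats to reach $Y \bowtie X$. The side-condition checks and freshness remarks you add are exactly the (implicit) justifications the paper relies on.
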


\begin{proof}{}{}
  \begin{quote}
    \(\begin{array}{ll}
    \nu Z. (Z \bowtie X, Z \bowtie Y) &
    \\
    \equivby{E6}
    \nu Z. (X \bowtie Y)
    & \because (Z \bowtie Y)[X/Z] = X \bowtie Y
    \\
    \equivby{\Cref{lem:absorb-link-creation}}
    X \bowtie Y
    & 
    \\
    \end{array}\)
  \end{quote}
  and
  \begin{quote}
    \(
    \begin{array}{ll}
      \nu Z. (Z \bowtie X, Z \bowtie Y) &
      \\
      \equivby{E2, E5}
      \nu Z. (Z \bowtie Y, Z \bowtie X) &
      \\
      \equivby{E6}
      \nu Z. (Y \bowtie X)
      & \because (Z \bowtie X)[Y/Z] = Y \bowtie X
      \\
      \equivby{\Cref{lem:absorb-link-creation}}
      Y \bowtie X
      & 
      \\
    \end{array}
    \)
  \end{quote}
  
  Therefore, \(X \bowtie Y \equiv Y \bowtie X\).

\end{proof}

\begin{theorem}{The sets of the free links of congruent processes}{freelinks-of-equiv}
  \[\mathit{fn}(P) = \mathit{fn}(Q) \mbox{ if } P \equiv Q\]
\end{theorem}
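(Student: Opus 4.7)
The plan is to prove the claim by rule induction on the derivation of $P \equiv Q$, viewing $\equiv$ as the least relation generated by reflexivity, symmetry, transitivity, and the clauses (E1)--(E10) in \Cref{table:equiv}. Reflexivity, symmetry, and transitivity are trivial: in each the goal reduces to equality of sets that we have either already assumed or can chain by an inductive hypothesis. So the substance is checking that every generating rule preserves $\mathit{fn}$, using only the clauses of the inductive definition in \Cref{table:free-names}.

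For the easy rules I would just unfold $\mathit{fn}$ on both sides and observe set-theoretic equalities. (E1)--(E3) follow from $\emptyset$ being the unit of $\cup$ and from the commutativity and associativity of $\cup$. (E4) and (E5) are the two congruence clauses and here I apply the induction hypothesis: if $\mathit{fn}(P) = \mathit{fn}(P')$ then $\mathit{fn}(P) \cup \mathit{fn}(Q) = \mathit{fn}(P') \cup \mathit{fn}(Q)$ and $\mathit{fn}(P) \setminus \{X\} = \mathit{fn}(P') \setminus \{X\}$. (E7) gives $(\{X,Y\} \setminus \{Y\}) \setminus \{X\} = \emptyset = \mathit{fn}(\zero)$; (E8) is $\emptyset \setminus \{X\} = \emptyset$; (E9) is the symmetry of iterated set difference $(S \setminus \{Y\}) \setminus \{X\} = (S \setminus \{X\}) \setminus \{Y\}$; and (E10) uses the side condition $X \notin \mathit{fn}(Q)$ to factor $(\mathit{fn}(P) \cup \mathit{fn}(Q)) \setminus \{X\} = (\mathit{fn}(P) \setminus \{X\}) \cup \mathit{fn}(Q)$.

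The main obstacle is (E6), $\nu X.(X \bowtie Y, P) \equiv \nu X.P[Y/X]$, because link substitution interacts nontrivially with $\mathit{fn}$ and because of the capture-avoiding $\alpha$-renaming clause in \Cref{table:hyperlink-substitution}. I would first prove a support lemma by structural induction on $P$: if $X \in \mathit{fn}(P)$ then $\mathit{fn}(P[Y/X]) = (\mathit{fn}(P) \setminus \{X\}) \cup \{Y\}$, and otherwise $\mathit{fn}(P[Y/X]) = \mathit{fn}(P)$. The delicate case in this lemma is $(\nu Z.P')[Y/X]$ when $Z = Y$, where the substitution rewrites to $\nu W.(P'[W/Z])[Y/X]$ for some fresh $W$; because $W \notin \mathit{fn}(P')$, the inner substitution $P'[W/Z]$ preserves all free names other than swapping $Z$ for $W$ (by the induction hypothesis), and stripping $\nu W$ removes $W$ again, so $\mathit{fn}$ is preserved exactly as the lemma requires.

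With that lemma in hand, the (E6) case is a case split on whether $X = Y$, $X \in \mathit{fn}(P)$ with $X \neq Y$, or $X \notin \mathit{fn}(P)$ combined with $Y \in \mathit{fn}(P)$ (the side condition of (E6) rules out the remaining possibility). In each subcase, the left-hand side computes to $(\{X,Y\} \cup \mathit{fn}(P)) \setminus \{X\}$ and, via the lemma, the right-hand side computes to $\mathit{fn}(P[Y/X]) \setminus \{X\}$, and routine set arithmetic shows the two coincide; in particular the side condition $X \in \mathit{fn}(P) \lor Y \in \mathit{fn}(P)$ is exactly what prevents $Y$ from spuriously appearing on the left but not on the right when $X \notin \mathit{fn}(P)$. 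Once (E6) is dispatched, the remaining rules are routine and the induction goes through.
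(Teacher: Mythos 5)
Your proposal is correct and follows essentially the same route as the paper: induction over the generating rules of $\equiv$, with the cases other than (E6) discharged by unfolding $\mathit{fn}$ and elementary set identities (plus the induction hypothesis for the two congruence clauses). The only real difference is that you explicitly state and prove the support lemma $\mathit{fn}(P[Y/X]) = (\mathit{fn}(P)\setminus\{X\})\cup\{Y\}$ (when $X \in \mathit{fn}(P)$), including the $\alpha$-renaming subcase, whereas the paper simply asserts this fact in its (E6) case --- so your version is, if anything, slightly more rigorous.
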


\begin{proof}{}{}
  We prove this by structural induction on processes (and structural congruent rules).

  \begin{itemize}[label=\(\lozenge\), itemsep=2ex]
  \item \emph{Case \((\zero, P) \equiv P\)} :
    \begin{quote}
      \(\mathit{fn}((\zero, P))
      = \mathit{fn}(\zero) \cup \mathit{fn}(P)
      = \mathit{fn}(P)\)
    \end{quote}
  \item \emph{Case \((P, Q) \equiv (Q, P)\)} :
    \begin{quote}
      \(
      \mathit{fn}((P, Q))
      = \mathit{fn}(P) \cup \mathit{fn}(Q)
      = \mathit{fn}(Q) \cup \mathit{fn}(P)
      = \mathit{fn}((Q, P)) 
      \)
    \end{quote}
  \item \emph{Case \((P, (Q, R)) \equiv ((P, Q), R)\)} :
    \begin{quote}
      \(
      \mathit{fn}((P, (Q, R)))
      = \mathit{fn}(P) \cup \mathit{fn}(Q) \cup \mathit{fn}(R)
      = \mathit{fn}(((P, Q), R))
      \)
    \end{quote}
  \item \emph{Case \(P \equiv P' \Rightarrow (P, Q) \equiv (P', Q)\)} :
    \begin{quote}
      We have \(\mathit{fn}(P) = \mathit{fn}(P')\) if \(P \equiv P'\)
      by induction hypothesis.

      Therefore,
      \(
      \mathit{fn}((P, Q))
      = \mathit{fn}(P) \cup \mathit{fn}(Q)
      = \mathit{fn}(P') \cup \mathit{fn}(Q)
      = \mathit{fn}((P', Q))
      \)
    \end{quote}
  \item \emph{Case \(P \equiv P' \Rightarrow \nu X.P \equiv \nu X.P'\)} :
    \begin{quote}
      We have \(\mathit{fn}(P) = \mathit{fn}(P')\) if \(P \equiv P'\)
      by induction hypothesis.
      
      Therefore,
      \(
      \mathit{fn}(\nu X.P)
      = \mathit{fn}(P) \setminus \{X\}
      = \mathit{fn}(P') \setminus \{X\}
      = \mathit{fn}(\nu X.P')
      \)
    \end{quote}
  \item \emph{Case \(\nu X.(X \bowtie Y, P) \equiv \nu X.P[Y / X]\)
  where \(X \in \mathit{fn}(P) \lor Y \in \mathit{fn}(P)\)} :
    \begin{quote}
      Since \(X \in \mathit{fn}(P) \lor Y \in \mathit{fn}(P)\),
      \(\mathit{fn}(P[Y/X])\) should be equivalent with
      \((\mathit{fn}(P) \setminus \{X\}) \cup \{Y\} \)
      .
      
      Therefore, \\
      \(\begin{array}{l}
      \mathit{fn}(\nu X.(X \bowtie Y, P))
      = (\mathit{fn}(P) \cup \{Y\}) \setminus \{X\} \\
      = ((\mathit{fn}(P) \setminus \{X\}) \cup \{Y\}) \setminus \{X\}
      = \mathit{fn}(P[Y/X]) \setminus \{X\} 
      = \mathit{fn}(\nu X.P[Y/X])
      \end{array}\)
    \end{quote}
  \item \emph{Case \(\nu X.\nu Y.X \bowtie Y \equiv \zero\)} :
    \begin{quote}
      \(
      \mathit{fn}(\nu X.\nu Y.X \bowtie Y)
      = \mathit{fn}(X \bowtie Y) \setminus \{X, Y\}
      = \{X, Y\} \setminus \{X, Y\}
      = \emptyset
      = \mathit{fn}(\zero)
      \)
    \end{quote}
  \item \emph{Case \(\nu X.\zero \equiv \zero\)} :
    \begin{quote}
      \(
      \mathit{fn}(\nu X.\zero)
      = \emptyset \setminus \{X\}
      = \emptyset
      = \mathit{fn}(\zero)
      \)
    \end{quote}
  \item \emph{Case \(\nu X.\nu Y.P \equiv \nu Y.\nu X.P\)} :
    \begin{quote}
      \(
      \mathit{fn}(\nu X.\nu Y.P)
      = (\mathit{fn}(P) \setminus \{Y\}) \setminus \{X\}
      = (\mathit{fn}(P) \setminus \{X\}) \setminus \{Y\}
      = \mathit{fn}(\nu Y.\nu X.P)
      \)
    \end{quote}
  \item \emph{Case \(\nu X.(P,Q) \equiv (\nu X.P,Q)\)
  where \(X \notin \mathit{fn}(Q)\)} :
    \begin{quote}
      \(\begin{array}{ll}
      \mathit{fn}(\nu X.(P,Q)) &\\
      = (\mathit{fn}(P) \cup \mathit{fn}(Q)) \setminus \{X\} &\\
      = (\mathit{fn}(P) \setminus \{X\}) \cup \mathit{fn}(Q)
      & \because X \notin \mathit{fn}(Q) \\
      = \mathit{fn}((\nu X.P,Q)) &\\
      \end{array}\)
    \end{quote}
  \end{itemize}
\end{proof}

\subsection{Reduction relation}

We define the reduction relation $\longrightarrow$ on processes as the minimal relation satisfying the rules in \Cref{table:trans}.
We proved that the free links of a process, decrease monotonously through a reduction in \Cref{th:freelinks-of-trans}.
That is, we won't face a new free link after the reduction, which is a very natural property of a calculus model.

\begin{figure}
  \hrulefill
  \begin{center}
    \begin{tabular}{ ll } 
      (R1) & \(\dfrac{P \longrightarrow P'}{(P, Q) \longrightarrow  (P', Q)} \)
      \vspace{1em}\\
      (R2) & \(\dfrac{P \longrightarrow P'}{\nu X.P \longrightarrow  \nu X.P'} \)
      \vspace{1em} \\
      (R3) & \(\dfrac{Q \equiv P \hspace{16pt} P \longrightarrow P' \hspace{16pt} P' \equiv Q'}{Q \longrightarrow Q'} \)
      \vspace{0.5em} \\
      (R4) & \( (P, (P \vdash Q)) \longrightarrow (Q, (P \vdash Q)) \) \\
    \end{tabular}
  \end{center}
  \hrulefill
  \caption{Reduction relation on Flat HyperLMNtal processes}
  \label{table:trans}
\end{figure}

\begin{example}{Non-injective matching}{}
  Can the rule 
  \[
  (p(X, Y) \vdash q(X, Y))
  \]
  rewrite an atom $p(X, X)$ ?

  More precisely, can the process 
  \[
  (p(X, X), (p(X, Y) \vdash q(X, Y)))
  \]
  reduces to something?

  The rule cannot be $\alpha$-converted to the form
  \[ (p(X, X) \vdash \ldots) \]

  However, the atom $p(X, X)$ can be converted to $\nu Y.(Y \bowtie X, p(X, Y))$ using (E7) and \Cref{lem:absorb-link-creation}.

  Therefore, it can be rewritten as
  \begin{align*}
    & (p(X, X), (p(X, Y) \vdash q(X, Y))) \\
    & \equivby{\Cref{lem:absorb-link-creation}}
    \nu Y.(p(X, X), (p(X, Y) \vdash q(X, Y)))\\
    & \equivby{E7}
    \nu Y.(Y \bowtie X, (p(X, Y), (p(X, Y) \vdash q(X, Y))))\\
    & \longrightarrow
    \nu Y.(Y \bowtie X, (q(X, Y), (p(X, Y) \vdash q(X, Y))))\\
    & \equivby{E7}
    \nu Y.(q(X, X), (p(X, Y) \vdash q(X, Y)))\\
    & \equivby{\Cref{lem:absorb-link-creation}}
    (q(X, X), (p(X, Y) \vdash q(X, Y)))
  \end{align*}

  As the above, we can match non-injective free links using congruence rule on the link fusion. 
\end{example}

\begin{theorem}{The set of the free links of a process through reduction}{freelinks-of-trans}
  \[\mathit{fn}(P) \supseteq \mathit{fn}(Q) \mbox{ if } P \longrightarrow Q\]
%  decreases monotonously 
\end{theorem}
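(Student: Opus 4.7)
The plan is to proceed by rule induction on the derivation of $P \longrightarrow Q$, handling the four reduction rules (R1)--(R4) in turn. For each inductive case, I will unfold $\mathit{fn}$ according to its defining clauses in \Cref{table:free-names} and apply the induction hypothesis where appropriate.

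For (R1), with $P \longrightarrow P'$ giving $(P, R) \longrightarrow (P', R)$, the induction hypothesis $\mathit{fn}(P) \supseteq \mathit{fn}(P')$ together with monotonicity of $\cup$ yields $\mathit{fn}(P) \cup \mathit{fn}(R) \supseteq \mathit{fn}(P') \cup \mathit{fn}(R)$, which is exactly what is needed. For (R2), the induction hypothesis combined with monotonicity of set difference ($A \supseteq B \Rightarrow A \setminus \{X\} \supseteq B \setminus \{X\}$) gives the result. For (R3), where $Q \equiv P$, $P \longrightarrow P'$, and $P' \equiv Q'$, I invoke \Cref{th:freelinks-of-equiv} to conclude $\mathit{fn}(Q) = \mathit{fn}(P)$ and $\mathit{fn}(P') = \mathit{fn}(Q')$, and chain these with the induction hypothesis $\mathit{fn}(P) \supseteq \mathit{fn}(P')$ to obtain $\mathit{fn}(Q) \supseteq \mathit{fn}(Q')$.

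The crucial case is (R4): $(P, (P \vdash Q)) \longrightarrow (Q, (P \vdash Q))$. Here I compute $\mathit{fn}((P, (P \vdash Q))) = \mathit{fn}(P) \cup \mathit{fn}((P \vdash Q)) = \mathit{fn}(P) \cup \emptyset = \mathit{fn}(P)$, where the second equality uses the clause $\mathit{fn}((P \vdash Q)) = \emptyset$ from \Cref{table:free-names}. Similarly $\mathit{fn}((Q, (P \vdash Q))) = \mathit{fn}(Q)$. Hence the goal reduces to $\mathit{fn}(P) \supseteq \mathit{fn}(Q)$, which is precisely the wellformedness condition (2) imposed on rules in \Cref{sec:flat-hyperlmntal-syntax}.

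Most cases are essentially bookkeeping about set operations, so the main conceptual dependency is twofold: the previous theorem on congruence preserving free names handles (R3), and the syntactic side condition on rules handles (R4). The only subtlety I anticipate is making explicit that the induction is on the derivation of $\longrightarrow$ rather than on process structure, since (R3) rewrites the process via $\equiv$ before invoking a sub-derivation, so a naive structural induction on $P$ would not be well-founded; rule induction on the inference of $P \longrightarrow Q$ sidesteps this cleanly.
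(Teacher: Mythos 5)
Your proposal is correct and follows essentially the same route as the paper's own proof: induction over the derivation of the reduction, with (R1) and (R2) handled by monotonicity of $\cup$ and $\setminus$, (R3) discharged via \Cref{th:freelinks-of-equiv}, and (R4) reduced to the syntactic condition $\mathit{fn}(P) \supseteq \mathit{fn}(Q)$ on rules after observing that $\mathit{fn}((P \vdash Q)) = \emptyset$. Your explicit remark that the induction must be on the derivation rather than on process structure (because of (R3)) is a point the paper glosses over, but it does not change the substance of the argument.
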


\begin{proof}{}{}
  We prove this by structural induction on processes (and reduction relations).

  \begin{itemize}[label=\(\lozenge\), itemsep=2ex]
  \item \emph{Case \(\dfrac{P \longrightarrow P'}{(P, Q) \longrightarrow  (P', Q)}\)} :
    \begin{quote}
      We have \(\mathit{fn}(P) \supseteq \mathit{fn}(P')\) if \(P \longrightarrow P'\)
      by induction hypothesis.

      Therefore,
      \(
      \mathit{fn}((P, Q))
      = \mathit{fn}(P) \cup \mathit{fn}(Q)
      \supseteq \mathit{fn}(P') \cup \mathit{fn}(Q)
      = \mathit{fn}((P', Q))
      \)
    \end{quote}
  \item \emph{
  Case \(\dfrac{P \longrightarrow P'}{\nu X.P \longrightarrow  \nu X.P'}\)} :
    \begin{quote}
      We have
      \(\mathit{fn}(P) \supseteq \mathit{fn}(P')\) where \(P \longrightarrow P'\)
      by induction hypothesis.

      Therefore,
      \(
      \mathit{fn}((P, Q))
      = \mathit{fn}(P) \setminus \{X\}
      \supseteq \mathit{fn}(P') \setminus \{X\}
      = \mathit{fn}((P', Q))
      \)
    \end{quote}
  \item \emph{Case \(\dfrac{P \longrightarrow P'}{\nu X.P \longrightarrow  \nu X.P'}\)} :
    \begin{quote}
      We have
      \(\mathit{fn}(P) \supseteq \mathit{fn}(P')\) where \(P \longrightarrow P'\)
      by induction hypothesis.

      Therefore,
      \(
      \mathit{fn}(\nu X.P)
      = \mathit{fn}(P) \setminus \{X\}
      \supseteq \mathit{fn}(P') \setminus \{X\}
      = \mathit{fn}(\nu X.P')
      \)
    \end{quote}
  \item \emph{Case \(
  \dfrac{Q \equiv P \hspace{16pt} P \longrightarrow P' \hspace{16pt} P' \equiv Q'}
        {Q \longrightarrow Q'}
        \)} :
    \begin{quote}
      We have
      \(\mathit{fn}(P) \supseteq \mathit{fn}(P')\) where \(P \longrightarrow P'\)
      by induction hypothesis and
      \(\mathit{fn}(Q) = \mathit{fn}(P)\) where \(Q \equiv P\) and
      \(\mathit{fn}(P') = \mathit{fn}(Q')\) where \(P' \equiv Q'\) 
      by \Cref{th:freelinks-of-equiv}.
      
      Therefore,
      \(
      \mathit{fn}(Q)
      = \mathit{fn}(P)
      \supseteq \mathit{fn}(P')
      = \mathit{fn}(Q')
      \)
    \end{quote}
  \item \emph{Case \((P, (P \vdash Q)) \longrightarrow (Q, (P \vdash Q))\)} :
    \begin{quote}
      We have \(\mathit{fn}(P) \supseteq \mathit{fn}(Q)\) 
      by the second syntactical condition of a rule.

      Therefore,
      \(
      \mathit{fn}((P, (P \vdash Q)))
      = \mathit{fn}(P)
      \supseteq \mathit{fn}(Q) 
      = \mathit{fn}((Q, (P \vdash Q)))
      \)
    \end{quote}
    %  \item \emph{} :
    %    \begin{quote}
    %    \end{quote}
  \end{itemize}
\end{proof}

% \section{Relation between the current implementation and the proposed syntax/semantics}
\section{HyperLMNtal semantics}
\label{sec:hyperlmntal-semantics}
% In our current implementation, The program must be only consists of rules.

In this section, we will briefly look at the extended syntax and semantics of Flat HyperLMNtal to suit our current implementation, in which introduce a hierarchy of processes and localization of rules; membranes.

\subsection{Abstract syntax and semantics of HyperLMNtal}
\Cref{table:hyperlmntal-impl-syntax} gives the syntax of HyperLMNtal based on the current implementation.
The newly added syntax is denoted with a dagger (\mydagger).
The reserved names are \(=\) and \verb|><|.
Notice the hyperlink is defined in a very similar way to an atom.
This is because that the hyperlinks are implemented as an atom with a name \verb|!| and a hidden pointer to an id (this corresponds with the name of the hyperlink in the proposed syntax and semantics) in our current system.
Therefore, hyperlinks can be connected to normal link
and a port of an atom that is connected to a hyperlink can be matched with a normal link.
For example, 
\lstinputlisting[language=lmntal]{example/hl-link-match.lmn}
will be reduced to \texttt{b(!H)} and the remaining rules in our implementation.
We allow applying the same abbreviation rules that has used in LMNtal.
Also, we can apply the abbreviation schemes for atoms to hyperlinks.
For example, \(!H_1(X), X \:\verb|><|\: Y, !H_2(Y)\) can be abbreviated as \(!H_1 \:\verb|><|\: !H_2\).

\Cref{table:hyperlmntal-impl-free-names} gives the set of the free hyperlink names of a HyperLMNtal process.
Notice that the namespaces of normal links and hyperlinks are completely different.
When we say ``$X$ occurs free in $P$'', then it is the normal link that occurs free (once) in a process $P$ and we have not mentioned about the hyperlinks in that process at all.

The syntactic conditions of both LMNtal and the proposed Flat HyperLMNtal must be satisfied: the number of the end points of a normal link should be kept in at most two, a rule \(P \means Q\) must satisfy \(\mathit{fn}(P) \supseteq \mathit{fn}(Q)\), etc.

\Cref{table:hyperlmntal-impl-equiv} shows the structural congruence on HyperLMNtal processes.
The newly added congruence relation is denoted with a dagger (\mydagger).
Here, $P[Y/X]$ is a normal link substitution which replaces a normal link $X$ with a normal link $Y$ and
\(P \angled{H_2/H_2}\) is a hyperlink substitution, which is defined in the \Cref{table:hyperlmntal-impl-hyperlink-substitution}.
Notice that \(\alpha\)-conversion might be required in hyperlink substitutions as we have already discussed in the Flat HyperLMNtal semantics.

\Cref{table:hyperlmntal-impl-trans} shows the reduction relation on HyperLMNtal process.
The newly added reduction relation is denoted with a dagger (\mydagger).
The process contexts can also match a hyperlink but a link creation.
That is, the copying of a process won't divide a hyperlink (in other words, create a new hyperlink).

\begin{figure}
  \hrulefill
  \begin{center}
    \begin{tabular}{rlclll} 
      (Process) & $P$ &$::=$& $\zero$     & & Null \\
      &&$|$& \(p (X_1, \ldots, X_m)\) & \(m \geq 0\) & Atom \\
      &&$|$& \((P, P)\)               & & Molecule \\
      &&$|$& \(\{ P \}\)              & & Cell \\
      &&$|$& \(\nu H.P\)              & & Hyperlink creation \mydagger \\  
      &&$|$& \(!H(X)\)                & & Hyperlink \mydagger \\  
      &&$|$& \((T \means T)\)         & & Rule
      \vspace{1em}\\                                                     
      (Process Template) & $T$ &$::=$& $\zero$ & & Null \\
      &&$|$& \(p (X_1, \ldots, X_m)\) & \(m \geq 0\) & Atom \\
      &&$|$& \((T, T)\)               & & Molecule \\
      &&$|$& \(\{ T \}\)              & & Cell \\
      &&$|$& \(\nu H.T\)              & & Hyperlink creation \mydagger \\ 
      &&$|$& \(!H(X)\)                & & Hyperlink \mydagger\\  
      &&$|$& \((T \means T)\)         & & Rule \\
      &&$|$& \(\rulevar p\)           & & Rule context \\
      &&$|$& \(\procvar p\,[X_1, \ldots, X_m|A]\) & \(m \geq 0\) & Process context
      \vspace{1em}\\
      (Residual) & $A$ &$::=$& \([\,]\) & & Empty \\
      &&$|$& \(\fvstar X\)      & & Bundle         
%
%       (Process) $P$ & ::= $\zero$ & & (null) \\
%       &\ $|$ \(p (X_1, \ldots, X_m)\) & \((m \geq 0)\) & (atom) \\
%       &\ $|$ \((P, P)\)               & & (molecule) \\
%       &\ $|$ \(\{ P \}\)              & & (cell) \\
%       &\ $|$ \(\nu H.P\)              & & (hyperlink creation) \mydagger \\  
%       &\ $|$ \(!H(X)\)                & & (hyperlink) \mydagger \\  
%       &\ $|$ \((T \means T)\)         & & (rule) \\
%       \vspace{1em}\\                                                     
%       (Process Template) $T$ & ::= $\zero$ & & (null) \\
%       &\ $|$ \(p (X_1, \ldots, X_m)\) & \((m \geq 0)\) & (atom) \\
%       &\ $|$ \((T, T)\)               & & (molecule) \\
%       &\ $|$ \(\{ T \}\)              & & (cell) \\
%       &\ $|$ \(\nu H.T\)              & & (hyperlink creation) \mydagger \\ 
%       &\ $|$ \(!H(X)\)                & & (hyperlink) \mydagger\\  
%       &\ $|$ \((T \means T)\)         & & (rule) \\
%       &\ $|$ \(\rulevar p\)           & & (rule context) \\
%       &\ $|$ \(\$p\ [X_1, \ldots, X_m|A]\) & \((m \geq 0)\) & (process context)
%       \vspace{1em}\\
%       (Residual) $A$ & ::= $[\,]$ & & (empty) \\
%       &\ $|$ \(\fvstar X\)      & & (bundle)         
    \end{tabular}
  \end{center}
  \hrulefill
  \caption{Syntax of HyperLMNtal based on the current implementation}
  \label{table:hyperlmntal-impl-syntax}
\end{figure}

\begin{figure}
  \hrulefill
  \[
  \begin{aligned}
    \mathit{fn}(\mathbf{0})          &= \emptyset \\
    \mathit{fn}(p(X_1, \ldots ,X_m)) &= \emptyset \\
    \mathit{fn}((P, Q))              &= \mathit{fn}(P) \cup \mathit{fn}(Q) \\
    \mathit{fn}(\{P\})               &= \mathit{fn}(P) \\
    \mathit{fn}(\nu H.P)             &= \mathit{fn}(P) \setminus \{!H\} \\
    \mathit{fn}(!H(X))               &= \{!H\} \\
    \mathit{fn}((P \vdash Q))        &= \emptyset
  \end{aligned}
  \]
  \hrulefill
  \caption{Free hyperlink names of a HyperLMNtal process based on the current implementation}
  \label{table:hyperlmntal-impl-free-names}
\end{figure}

\begin{figure}
  \hrulefill
  \begin{center}
    \begin{tabular}{l}
      \( \zero \angled{H_2/H_1}  \overset{def}{=} \zero \)
      \vspace{1em}\\ 
      \( p(X_1, \ldots, X_m) \angled{H_2/H_1}
      \overset{def}{=} p(X_1, \ldots, X_m) \)
      \vspace{1em}\\ 
      \((P, Q) \angled{H_1/H_2}
      \overset{def}{=} (P \angled{H_1/H_2} , Q \angled{H_1/H_2} )\)
      \vspace{1em}\\
      \(
      (\nu H.P) \angled{H_1/H_2}  
      \) \\
      \hspace{2em}
      \(
      \overset{def}{=}
      \left\{
      \begin{array}{ll}
        \nu H.P & \mbox{if } H = H_1 \\
        \nu H.P \angled{H_1/H_2} 
        & \mbox{if } H \neq H_1 \land H \neq H_2 \\
        \nu H_3.(P \angled{H_3/H} ) \angled{H_1/H_2} 
        & \mbox{if } \mbox{\begin{minipage}[t]{11.5em}
            \(H \neq H_1 \land H = H_2\) \\
            \(\land\ !H_3 \notin \mathit{fn}(P) \land H_3 \neq H_2\)
          \end{minipage}
        }
      \end{array}
      \right.
      \) \vspace{1em}\\
      \( !H(X) \angled{H_1/H_2} \overset{def}{=} 
      \left\{
      \begin{array}{ll}
        !H_2(X) & \mbox{if } H = H_1 \\
        !H(X)   & \mbox{if } H \neq H_1
      \end{array}
      \right.
      \) \vspace{1em}\\ 
      \( (P \vdash Q) \angled{H_1/H_2}  \overset{def}{=} (P \vdash Q) \) \\
    \end{tabular}
  \end{center}
  \hrulefill
  \caption{Hyperlink substitution of HyperLMNtal based on the current implementation}
  \label{table:hyperlmntal-impl-hyperlink-substitution}
\end{figure}

\begin{figure}
  \hrulefill
  \begin{center}
     \scalebox{0.91}{
      \begin{tabular}{lll} 
        (E1) & \(\mathbf{0}, P \equiv P\) \\
        (E2) & \(P, Q \equiv Q, P\) \\
        (E3) & \(P, (Q, R) \equiv (P, Q), R\) \\
        (E4) & \(P \equiv P[Y/X]\) 
        & if $X$ if a local link of $P$ \\
        %      (E5) \mydagger & \(\nu H_1.P \equiv \nu H_2.P \angled{H_1/H_2}\) 
        %      & where \(!H_2 \notin \mathit{fn}(P)\) \\
        (E5) & \(P \equiv P' \Rightarrow P, Q \equiv P', Q\) \\
        (E6) & \(P \equiv P' \Rightarrow \{P\} \equiv \{P'\}\) \\
        (E7) \mydagger & \(P \equiv Q \Rightarrow \nu H.P \equiv \nu H.Q\) \\
        (E8) & \(X = X \equiv \zero \)                  \\
        %       (E9) & \(X = Y \equiv Y = X \)                  \\
        (E9) & \(X = Y, P \equiv P[Y/X]\)
        & if $P$ is an atom and $X$ occurs free in $P$ \\
        (E10) & \(\{X = Y, P\} \equiv X = Y, \{P\}\)
        & if exactly one of $X$ and $Y$ occurs free in $P$ \\
        (E11) \mydagger &
        \(\nu H_1.(!H_1 \:\textnormal{\texttt{><}}\: !H_2, P) \equiv \nu H_1.P \angled{H_1/H_2}\)
        & where \(!H_1 \in \mathit{fn}(P)\ \lor\ !H_2 \in \mathit{fn}(P)\) \\
        (E12) \mydagger & \(\nu H_1.\nu H_2. (!H_1 \:\textnormal{\texttt{><}}\: !H_2) \equiv \zero\) \\
        (E13) \mydagger & \(\nu H.\zero \equiv \zero\)\\
        (E14) \mydagger & \(\nu H_1.\nu H_2.P \equiv \nu H_2.\nu H_1.P\)\\
        (E15) \mydagger & \(\nu H.\{P\} \equiv \{\nu H.P\}\) \\
        (E16) \mydagger & \(\nu H.(P, Q) \equiv (\nu H.P, Q)\)
        & where \(!H \notin \mathit{fn}(Q)\)\\
      \end{tabular}
      }
  \end{center}
  \hrulefill
  \caption{Structural congruence on HyperLMNtal processes based on the current implementation}
  \label{table:hyperlmntal-impl-equiv}
\end{figure}

\begin{figure}
  \hrulefill
  \begin{center}
    \renewcommand{\arraystretch}{1.2}
    \begin{tabular}{ l l l } 
      (R1) & \(\dfrac{P \longrightarrow P'}{P, Q \longrightarrow  P', Q}\) & 
      \vspace{1em} \\
      (R2) & \(\dfrac{P \longrightarrow P'}{\{P\} \longrightarrow  \{P'\}} \) & 
      \vspace{1em} \\
      (R3) \mydagger
      & \(\dfrac{P \longrightarrow P'}{\nu H.P \longrightarrow \nu H.P'} \) &
      \vspace{1em} \\
      (R4) &
      \( \dfrac{Q \equiv P \hspace{1em} P \longrightarrow P' \hspace{1em} P' \equiv Q'}{Q \longrightarrow Q'} \) & 
      \vspace{0.5em} \\
      (R5) & \( \{X = Y, P\} \longrightarrow X = Y, \{P\} \) &
      if $X$ and $Y$ occur free in \(\{X = Y, P\}\) \\
      (R6) & \(X = Y, \{P\} \longrightarrow \{X = Y, P\}\) &
      if $X$ and $Y$ occur free in $P$ \\
      (R7) & \(T\theta, (T \vdash U) \longrightarrow U\theta, (T \vdash U)\) & \\
    \end{tabular}
  \end{center}
  \hrulefill
  \caption{Reduction relation on HyperLMNtal processes based on the current implementation}
  \label{table:hyperlmntal-impl-trans}
\end{figure}

\subsection{Translation to the concrete syntax}

Any rule $(P \vdash Q)$ can be rewritten as 
\((\nu X_1. \cdots. \nu X_n. P' \means \nu Y_1. \cdots. \nu Y_m. Q')\)
\((n \geq 0, m \geq 0)\)
, a structurally congruent rule, where no link creation appears in $P'$ and $Q'$
and \(Y_i \notin \mathit{fn}(P')\) .

Then, rewrite it as 

\[
(P'
\means
\verb|num|(!X_1) \verb| =:= | N_1, \cdots, \verb|num|(!X_n) \verb| =:= | N_n
\verb+ | + Q'')
\]

where $N_i$ is a number of the occurrence of the hyperlink $X_i$ in $P$ (except those have normal links connected to the fusion atom) and
$Q''$ is a inductively translated concrete syntax form of $Q'$.
The current implementation does not support to write as \verb|!H(X)|.
We have to write all the hyperlinks inside of the ports of the atoms that are connected to.
For example, \verb|a(!H)| or \verb|!H = X|.

\subsection{Further observation}
Basically, implementing a hyperlink as a special atom was a great decision.
It allows both implementation and semantics relatively easy.
However, it results with a little annoying observation when we use hyperlinks with membranes.
That is, hyperlinks can be separated with a membrane.
For example,the following program 
\lstinputlisting[language=lmntal]{example/hl-and-mem.lmn}
will be reduced to \lstinline[language=lmntal]|a(X), {!H(X)}| and the forth rule \lstinline[language=lmntal]|a(!H) :- b| would not be applied since the hyperlink \verb|!H(X)| is not at the same hierarchy with the rule.

Both our current implementation and the proposed semantics support this idea.
Thus we may do not need to worry about this.
However, if we think of a hyperlink as an extension of a normal link, this is certainly a strange behavior.
Well, for the case above, we can apply the rule 
\lstinline[language=lmntal]|a(X), {!H = X} :- b|
to the program and if we do so, then the program will be reduced to
\lstinline[language=lmntal]|b|.
But still, we do not think this is a natural behavior based on our intuition.

Therefore, we may like to modify our implementation to let all the hyperlinks to be in the same hierarchy with the atoms connected to them at every transition state.
Then we should modify our semantics to allow this.
This can be done in a very simple way; just adding the following congruence rule, which allows a hyperlink to move anywhere to go along with the connected atom.

\[
\begin{array}{ll}
  (E18) & \{!H(X), P\} \equiv\ !H(X), \{P\}
\end{array}
\]

\chapter{Syntax/Semantics of G-Machine}
\label{cha:gm}
\par
G-Machine is a virtual machine for non-strict functional languages which uses \emph{graph-reduction}.
Since LMNtal is a language that has focused on graph rewriting, it is important to check how easily we can achieve this (or if not, why we cannot) .
In this chapter, we will introduce the functional language we implemented, core language, and the implemented G-Machine based on \cite{implementingfl}.

\section{General introduction to lazy evaluation and graph-reduction}
\label{sec:graph-reduction-intro}

Strict evaluation strategy requires all the arguments to be evaluated before performing beta-reduction (application).
Where non-strict doesn't require this. 
Thus, the easiest way to accomplish beta reduction in a non-strict evaluation is just replacing the variable with the corresponding applied term: Call By Name.
For example, \((\lambda x.x + x) (1 + 2)\) will be reduced to \((1 + 2) + (1 + 2)\).
However, this is not efficient since we should calculate \(1 + 2\) twice.
To improve efficiency, we can consider the reducing expressions as ``graph'' not just tree which allows sharing of expressions.
For example, the former expression can be represented as the graph in \Cref{thunk}.
Notice that the expression \(1 + 2\) (blue-colored subgraph) is shared.
By sharing expressions and evaluating once for a each, we can reduce the steps of calculation process.

\begin{figure}
  \begin{center}
    \includegraphics[width=0.2\textwidth]{./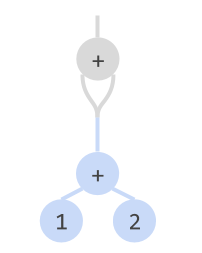}
  \end{center}
  \caption{Graph representation of an expression}
  \label{thunk}
\end{figure}

Though not-hyper Flat LMNtal can handle graph, the end points of links in LMNtal are restricted to be just 2. Therefore, it is not easy to represent sharing.
Using membrane is one solution to this problem but it is rather hard to write these kinds of programs with membranes.
Instead, HyperLMNtal is a very suitable language to deal with this.
Since hyperlinks are allowed to connect one or more ports.

\section{An introduction to G-Machine}
\label{sec:gm-intro}

The components of G-Machine is a code, stack, dump heap and a global environment
% (there are also others, as we will explain later, but these are the most important).

Heap is a multiset of nodes.
Where nodes are basically one of ``application node'', ``global (i.e. super combinator/function) node'' and ``a node for a primitive value (e.g. integer)''.
% Nodes are connected by directed edges.
Notice that the global node is completely different from the global environment.
These nodes form graph since the application node has addresses (i.e. directed edges/pointers) to its argument nodes.
Figure.~\ref{NodesOfGM} shows the example of nodes in G-Machine.
Here, the label for the application node is written as \verb|@|.
Each nodes might have or not have edges whose head is the node (the gray-colored arrow).

\begin{figure}
  \begin{center}
    \includegraphics[width=0.7\textwidth]{./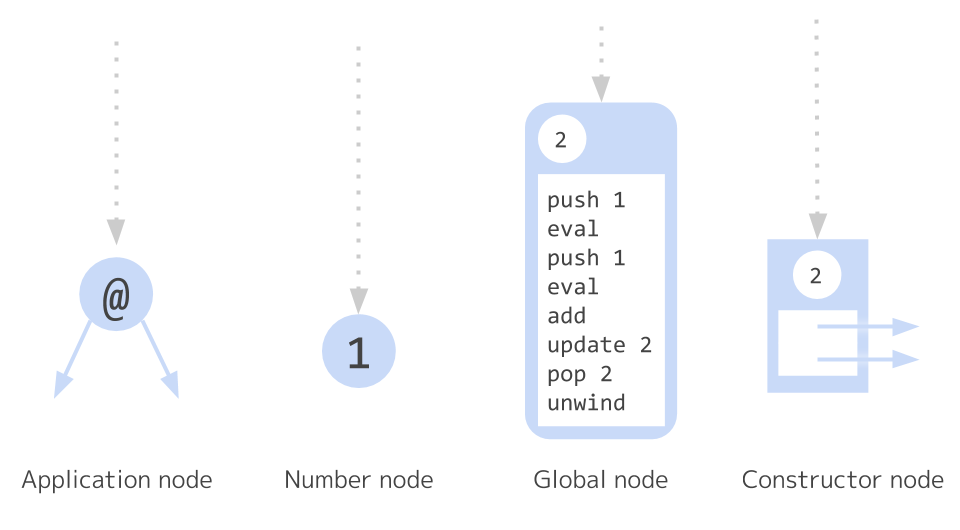}
  \end{center}
  \caption{Example of nodes in G-Machine}
  \label{NodesOfGM}
\end{figure}

Stack is a stack of an addresses to nodes of a chained applications.
The process of evaluating application is called \emph{unwinding},
which is basically done by pushing the address of the right child of application node on top of the stack.

Dump is a stack of the stack.
This is used to evaluate an expression pointed by the top of the stack until it reaches to Weak Head Normal Form.
To accomplish this, we have to store the current stack and the code to dump.
This is mainly for evaluating an expression with a primitive operator.
For example, we can not add expressions but numbers:
we have to evaluate the both arguments until they are to be numbers.

Code is a list of instructions.
The evaluation is done by following these instructions.
Codes are obtained by compiling (top-level) functions and kept in global nodes.

Global environment is a function from the name of the top-level functions to the addresses of the corresponding global nodes.

\section{Syntax of the core language}
\label{sec:gm-corelang-syntax}

The core language is an lazy functional language.
% , the language to be compiled and executed on G-Machine,
The syntax of the core language is given in \Cref{table:gm-fl-syntax}.
This is an abstract syntax.
Where \(\mathit{var}\) is a variable (identifier) and \(\mathit{num}\) is an integer.
A more concrete syntax can be obtained from \cite{implementingfl}.

A core language program is consisted of a list of functions.
We call them super-combinators.
Since these functions are allowed to have a free variable, a variable which is not a locally defined variable, in their body expressions (i.e. the right-hand side of the ``='') these are certainly not combinators and thus not super-combinators technically.
However, in the reference \cite{implementingfl}, these are noted as super-combinators, therefore we shall call them so, like wisely.

It is not designed to be written by programmers directly, but is designed to be suitable for an input of the compiler.
For example, this language lacks lambda abstraction (anonymous function), which can be achieved by ``program transformation'' technique, lambda lifting.
And user defined data types must be written as 
\lstinline'Pack{2, 2} 1 Pack{1, 0}'
instead of like
\lstinline'Cons 1 Nil'
(with a predefinition
\lstinline'data List a = Cons List a | Nil'
). 
The first argument of \verb|Pack| is the \emph{tag} of the data type.
For example, we may assign 1 for \texttt{Nil} and 2 for \texttt{Cons}.
The second argument is an arity of the data type (the number of the arguments).
In the former example, we need 0 to be the second argument of the \verb|Pack{1, 0}| since \texttt{Nil} should have no argument and 2 to be the second argument of the \verb|Pack{2, 2}| since the \texttt{Cons} should have 2 arguments.
The decomposition of these kinds of data structures in the case expression are done with the tags we assigned for each types.
For example, \lstinline'case list of Nil -> ... | Cons h t -> ...' can be done with
\lstinline'case list of <1> -> ... ; <2> h t -> ...' in the core language.

\begin{figure}
  \hrulefill
  \begin{center}
    \scalebox{0.85}{
      \begin{tabular}[t]{lrcll} 
        (Program) &$\mathit{program}$ &$::=$& 
        \(sc_1 ; \ldots ; sc_n\) & \(n \geq 1\)
        \vspace{0.5em}\\
        (Super Combinator) &$\mathit{sc}$ &$::=$&
        \(\mathit{var}\ \mathit{var_1} \ldots \mathit{var_n}
        \textnormal{\texttt{ = }} \mathit{expr}\)
        & \(n \geq 0\)
        \vspace{0.5em}\\
        (Expression) &$\mathit{expr}$ &$::=$& \(\mathit{var}\) & Variable \\
        &&$|$& \(\mathit{expr}\ \mathit{expr}\) & Application \\
        &&$|$& \(\mathit{expr}\ \mathit{binop}\ \mathit{expr}\)
        & Infix binary application \\
        &&$|$& \(\textnormal{\texttt{let }} \mathit{defns}
        \textnormal{\texttt{ in }} \mathit{expr}\)
        & Local definition \\
        &&$|$& \(\mathtt{let\ rec\ } \mathit{defns} \mathtt{\ in\ } \mathit{expr}\)
        & Local recursive definition \\
        &&$|$& \(\mathtt{case\ } \mathit{expr} \mathtt{\ of\ } \mathit{alts}\)
        & Case expression \\
        &&$|$& \(\mathtt{if\ } \mathit{expr}
        \mathtt{\ then\ } \mathit{expr} \mathtt{\ else\ } \mathit{expr}\)
        & If expression \\
        &&$|$& \(\mathit{num}\) & Number \\
        &&$|$&
        \(\textnormal{\texttt{Pack\{}} \mathit{num}
        \textnormal{\texttt{,}} \mathit{num}
        \textnormal{\texttt{\}}}\)
        & Constructor
        \vspace{0.5em}\\
        (Definitions) &$\mathit{defns}$ &$::=$&
        \(\mathit{defn}_1; \ldots; \mathit{def}_n \)
        & \(n \geq 1\) \\
        &$\mathit{defn}$ &$::=$&
        \(\mathit{var} \textnormal{\texttt{ = }} \mathit{expr}\)
        \vspace{0.5em}\\
        (Alternatives) &$\mathit{alts}$ &$::=$&
        \(\mathit{alt}_1 ; \ldots ; \mathit{alt}_n\)
        & \(n \geq 1\) \\
        &$\mathit{alt}$ &$::=$&
        \(\textnormal{\texttt{<}} \mathit{num} \textnormal{\texttt{> }}
        \mathit{var}_1 \ldots \mathit{var}_n
        \textnormal{\texttt{ -> }} \mathit{expr} \)
        & \(n \geq 0\) 
        \vspace{0.5em}\\
        (Binary Operators) &$\mathit{binop}$ &$::=$&
        \(\mathit{arithop}\ |\ \mathit{relop}\)
        & \\
        &$\mathit{arithop}$ &$::=$& \(
        \textnormal{\texttt{ + }}
        | \textnormal{\texttt{ - }}
        | \textnormal{\texttt{ * }}
        | \textnormal{\texttt{ / }} \)
        & Arithmetic \\
        &$\mathit{relop}$ &$::=$& \(
        \textnormal{\texttt{ < }}
        | \textnormal{\texttt{ /= }} \)
        & Comparison \\
      \end{tabular}
    }
  \end{center}
  \hrulefill
  \caption{Syntax of the core language}
  \label{table:gm-fl-syntax}
\end{figure}

\section{Syntax of G-Machine}
\label{sec:gm-code}

A set of instructions of G-Machine in given in \Cref{table:gm-code}.
where \(i\) is an integer \(\mathit{instr}\) is an instruction \(\mathit{name}\) is a name of top-level super-combinator and \(\mathit{code}\) is a code (a list of instructions).
The operation semantics of these instructions will be given in \Cref{sec:gm-op-sem}.

\begin{figure}
  \hrulefill
  \begin{center}
    \renewcommand{\arraystretch}{1.2}
    \begin{tabular}[t]{lrcll}
      (Code) & \(code\)
      &$::=$&
      \([\mathit{instr_1}, \ldots, \mathit{instr_n}]\) & \(n \geq 0\)\\
      \\
      (Instruction) & $\mathit{instr}$ &$::=$& \texttt{Slide} \(\mathit{num}\) \\
      &&$|$& \texttt{Alloc} \(\mathit{num}\) \\
      &&$|$& \texttt{Update} \(\mathit{num}\) \\
      &&$|$& \texttt{Pop} \(\mathit{num}\) \\
      &&$|$& \texttt{Unwind} \\
      &&$|$& \texttt{PushGlobal} \(\mathit{var}\) \\
      &&$|$& \texttt{PushInt} \(\mathit{num}\) \\
      &&$|$& \texttt{Push} \(\mathit{num}\) \\
      &&$|$& \texttt{Mkap} \\
      &&$|$& \texttt{Eval} \\
      &&$|$& \texttt{Cond} \(code\ code\) \\
      &&
      $|$& \texttt{Add}
      $|$\ \texttt{Sub}
      $|$\ \texttt{Mul}
      $|$\ \texttt{Div} \\
      &&
      $|$& \texttt{Lt}
      $|$\ \texttt{Neq} \\
      &&$|$& \texttt{Pack}
      \(\mathit{num}\) \(\mathit{num}\) \\
      &&$|$& \texttt{Casejump}
      \(\begin{bmatrix}
      \mathit{num}_1 \textnormal{\texttt{ -> }} \mathit{code}_1\\
      \vdots \\
      \mathit{num}_n \textnormal{\texttt{ -> }} \mathit{code}_n\\
      \end{bmatrix}\) & \(n \geq 1\)\\
      &&$|$& \texttt{Split} \(\mathit{num}\) \\
    \end{tabular}        
  \end{center}
  \hrulefill
  \caption{Instructions of G-Machine}
  \label{table:gm-code}
\end{figure}

\begin{figure}
  \hrulefill
  \begin{center}
    \renewcommand{\arraystretch}{1.2}
    \begin{tabular}{lrclll} 
      (Node) & $\mathit{node}$ &$::=$& \texttt{NNum} \(\mathit{num}\) && Number \\
      &&$|$& \texttt{NGlobal} \(\mathit{num}\) \(\mathit{code}\)
      && Global \\
      &&$|$& \texttt{NApp} \(a\) \(a\) && Application \\
      &&$|$& \texttt{NConstr} \(\mathit{num}\) \([a_1, \ldots, a_n]\)
      &\ \(n \geq 0\) & Constructor \\
      &&$|$& \texttt{NInd} \(a\) && Indirection \\
    \end{tabular}
  \end{center}
  \hrulefill
  \caption{Nodes of G-Machine}
  \label{table:gm-node}
\end{figure}

\begin{figure}
  \hrulefill
  \begin{center}
    \renewcommand{\arraystretch}{1.2}
    \begin{tabular}{lrcll} 
      (Stack)
      & \(\mathit{stack}\)
      &$::=$& \([a_1, \ldots, a_n]\) &\ \(n \geq 0\) \\
      (Dump)
      & \(\mathit{dump}\)
      &$::=$& \([ \langle \mathit{stack}_1, \mathit{code}_1 \rangle,
        \ldots, 
        \langle \mathit{stack}_n, \mathit{code}_n \rangle ]\)
      &\ \(n \geq 0\) \\
      (Global Environment)
      & \(\mathit{env}\)
      &$::=$& \(\{\mathit{var}_1 \mapsto a_1, \ldots, \mathit{var}_n \mapsto a_n\}\)
      &\ \(n \geq 1\) \\
      (Heap)
      & \(\mathit{heap}\)
      &$::=$& \(\{a_1 \mapsto \mathit{node}_1, \ldots, a_n \mapsto \mathit{node}_n\}\)
      &\ \(n \geq 0\) \\
    \end{tabular}
  \end{center}
  \hrulefill
  \caption{The Stack, dump, global environment and the heap of G-Machine}
  \label{table:gm-components}
\end{figure}

\section{Compilation scheme of G-Machine}

The compilation to the codes are done with the following rules.
\newcommand{\dbra}[1]{\llbracket #1 \rrbracket\ }
Where \(l_1 \doubleplus l_2\) describes a concatenation of lists \(l_1\) and \(l_2\).

\begin{mdframed}

  \(\mathcal{SC}\) compiles a super-combinator \(f\ x_1 \ldots x_n = e\).
  The output is a G-Machine code for a global node.
  % After the compilation, we need global nodes to be generated on heap as the initial state.

  \[\mathcal{SC} \dbra{f\ x_1 \ldots x_n = e}
  = \mathcal{R} \dbra{e}
  [x_1 \mapsto 0, \ldots, x_n \mapsto n - 1]\ n
  \]
\end{mdframed}

\begin{mdframed}

  \(\mathcal{R}\) is for a compilation of the body expression of super-combinators $e$ (i.e. the right-hand side of $=$).
  Where $\rho$ is an environment (the function from the names of variables to their indices), $d$ is an arity of the supercombinator.
  After evaluating the $e$ following the code obtained from
  \(\mathcal{E} \dbra{e} \rho\ d \), we need to 
  (\romannumeral 1) \texttt{update} the pointer which previously pointed the supercombinator to point the address of the node newly obtained by evaluating the body expression of the supercombinator,
  (\romannumeral 2) \texttt{pop} out the addresses which correspond the local variables (the right-hand side of the supercombinator definition) and
  (\romannumeral 3) then perform evaluation, which is basically \texttt{unwinding}, as before.

  \[
  \mathcal{R} \dbra{e} \rho\ d
  = \mathcal{E} \dbra{e} \rho
  \doubleplus [ \mathtt{Update\ } d,
    \mathtt{Pop\ } d,
    \mathtt{Unwind}
  ]
  \]

\end{mdframed}

\begin{mdframed}
  \(\mathcal{E}\) compiles an expression in a strict-context for a better efficiency. 
  This is designed not to offend the general laziness of the core language.
  Whether the given expression is in the strict context or not can be determined by the derivations shown in p.126 of \cite{implementingfl}.

  \vspace{2em}
  If the given expression is an integer, we need a code to generate the number node with the integer (i.e. \texttt{push} \texttt{int}eger to the heap)

  \[\mathcal{E} \dbra{i} \rho = [\mathtt{PushInt}\ i]\]

  \vspace{2em}
  
  If the given expression is a local definition,
  we need to evaluate all of those expressions we have defined locally in a non-strict context since we may not require these
  and we can evaluate the body expression in a strict context because we must evaluate this invariably.
  Where \(\rho^{+i}\) is defined to meet \(\rho^{+n} x = (\rho x) + n\).
  This is because the distances of the variables from the top of the stack increases by pushing the each local definitions.
  Also, we need to clear the $n$ local definitions and \texttt{slide} the top of the stack at the very last process.

  \vspace{1em}
  
  {\renewcommand{\arraystretch}{1}
    \begin{center}
      \begin{tabular}{ rll } 
        \multicolumn{2}{l}{
          \(\mathcal{E} \dbra{
            \mathtt{let\ } x_1 = e_1; \ldots; x_n = e_n \mathtt{\ in\ } e
          } \rho\)
        }\\
        = & \(\mathcal{C}\dbra{e_1} \rho^{+ 0} \doubleplus \ldots \doubleplus\) \\
        & \(\mathcal{C}\dbra{e_n} \rho^{+ (n - 1)} \doubleplus \) \\
        & \(\mathcal{E}\dbra{e} \rho' \doubleplus [\mathtt{Slide\ } n]\) \\
        & \hspace{2em}
        where \(\rho' = \rho^{+ n}[x_1 \mapsto n - 1, \ldots, x_0 \mapsto 0]\)\\
      \end{tabular}
    \end{center}
  }
  \vspace{2em}
  
  The evaluation/execution and compilation of a local recursive definition is basically the same as the local (not-recursive) definition.
  The difference is that we need the environment with all the locally defined variables not just when we are compiling the body expression but also when we are compiling the right-hand side of definitions and
  that we firstly need to \texttt{allocate} dummy pointers on the top $n$ of the stack when we evaluate.
  After the evaluation of a local definition, the allocated dummy pointers are \texttt{updated} to point the node which is a result of the evaluation
  so as not to traverse these dummy pointers, which would cause ``segmentation fault'' or ``null pointer exception'' in such languages as ``c'' or ``Java''.

  \vspace{1em}

  {\renewcommand{\arraystretch}{1}
    \begin{center}
      \begin{tabular}{ rll } 
        \multicolumn{3}{l}{
          \(\mathcal{E} \dbra{
            \mathtt{let\ rec\ } x_1 = e_1; \ldots; x_n = e_n \mathtt{\ in\ } e
          } \rho\)
        }\\
        = & \([ \mathtt{alloc\ } n] \doubleplus\) \\
        & \(\mathcal{C}\dbra{e_1} \rho' \doubleplus [ \mathtt{Update\ } (n - 1)]
        \doubleplus \ldots \doubleplus\)
        \\
        & \(\mathcal{C}\dbra{e_n} \rho' \doubleplus [ \mathtt{Update\ } 0] \doubleplus
        \) \\
        & \(\mathcal{E}\dbra{e} \rho' \doubleplus [\mathtt{Slide\ } n] \) \\
        & \hspace{2em}
        where \(\rho' = \rho^{+ n}[x_1 \mapsto n - 1, \ldots, x_0 \mapsto 0]\)\\
      \end{tabular}
    \end{center}
  }
  \vspace{2em}

  If the given expression is a binary application with a primitive operator, we will evaluate the both arguments since we definitely need them.
  Therefore, the compilation of the arguments can be done with the same scheme \(mathcal{E}\), in a strict context.
  However, the environment of the first arguments (actually, either one of both) must be converted as \(\rho^{+ 1}\) since the location of the variables change by pushing the second argument to the top of the stack as we described in the compiling scheme of local definition.
  We need to perform the primitive operation, addition for this time, after evaluating the arguments, therefore we put the operator at the very last position of the generated code.
  The expressions with the other primitive operators (such as \verb| - |, \verb| * |, \verb| < |) are also dealt with the same manner.
  
  \[
  \mathcal{E} \dbra{e_0 + e_1} \rho =
  \mathcal{E}\dbra{e_1} \rho \doubleplus \mathcal{E}\dbra{e_0} \rho^{+ 1}
  \doubleplus [ \mathtt{Add} ]
  \]

  \vspace{2em}
  
  An expression with an unary primitive operator is compiled as the following.
  Notice that the argument can be compiled in a strict context as we have already discussed in the binary operator.
  
  \[
  \mathcal{E} \dbra{\mathtt{negate\ } e} \rho =
  \mathcal{E}\dbra{e} \rho \doubleplus [ \mathtt{Neg} ]
  \]

  \vspace{2em}

  If the given expression is a case expression \(\verb|case | e \verb| of | \mathit{alts}\), we can compile the \(e\) in a strict context since we will evaluate this definitely.
  After the evaluation of the \(e\), we will be executing the \texttt{Casejump} instruction, which has one or more codes and  simply looks at the tag of the structured data, the evaluated expression, and determines which code to execute in the execution phase.
  The compilation scheme of the \(\mathit{alts}\) will be described later.

  % \mathtt{case } e \mathtt{ of } \mathit{alts} 
  \[
  \mathcal{E} \dbra{\mathtt{case\ } e \mathtt{\ of\ } \mathit{alts}} \rho =
  \mathcal{E} \dbra{e} \rho
  \doubleplus [ \verb|Casejump | \mathcal{D} \dbra{\mathit{alts}}\ \rho ]
  \]
  
  \vspace{2em}

  If the given expression is a constructor and sufficient applications can be compiled with the following rule.
  If the application is in-sufficient then we should use program conversion technique to make it sufficient as described in \cite{implementingfl}.
  Notice that we should compile all the arguments in a non-strict context since we don't want to always evaluate the arguments of the structured data.
  For example, an infinite list can achieved by not evaluating the second argument of a \verb|Cons| somewhere.
  
  \[
  \mathcal{E} \dbra{\mathtt{Pack\{} t \mathtt{,} a \mathtt{\}}\ e_1 \ldots e_a} \rho =
  \mathcal{C}\dbra{e_a} \rho^{+ 0} \doubleplus \ldots 
  \doubleplus \mathcal{C}\dbra{e_1} \rho^{+ (a - 1)}
  \doubleplus [ \verb|Pack | t\ a ]
  \]
  
  \vspace{2em}
  
  If the given expression is an if expression \(\verb|if | e_0 \verb| then | e_1 \verb| else | e_2\), we will evaluate \(e_0\) but we won't evaluate one of \(e_1\) or \(e_2\).
  However, this will be accomplished by the \texttt{Cond} instruction, which has two codes and will look at the top of the stack and determine which code to execute in the evaluating process.
  Therefore, we can compile the every three argument in a strict context.
  Notice that we haven't convert the environment as \(\rho^{+ 1}\)
  since we will be discarding the result of the \(e_0\) at the top of the stack and we put the address to the node which consists he expression \(e_1\) or \(e_2\) at the top of the stack, the exact same location where \(e_0\) was before.

  \[
  \mathcal{E} \dbra{
    \mathtt{if\ } e_0 \mathtt{\ then\ } e_1 \mathtt{\ else\ } e_2
  } \rho  =
  \mathcal{E}\dbra{e_0} \rho
  \doubleplus [ \mathtt{Cond\ }
    (\mathcal{E}\dbra{e_1} \rho)
    (\mathcal{E}\dbra{e_2} \rho)
  ]
  \]
  
  \vspace{2em}

  The default cases including an application, a variable and a constructor are compiled in a non-strict context.
  Since the evaluation of an application should be done in a lazy manner so does the application to a constructor too.
  The evaluation of a variable and a unary constructor can be done in a step and the generated code should have no difference regardless of whether we compiled it in a strict context or non-strict context.
  Therefore, we would like to compile them in a non-strict context so we would not have to repeat the same compilation rule both in compilation of strict and non-strict context.
  We also need to \texttt{eval}uate the expression until it reaches to the weak head normal form.  
  \[
  \mathcal{E} \dbra{e} \rho =
  \mathcal{C}\dbra{e} \rho \doubleplus [ \mathtt{Eval} ]
  \]
  
\end{mdframed}

\begin{mdframed}
  \(\mathcal{C}\) compiles expressions in a non-strict context to accomplish the laziness of the core language.
  The compilation scheme is basically the same as the \(\mathcal{E}\) so we won't explain them in detail.

  \[
  \mathcal{C} \dbra{i} \rho  = [\mathtt{PushInt}\ i]
  \]
  
  \vspace{0.5em}

  {\renewcommand{\arraystretch}{1}
    \begin{center}
      \begin{tabular}{rl} 
        \multicolumn{2}{l}{
          \(\mathcal{C} \dbra{
            \mathtt{let\ } x_1 = e_1; \ldots; x_n = e_n \mathtt{\ in\ } e
          } \rho\)
        }\\
        = & \(\mathcal{C}\dbra{e_1} \rho^{+ 0} \doubleplus \ldots \doubleplus\) \\
        & \(\mathcal{C}\dbra{e_n} \rho^{+ (n - 1)} \doubleplus \) \\
        & \(\mathcal{C}\dbra{e} \rho' \doubleplus
        \dbra{\mathtt{Slide\ } n}\) \\
        & \hspace{2em}
        where \(\rho' = \rho^{+ n}[x_1 \mapsto n - 1, \ldots, x_0 \mapsto 0]\)\\
      \end{tabular}
    \end{center}
  }

  \vspace{1em}

  {\renewcommand{\arraystretch}{1}
    \begin{center}
      \begin{tabular}{rl} 
        \multicolumn{2}{l}{
          \(\mathcal{C} \dbra{
            \mathtt{let\ rec\ } x_1 = e_1; \ldots; x_n = e_n \mathtt{\ in\ } e
          } \rho\)
        }\\
        = & \([ \mathtt{Alloc\ } n] \doubleplus\) \\
        & \(\mathcal{C}\dbra{e_1} \rho' \doubleplus [ \mathtt{Update\ } (n - 1)]
        \doubleplus \ldots \doubleplus\) \\
        & \(\mathcal{C}\dbra{e_n} \rho' \doubleplus [ \mathtt{Update\ } 0] \doubleplus
        \) \\
        & \(\mathcal{C}\dbra{e} \rho' \doubleplus
        \dbra{\mathtt{Slide\ } n}\) \\
        & \hspace{2em}\hfill
        where \(\rho' = \rho^{+ n}[x_1 \mapsto n - 1, \ldots, x_0 \mapsto 0]\)\\
      \end{tabular}
    \end{center}
  }
  \vspace{1em}

  \[
  \mathcal{E} \dbra{\mathtt{Pack\{} t \mathtt{,} a \mathtt{\}}\ e_1 \ldots e_a} \rho
  = 
  \mathcal{C}\dbra{e_a} \rho^{+ 0} \doubleplus \ldots 
  \doubleplus \mathcal{C}\dbra{e_1} \rho^{+ (a - 1)}
  \doubleplus [ \verb|Pack | t\ a ]
  \]

  {\renewcommand{\arraystretch}{1}
    \vspace{0.5em}
    \begin{center}
      \begin{tabular}{ rll } 
        \(\mathcal{C} \dbra{e_0\ e_1} \rho\) & =
        & \(\mathcal{C}\dbra{e_1} \rho \doubleplus \mathcal{C}\dbra{e_0} \rho^{+ 1}
        \doubleplus [ \mathtt{Mkap} ] \) \\
        \\
        \(\mathcal{C} \dbra{f} \rho\)
        & = & \([\mathtt{PushGlobal}\ f]\) \\
        && \hfill where \(f\) is a name of a supercombinator \\
        \\
        \(\mathcal{C} \dbra{x} \rho\)
        & = & \([\mathtt{Push}\ (\rho\ x)]\)
        \hfill where \(x\) is a local variable \\
      \end{tabular}
    \end{center}
  }

  Notice that the compiling with the primitive operator \(\odot\) is simply generating the code \([\mathtt{PushGlobal}\ \odot]\).
  For example, \(\verb|if | e_1 \verb| then | e_2 \verb| else | e_3\) is firstly transpiled into \(\verb|if | e_1\ e_2\ e_3\) (notice that the if expression is now an expression just applying all the three arguments \(e_1, e_2, e_3\) to the \texttt{if} primitive) and then compiled as \([\ldots, \verb|PushGlobal if|]\).
  
  %  TODO: プリミティブ演算子について
\end{mdframed}

\begin{mdframed}
  The alternatives in a \texttt{case} expression are compiled as the following.
  \[
  \mathcal{D} \dbra{\mathit{alt}_1, \ldots, \mathit{alt}_n}\ \rho
  = [ \mathcal{A} \dbra{\mathit{alt}_1} \rho,
    \ldots, \mathcal{A} \dbra{\mathit{alt}_n} \rho ]
  \]
\end{mdframed}

\begin{mdframed}
  An alternative in a \texttt{case} expression is compiled as the following.
  The generated code is basically a tuple of tag and a code.
  The tag is used to match the constructor.
  If the tag has matched, then the combined code will be executed.
  We firstly need to \texttt{split} our structured data and push all of them on the stack.
  By doing so, we can access our variable with the indices obtained from our new environment \(\rho'\).
  At the very last phase of the evaluation, we should \texttt{slide} the addresses on the stack  which were assigned for the newly defined variables since we would no longer need them anymore.

  \[
  \mathcal{A}
  \llbracket
  \verb|<| t \verb|>|\ x_1 \ldots x_n \verb| -> | \mathit{body}
  \rrbracket\ \rho =
  t \verb| -> | [ \verb|Split | n ] \doubleplus \mathcal{E} \dbra{\mathit{body}} \rho'
  \doubleplus [ \verb|Slide | n ]
  \]
  \hfill where \(\rho' = \rho^{+ n}[ x_1 \mapsto 0 \ldots x_n \mapsto n - 1 ]\)
\end{mdframed}

\section{Operational Semantics of G-Machine}
\label{sec:gm-op-sem}

As we have seen in \Cref{sec:gm-intro}, G-Machine has code, stack, dump, heap and global environment.
Here in this section, we will look at the state transition rules for G-Machine.
The transition rules are described as the following.

\vspace{1em}
\fbox{
  \begin{tabular}{ rrrrll } 
    & old code & old stack & old dump & old heap & old global environment \\
    $\Longrightarrow$
    & new code & new stack & new dump & new heap & new global environment \\
  \end{tabular}
}\vspace{1em}

The list of the instructions, code, can be represented as
\(\mathit{instr}_1 : \ldots : \mathit{instr}_n \) as well as
\([\mathit{instr}, \ldots, \mathit{instr}]\) as like in Haskell language.

The existence and a creation of a node \(\mathit{node}\) at the address \(a\) in a heap \(h\) is represented as \(h[a : \mathit{node}]\).

Also, existence of the relation of the name of a supercombinator \(\mathit{name}\) and an address of the corresponding global node \(a\) in a global environment \(m\) is represented as \(m[\mathit{name} : a]\).

\vspace{1em}
\fbox{
  \(\begin{array}{ rrrrll } 
  & \mathtt{PushGlobal\ } f : i & s & d & h & m[f : a] \\
  \Longrightarrow & i & a : s & d & h & m
  \end{array}\)
}\vspace{2em}

\fbox{
  \(\begin{array}{ rrrrll } 
  & \mathtt{PushInt\ } n : i & s & d & h & m \\
  \Longrightarrow & i & a : s & d & h[a : \mathtt{NNum\ } n] & m
  \end{array}\)
}\vspace{2em}

\fbox{
  \(\begin{array}{ rrrrll } 
  & \mathtt{Mkap} : i & a_1 : a_2 : s & d & h & m \\
  \Longrightarrow & i & a : s & d & h[a : \mathtt{NApp\ } a_1\ a_2] & m
  \end{array}\)
}\vspace{2em}

\fbox{
  \(\begin{array}{ rrrrll } 
  & \mathtt{Push\ }n : i & a_0 : \ldots : a_{n + 1} : s & d
  & h[a_{n + 1} : \mathtt{NApp}\ a_n\ a_n'] & m \\
  \Longrightarrow & c & a_n' : a_0 : \ldots : a_{n + 1} : s & d & h & m
  \end{array}\)
}\vspace{2em}

\fbox{
  \(\begin{array}{ rrrrll } 
  & \mathtt{Slide\ }n : i & a_0 : \ldots : a_n : s & d & h & m \\
  \Longrightarrow & c & a_0 : s & d & h & m
  \end{array}\)
}\vspace{2em}

\fbox{
  \(\begin{array}{ rrrrll } 
  & \mathtt{Update\ } n : i & a : a_0 : \ldots : a_n : s & d & h & m \\
  \Longrightarrow & i & a_0 : \ldots : a_n : s & d & h[a_n : \mathtt{NInd\ } a] & m \\
  \end{array}\)
}\vspace{2em}

\fbox{
  \(\begin{array}{ rrrrll } 
  & \mathtt{Pop\ }n : i & a_0 : \ldots : a_n : s & d & h & m \\
  \Longrightarrow & i & s & d & h & m
  \end{array}\)
}\vspace{2em}

\fbox{
  \(\begin{array}{ rrrrll } 
  & \mathtt{Alloc\ } n : i & s & d & h & m \\
  \Longrightarrow & i & a_1 : \ldots : a_n : s & d
  & h
  \begin{bmatrix}
    a_1 : \mathtt{NInd\ }\mathtt{hNull} \\
    \ldots \\
    a_n : \mathtt{NInd\ }\mathtt{hNull} \\
  \end{bmatrix}
  & m \\
  \end{array}\)
}\vspace{2em}

\fbox{
  \(\begin{array}{ rrrrrll } 
  & \odot : i & a_0 : a_1 : s & d
  & h[ a_0 : \mathtt{NNum}\ n_0, a_1 : \mathtt{NNum}\ n_1] & m \\
  \Longrightarrow & i & a : s & d & h[ a : \mathtt{NNum}\ (n_0 \odot n_1 )] & m \\
  \end{array}\)
}\vspace{0.5em}

Where \(\odot\) is a binary operator such as \texttt{Add} or \texttt{Neq}.

\vspace{2em}

\fbox{
  \(\begin{array}{ rrrrrll } 
  & \mathtt{Neg} : i & a : s & d & h[ a : \mathtt{NNum}\ n] & m \\
  \Longrightarrow & i & a' : s & d & h[ a' : \mathtt{NNum}\ (- n)] & m \\
  \end{array}\)
}\vspace{2em}

\fbox{
  \(\begin{array}{ rrrrrll } 
  & \mathtt{Cond}\ i_1\ i_2 : i & a : s & d & h[ a : \mathtt{NNum}\ 1] & m \\
  \Longrightarrow & i_1 \doubleplus i & s & d & h & m \\
  \end{array}\)
}\vspace{2em}

\fbox{
  \(\begin{array}{ rrrrrll } 
  & \mathtt{Cond}\ i_1\ i_2 : i & a : s & d & h[ a : \mathtt{NNum}\ 0] & m \\
  \Longrightarrow & i_2 \doubleplus i & s & d & h & m \\
  \end{array}\)
}\vspace{2em}

\fbox{
  \(\begin{array}{ rrrrrll } 
  & \mathtt{Eval} : i & a : s & d & h & m \\
  \Longrightarrow & [\mathtt{Unwind}] & [ a ] & \langle i, s \rangle : d & h & m \\
  \end{array}\)
}\vspace{2em}

\fbox{
  \(\begin{array}{ rrrrll } 
  & \mathtt{Pack}\ t\ n : i & a_1 : \ldots : a_n : s & d & h & m \\
  \Longrightarrow
  & i & s & d & h[a : \mathtt{NConstr}\ t\ [a_1 : \ldots : a_n] ] & m \\
  \end{array}\)
}\vspace{2em}

\fbox{
  \(\begin{array}{ rrrrll } 
  & \mathtt{Casejump}\ [\ldots, t \textnormal{\texttt{ -> }} i', \ldots] : i
  & a : s & d & h[a : \mathtt{NConstr}\ t\ \mathit{ss}] & m \\
  \Longrightarrow
  & i' \doubleplus i & a : s & d & h & m \\
  \end{array}\)
}
\vspace{2em}

\fbox{
  \(\begin{array}{ rrrrll } 
  & [ \mathtt{Split} ] : i & a : s & d
  & h[a : \mathtt{NConstr}\ t\ [a_1, \ldots, a_n]] & m \\
  \Longrightarrow
  & i & a_1 : \ldots : a_n : s & d & h & m
  \end{array}\)
}\vspace{2em}

\fbox{
  \(\begin{array}{ rrrrll } 
  & [ \mathtt{Unwind} ] & a : s
  & \langle i', s' \rangle : d
  & h[a : \mathtt{NNum\ } n] & m \\
  \Longrightarrow & i' & a : s' & d & h & m
  \end{array}\)
}\vspace{2em}

\fbox{
  \(\begin{array}{ rrrrll } 
  & [ \mathtt{Unwind} ] & a : s
  & \langle i', s' \rangle : d
  & h[a : \mathtt{NConstr}\ n\ \mathit{as}] & m \\
  \Longrightarrow
  & i' & a : s & d & h & m
  \end{array}\)
}\vspace{2em}

\fbox{
  \(\begin{array}{ rrrrll } 
  & [ \mathtt{Unwind} ] & a_0 : s & d & h[a_0 : \mathtt{NInd\ } a] & m \\
  \Longrightarrow & [ \mathtt{Unwind} ] & a : s & d & h & m
  \end{array}\)
}\vspace{2em}

\fbox{
  \(\begin{array}{ rrrrll } 
  & [ \mathtt{Unwind} ] & a_0 : \ldots : a_n : s & d & h[a_0 : \mathtt{NGlobal}\ n\ c] & m \\
  \Longrightarrow & c & a_0 : \ldots : a_n : s & d & h & m
  \end{array}\)
}\vspace{2em}

\fbox{
  \(\begin{array}{ rrrrll } 
  & [ \mathtt{Unwind} ] & a : s & d & h[a : \mathtt{NApp\ } a_1\ a_2] & m \\
  \Longrightarrow & [ \mathtt{Unwind} ] & a_1 : a : s & d & h & m
  \end{array}\)
}\vspace{2em}

When we have a code \([\mathtt{Unwind}]\) but have an empty dump, then we are now at a terminal state.
The stack should now contain exactly 1 address pointing to the desired result.

\chapter{The implementation of G-Machine in HyperLMNtal}
\label{cha:gm-on-lmntal}
\par
In this chapter, we will explain briefly about the G-machine and the compiler we implemented.
The source code for each are in \Cref{appendix:lmn-gm}.
Then we will see some examples to show we have implemented G-machine properly.

\section{Project overview}
\label{cha:lmn-gm-overview}

The program is separated into three parts;
the testing module (test15.lmn in \Cref{lst:gm-lmn-testing-mod}),
the compiler (compiler15.lmn in \Cref{sec:gm-compiler-source})
and G-machine (gm15.lmn in \Cref{sec:gm-source})%
\footnote{
The number, 15, just denotes the version and is not important in this paper.
}
The program is executed with the following command:
\begin{lstlisting}[language = bash]
  slim --hl -t test15.lmn gm15.lmn compiler15.lmn
\end{lstlisting}

\lstinputlisting[language=lmntal,
  caption = The testing module,
  label = {lst:gm-lmn-testing-mod}
]{example/test15.lmn}

Here, in the testing module \lstinline|test15.lmn|, the testing code is
\begin{lstlisting}[language=lmntal]
  Program = [
    ["id", "x"] = "x",
    ["main"   ] = ("id", 3)
  ].
\end{lstlisting}
.
This testing code should satisfy the LMNtal Shape Type\cite{lmntalshapetype2016} defined as \Cref{shapetype-of-program}.
We will introduce more examples later in \Cref{sec:examples}.

By executing this with a trace option \lstinline|-t|, we firstly see this to be compiled as the following:%
\footnote{
This is a simplified output.
}
\begin{lstlisting}[language=lmntal]
  m("id",!H37).
  nGlobal(1,[push(0),eval,update(1),pop(1),unwind],!H37).
  m("main",!Hf4).
  nGlobal(0,[pushInt(3),pushGlobal("id"),mkap,eval,update(0),pop(0),unwind],!Hf4).
\end{lstlisting}
Where the \lstinline[language=lmntal]|m("id", !H37)| and \lstinline[language=lmntal]|m("main", !Hf4)| are the global environment, that maps the name of the supercombinators, ``id'' and ``main'', to the address of the whose global nodes exist.
Then we are to execute our virtual machine.

\begin{figure}
\begin{lstlisting}[language=lmntal]
  defshape scs(Prog) {
    Prog = scs :- Prog = [sc|scs].
    Prog = scs :- Prog = [sc].
    SC = sc :- string(Name) | SC = ([Name|vars] = exp).
    Vars = vars :- Vars = [].
    Vars = vars :- string(Var) | Vars = [Var|vars]
    E = exp :- int(N) | E = N.
    E = exp :- E = (exp, exp).
    E = exp :- string(X) | E = X.
    E = exp :- E = if(E, E, E).
    E = exp :- E = case(E, alts).
    E = exp :- E = exp + exp.
    E = exp :- E = exp - exp.
    E = exp :- E = exp * exp.
    E = exp :- E = exp / exp.
    E = exp :- E = exp < exp.
    E = exp :- E = exp \= exp.
    E = exp :- E = negate(exp).
    E = exp :- int(Tag), int(Arity) | E = eConstr(Tag, Arity).
    Alts = alts :- [altNT|alts].
    Alts = alts :- [altNT].
    Alt = altNT :- int(Tag) | Alt = alt(Tag, vars, exp).
  } nonterminal {
    exp(E), sc(SC), scs(SCs), vars(Vars), alts(Alts), altNT(Alt)
  }
\end{lstlisting}
\caption{The shape type of the program}
\label{shapetype-of-program}
\end{figure}

\section{Examples}

\label{sec:examples}

This section gives some core language examples run on the G-machine we have implemented.

\subsection{Recursive functions}

Here is a program that calculates the factorial number and a program that calculates the Fibonacci number.

\begin{lstlisting}[language=lmntal, caption=Factorial]
  Program = [
    ["fac", "n"] = if(1 < "n", ("n" * ("fac", "n" - 1)), 1),
    ["main"    ] = ("fac", 5)
  ].
\end{lstlisting}
This example above will result with the \lstinline|result(120)| which is
\(1 \times 2 \times 3 \times 4 \times 5\) in 166 steps.

\begin{lstlisting}[language=lmntal, caption=nFib]
  Program = [
    ["fib", "n"] = if(2 < "n",
                      ("fib", "n" - 1) + ("fib", "n" - 2),
                      1),
    ["main"    ] = ("fib", 7)
  ].
\end{lstlisting}
The example above should result \lstinline|result(13)| in 793 steps.

Here is a \emph{nFib} program.
This function is famous since the returned value denotes the number of the function called.
And therefore it is a popular program in benchmark tests.
\begin{lstlisting}[language=lmntal, caption=nFib]
  Program = [
    ["nfib", "n"] = if(0 < "n",
                       1 + ("nfib", "n" - 1) + ("nfib", "n" - 2),
                       1),
    ["main"     ] = ("nfib", 10)
  ].
\end{lstlisting}

% 更に、nFib を計測してみたところ、こんな感じになった
We have observed that the number of the steps for each \lstinline|nfib| function calls are about 34.5.
That is, we need about 34.5 steps to calculate the
\lstinline|if 0 < n then 1 + nfib (n - 1) + nfib (n - 2) else 1|
and the function call for \lstinline|nfib|.

% nfib 関数呼び出しあたりのステップ数は約 34.5 と言ったところ
% つまり、「」に34stepかかるということ
% なるほどな〜（大体そのくらいかなという気はする）

\begin{figure}
  \begin{center}
    \includegraphics[width=\textwidth]{./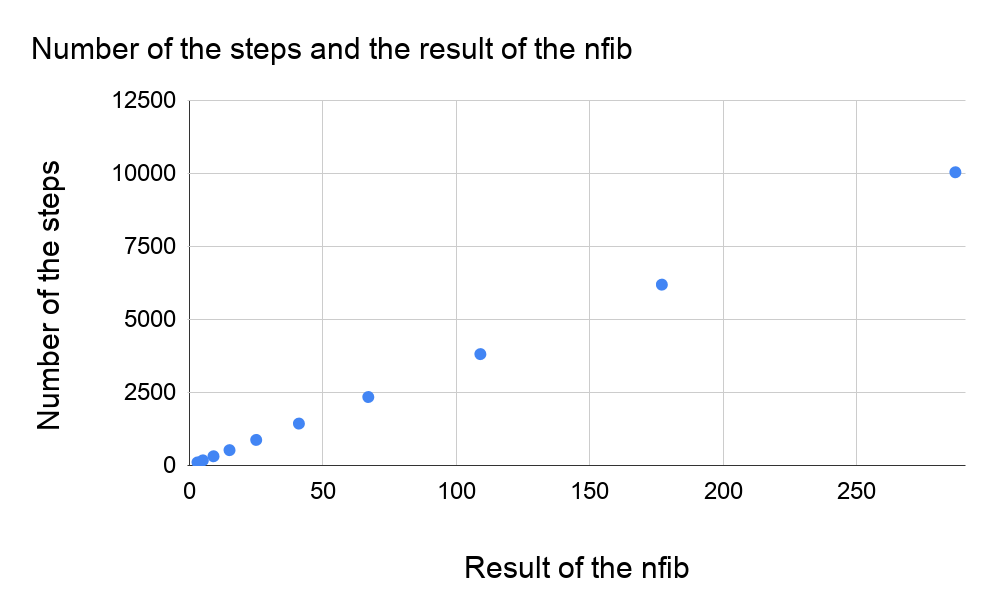}
  \end{center}
  \caption{Number of steps and the result of the \texttt{nfib} in G-machine}
\end{figure}

\subsection{Higher order functions}

Then we are to check that our G-machine have achieved the \emph{higher order function}.
Here is a program that exploit SKI combinators:
\begin{lstlisting}[language=lmntal, caption=SKI combinators]
  Program = [
    ["k", "x", "y"     ] = "x",
    ["s", "x", "y", "z"] = (("x", "z"), ("y", "z")),
    ["id"              ] = (("s", "k"), "k"),
    ["main"            ] = ("id", 3)
  ].
\end{lstlisting}
After the execution of the program, we will get the result \lstinline|result(3)| since the we have applied \lstinline|3| to the ideal function \lstinline|id|.
Notice that we actually did not need an I combinator:
we didn't define the function \lstinline|id| as \lstinline|id x = x| (though it is certainly possible).
Instead, we defined the \lstinline|id| as \lstinline|id = s k k|
since the I combinator is equivalent with the \lstinline|SKK| (or \lstinline|SKS|).
% The example above also indicates this:
Furthermore we can observe that 
the \lstinline|id| function has no argument and is composed of only \lstinline|s| and \lstinline|k|s.
The \lstinline|id| is certainly a function.
Thus, we achieved to \emph{return a function}.
Also, notice the \lstinline|s| function is taking two \lstinline|k| in the body of the \lstinline|id| function.
Therefore we achieved to \emph{take a function as an argument}.
Finally, at the same place, we can observe that the \lstinline|s| is not taking 3 arguments, there are only two \lstinline|k|s, thus there we performed a \emph{partial application}.
This could be only achieved if we \emph{curried} the function.
Thus, we can see that our G-machine (and a compiler) have successfully implemented the currying.

Here is a more realistic example with higher order functions:
\begin{lstlisting}[language=lmntal, caption=Partial application to the addition function]
  Program = [
    ["incr"] = ("+", "x"),
    ["main"] = ("incr", ("incr", 3))
  ].
\end{lstlisting}
Running this program, we shall obtain \lstinline|result(5)| as the result.
Notice that we defined the function \lstinline|incr| not as \lstinline|incr x = x + 1| but \lstinline|incr = (+) 1|.
This returns the function that takes one argument and increase it by 1.
This shows the power of the functional programming.
We can define a general function, the addition in this case, and apply some specific value, which is 1 in this case, and get a function for a specific purpose.

The example above applies the \lstinline|incr| twice but how about applying it more times ?
We definitely do not want to write \lstinline|incr| more.
Then we may use the \lstinline|twice| function as the following example:
\begin{lstlisting}[language=lmntal, caption=Partial application to the addition]
  Program = [
    ["twice", "f", "x"] = ("f", ("f", "x")),
    ["incr"           ] = ("+", "x"),
    ["main"           ] = ((("twice", "twice"), "incr"), 3)
  ].
\end{lstlisting}
This program yield \lstinline|7| as the result.
Since we have double the effect of the function, that doubles the effect of the function (\lstinline|incr|).
That is, the function \lstinline|twice twice incr| is equivalent with \lstinline|(+) 4|.

\subsection{Non-strict evaluation}

The examples we have seen so far can be all achieved with a strict languages.
However, we have implemented a lazy evaluator.
To prove this, we can do the following experiment.

Firstly, we shall consider the example, results with a runtime failure.
\begin{lstlisting}[language=lmntal, caption=Division by zero]
  Program = [
    ["main"] = 1/0
  ].
\end{lstlisting}

This program can be compiled 
(Actually, whole this program is compiled in a \emph{strict} context as
\lstinline|[pushInt(0),pushInt(1),div,update(0),pop(0),unwind]| (this is a code in a main global node).
Notice that the division is in-lined as \lstinline|div|
instead of calling the function as \lstinline|pushGlobal("/")|.
).
and executed but since we will eventually perform the division, we will never succeed to get the final result.
The program terminates in the following state%
\footnote{
This is based on the output of the implemented G-machine but is simplified
}:
\begin{lstlisting}[language=lmntal, caption=Runtime Error]
  code([div,update(0),pop(0),unwind]).
  stack([!A1,!A0,!Main]).
  dump([]).
  nNum(0,!A0).
  nNum(1,!A1).
  nGlobal(0,[pushInt(0),pushInt(1),div,update(0),pop(0),unwind],!Main).
\end{lstlisting}
We now have an empty dump, therefore we have finished to evaluate the arguments for a primitive operators, they have already reached to the Weak Head Normal Form, this time 1 and 0.
Thus, we are to perform the division, only to find that it is impossible
since we cannot divide the \lstinline|1|, pointed by the top of the stack \lstinline|!A1|, by \lstinline|0|, pointed by the second element of the stack.

Then, how about the following program?
Does this program reaches the final state and returns some value or stops in the execution phase ?
\begin{lstlisting}[language=lmntal, caption=Applying the division-by-zero expression]
  Program = [
    ["k", "x", "y"] = "x",
    ["main"       ] = (("k", 2), 1/0)
  ].
\end{lstlisting}
If we evaluate this in a strict strategy, call by value, then we should perform the devision by 0 since the expression \lstinline|1/0| is applied.
However, if we evaluate this in a non-strict strategy, call by name or call by need, then we do not have to execute the division.
Since the \lstinline|K| combinator discards the second argument.
Thus, we can obtain the final result \lstinline|2| executing this program on the G-machine we have implemented.
We can check this running the machine.

Here is a list of rules used in the transition of G-machine in the former program.
\begin{lstlisting}
---->pushGlobal
---->unwindGlobalWithEmptyDump
---->pushInt
---->pushInt
---->pushGlobal
---->mkap
---->mkap
---->pushInt
---->pushGlobal
---->mkap
---->mkap
---->eval
---->unwindApp
---->unwindApp
---->unwindGlobalWithNonEmptyDump
---->push
---->eval
---->unwindInt
---->update
---->pop0
---->unwindInt
---->update
---->pop0
---->terminateWithNum
\end{lstlisting}

Notice that there is no \lstinline|div| seen in the transition step.
We certainly did not perform the division.
This shows an advantage of the non-strict evaluation.

\subsection{Call By Name evaluation strategy}
In the previous example, we have seen that we have achieved the \emph{non-strict} evaluation
but we have not yet checked whether we did accomplished the \emph{call by need} evaluation other than the \emph{call by name}.
To show this, we can experiment with the following program:
\begin{lstlisting}[language=lmntal, caption=Double]
  Program = [
    ["double", "x"] = "x" + "x",
    ["main"       ] = ("double", 1 + 2)
  ].
\end{lstlisting}
which will result \lstinline|result(6)|.

Actually, this program is the same program we have seem in \Cref{sec:graph-reduction-intro}.
In short, we are to share the sub-expression \lstinline|1 + 2| and should not calculate this twice.
Thus, we will be performing the addition twice; in the \lstinline|double| function, \lstinline|3 + 3|, and for the applied expression, \lstinline|1 + 2|.
Here is a list of the transition rules of G-machine.
\begin{lstlisting}
---->pushGlobal
---->unwindGlobalWithEmptyDump
---->pushInt
---->pushInt
---->pushGlobal
---->mkap
---->mkap
---->pushGlobal
---->mkap
---->eval
---->unwindApp
---->unwindGlobalWithNonEmptyDump
---->push
---->eval
---->unwindApp
---->unwindApp
---->unwindGlobalWithNonEmptyDump
---->push
---->eval
---->unwindInt
---->push
---->eval
---->unwindInt
---->add          % performing the "1 + 2"
---->update
---->pop0
---->unwindInt
---->push
---->eval
---->unwindInt
---->addSameNode  % performing the "3 + 3"
---->update
---->pop0
---->unwindInt
---->update
---->pop0
---->terminateWithNum
\end{lstlisting}
Notice that we have performed the addition only twice, at the \lstinline|add| and the \lstinline|addSameNode|.
We have defined the rules differently since the former will match the different nodes and the latter will match the same nodes.
The rules are the following:
\begin{lstlisting}[language=lmntal, caption=The transition rule for the addition in G-machine]
  % Perform the addition with the different nodes as the arguments.
  add @@
  code = [add|Is], stack = [!A0, !A1|S], !A0= nNum(N0), !A1 = nNum(N1)
    :- N = N0 + N1
      | code = Is, stack = [!A|S], !A0= nNum(N0), !A1 = nNum(N1), !A = nNum(N).
  
  % Perform the addition with the same node as the arguments.
  addSameNode @@
  code = [add|Is], stack = [!A0, !A0|S], !A0= nNum(N0)
    :- N = N0 + N0
      | code = Is, stack = [!A|S], !A0 = nNum(N0), !A = nNum(N).
\end{lstlisting}
We can observe that the G-machine was not sharing an expression in the addition of the first time, when it is performing \lstinline|1 + 2|, but it was sharing the expression when we perform the second addition \lstinline|3 + 3|.
This can be also checked with the visualizer, \emph{Graphene}\cite{graphene}.
Here is a shot of the visualizer.

\begin{center}
  \begin{tabular}{ccc}
    \includegraphics[clip, width=6cm]{./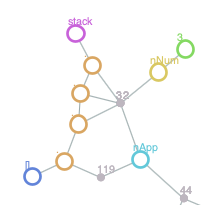}
    \begin{tikzpicture}[overlay, baseline = 0pt ]
      \useasboundingbox (0, 0) rectangle(1, 1);
      \draw [ color = black!25, line width = 7pt, opacity = 0.5]
      (- 1.5, 4.5) circle (40pt);
    \end{tikzpicture}
    &
    \begin{tikzpicture}[baseline = 0pt, scale = 0.3 ]
%      \useasboundingbox (0, 0) rectangle(4, 8);
      \draw [line width = 3pt, color = blue!30!black!20]
      (0, 8) --++ (3, 0);
      \draw [line width = 3pt, color = blue!30!black!20]
      (2, 7) --++ (1, 1) --++ (-1, 1);
    \end{tikzpicture}
    &
    \includegraphics[clip, width=6cm]{./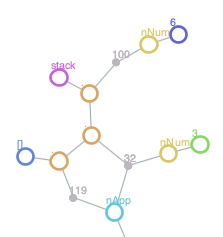}
    \begin{tikzpicture}[overlay, baseline = 0pt ]
      \useasboundingbox (0, 0) rectangle(1, 1);
      \draw [ color = black!25, line width = 7pt, opacity = 0.5]
      (- 1.5, 5) circle (40pt);
    \end{tikzpicture}
    \\
    \begin{minipage}{0.44\textwidth}
      \begin{center}
        The result of the \(1 + 2\), \(3\) is shared\\
      \end{center}
    \end{minipage}
    &&
    \begin{minipage}{0.44\textwidth}
      \begin{center}
        The next instruction (\lstinline|addSameNode|) generates the 
        \(6\ (= 3 + 3)\) and push its address on the top of the stack.
      \end{center}
    \end{minipage}
  \end{tabular}
\end{center}

\subsection{Structured Data: Infinite List}
So far, we have observed that we have accomplished the non-strict evaluation.
Here comes a more practical examples: \emph{an infinite list}.

The following example calculates the factorial number using an infinite list
\lstinline|[1, 2, 3, 4, 5, ...]|.
The function \lstinline|from| generates the infinite list.
And the \lstinline|timesN list n| multiples the first \lstinline|n| of the list \lstinline|list|.
Here, \lstinline|eConstr(2, 2)| is a data \emph{constructor} and can be read as \lstinline|Cons| for this time as we have already described in \Cref{sec:gm-corelang-syntax}.
The
\lstinline|alt(2, ["h", "t"], "h" * (("timesN", "t"), "n" - 1))|
performs the \emph{pattern matching}.
This checks the \emph{tag} of the structured data and binds the components of the data to the variables, for this time, \lstinline|"h"| and \lstinline|"t"|.

\begin{lstlisting}[language=lmntal, caption=Calculating the factorial using an infinite list]
  Program = [
    ["from", "n"] = ((eConstr(2, 2), "n"), ("from", "n" + 1)), 
    ["timesN", "list", "n"] = if( 0 < "n",
    case( "list", 
    [ alt(1, [], 1),     % Nil
      alt(2, ["h", "t"], % Cons
      "h" * (("timesN", "t"), "n" - 1)
      ) ]
    ),
    1
    ),
    ["main"] = (("timesN", ("from", 1)), 5)
  ].
\end{lstlisting}
Again, we can obtain \lstinline|result(120)|.

A more interesting example exploiting an infinite list is the \emph{Sieve of Eratosthenes}.
This is a well-known algorithm to obtain prime numbers.
The program is the following:
\begin{lstlisting}[language=lmntal, caption=The Sieve of Eratosthenes]
  Program = [
    ["main"] = (("getNth", 5), ("sieve", ("from", 2))),
    ["from", "n"] = ((eConstr(2, 2), "n"), ("from", "n" + 1)),
    ["sieve", "xs"] = case("xs", [
      alt(1, [], eConstr(1, 0)),
      alt(2, ["p", "ps"], 
        ((eConstr(2, 2), 
          "p"), 
         ("sieve", (("filter", ("nonMultiple", "p")), "ps")))
      )]),
    ["filter", "predicate", "xs"] = case("xs", [
      alt(1, [], eConstr(1, 0)),
      alt(2, ["p", "ps"], 
        let(["rest" = (("filter", "predicate"), "ps")],
          if(("predicate", "p"), 
             ((eConstr(2, 2), "p"), "rest"),
             "rest"
          ))
      )]),
    ["nonMultiple", "p", "n"] = (("n" / "p") * "p" \= "n"),
    ["getNth", "n", "xs"] = case("xs", [ 
      alt(1, [], -1), % Error
      alt(2, ["p", "ps"], 
        if(0 < "n",
           (("getNth", "n" - 1), "ps"),
           "p"
        )
      )])
  ].
\end{lstlisting}
Where \lstinline|getN| returns the \lstinline|5|th element of the sieve, \lstinline|from| generates an infinite list \lstinline|[2, 3, 4, 5, 6, ...]|, \lstinline|sieve| excludes the multiple of the head from the given list, \lstinline|filter| filters an list with a function \lstinline|predicate| and \lstinline|nonMultiple| determines whether the second argument is a multiple of the first.

We have also counted the number of the steps of our G-machine took in the former program.
Notice that this is not the number of the steps of HyperLMNtal.
It took 36504 steps to obtain the 33th prime number, 131.
Here we can observe that this algorithm takes about \(\mathcal{O}(N \log(\log N))\) steps in \Cref{eratos-n-p}.

\begin{figure}
  \begin{center}
    \includegraphics[width=\textwidth]{./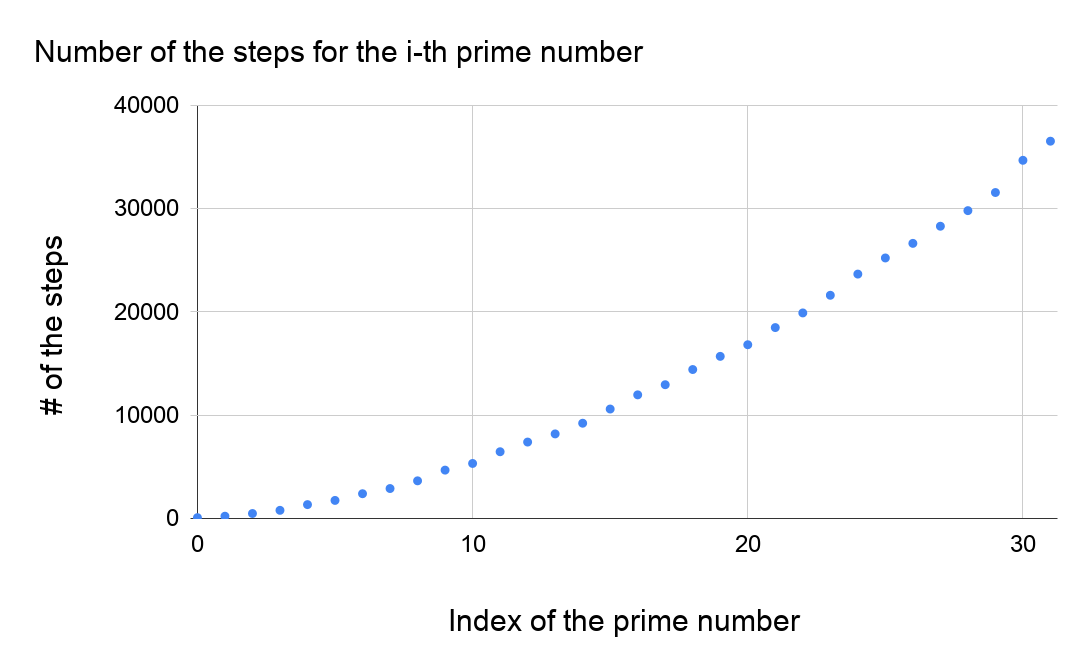}
  \end{center}
  \caption{Number of the steps of the implemented G-machine and the index of the prime number}
  \label{eratos-n-i}
\end{figure}

\begin{figure}
  \begin{center}
    \includegraphics[width=0.95\textwidth]{./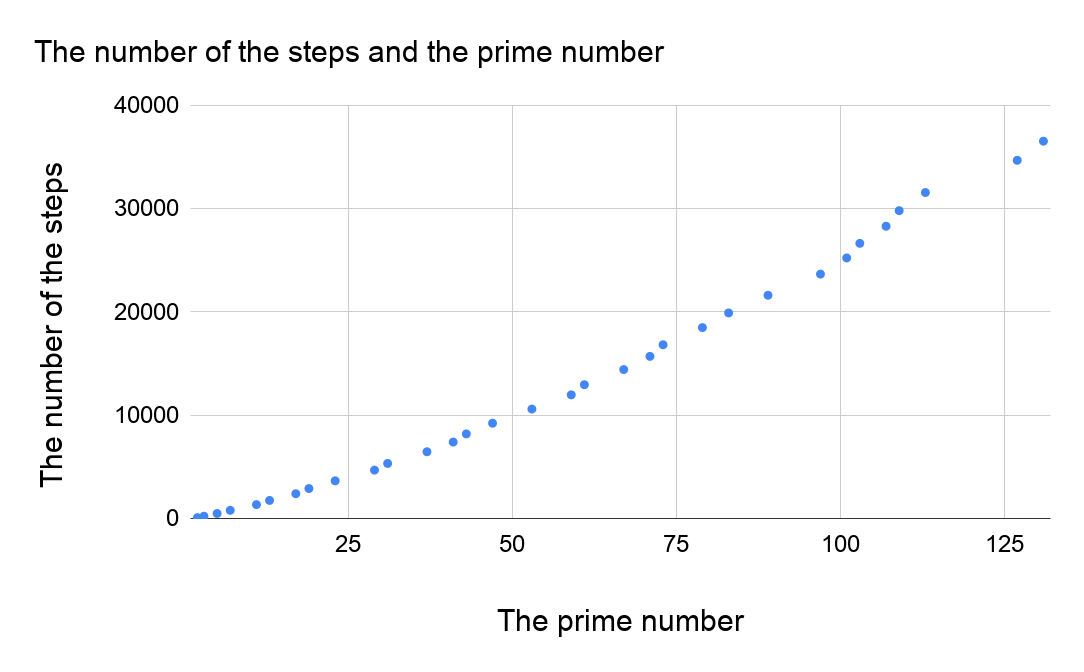}
  \end{center}
  \caption{Number of the steps of the implemented G-machine and the prime number obtained}
  \label{eratos-n-p}
\end{figure}

% コンパイルなどの事前準備と最終ステップに 4 step 使っている（と思う）ので、− 4 が GM 自体のステップ数の正確な値だと思われる

% 32番目の素数（131）を求めるのに、36504 step かかっている（逆に、こんだけステップ数があるのにもかかわらずちゃんと動かせた自分は結構すごいのでは？と悦に浸っている）

% エラトステネス in Haskell with an infinite list, かなり非効率に思ってた（無限リストを扱っているし）が、意外なことに？オーダーは線形で、かなり優秀なアルゴリズムなんだな…と。
% 思った以上にステップ数がかかって最初は驚いたが。
% オーダーは良い。

% 腐っても（腐ってないか）エラトステネス！

\chapter{Further Discussion}
\par

In this chapter, we will give a further discussion based on the experience
of the implementation in the former chapter.

\section{Proposal for abbreviation rules for a hyperlink}

We observed that the hyperlinks often connect 2 atoms in the practice.
In LMNtal, a local link, a link with just 2 end points, can be abbreviated.
For example, \lstinline|a(X, Y, Z), b(W, Y)| can be abbreviated as
\lstinline|a(X, b(W), Z)|.
Thus, we may introduce the same strategy into hyperlinks.
For example, as follows:
\begin{mdframed}
  In the proposed HyperLMNtal syntax,
  \[
  \nu !H.(p(X_1, \ldots, !H, \ldots, X_n),
  q(Y_1, \ldots, Y_{n - 1}, !H))
  \]
  can be abbreviated as
  \[
   p(X_1, \ldots, X_{i-1}, !q(Y_1, \ldots, Y_{n - 1}), X_{i+1}, \ldots, X_n)
  \]
\end{mdframed}

Using this abbreviation, for example,
a program \lstinline|a(!X), b(!X)| can also be written as \lstinline|a(!b)|.
Furthermore, a hyperlink connected list
\begin{lstlisting}[language=lmntal]
 !X1 = [A1|!X2],
 !X2 = [A2|!X3], 
 !X3 = [A3|!X4],
 !X4 = [].
\end{lstlisting}
can be abbreviated as 
\begin{lstlisting}[language=lmntal]
 !X1 = [A1|![A2|![A3|![]]]].
\end{lstlisting}

We may introduce a more ambitious abbreviation scheme for a list.
For example, we may like to write the former list as 
\begin{lstlisting}[language=lmntal]
 !X = [A1!A2!A3![]]
\end{lstlisting}

% ダンプ、ビジュアライザにおける抽象化
\section{Abstraction in the visualizer and the output of the HyperLMNtal runtime environment}

In the previous chapter, we have seen the screenshot of the visualizer\cite{graphene}
and the output of the HyperLMNtal runtime environment\cite{slim}.
However, they were actually simplified for this purpose.
It was very hard to observe the desired information when using these tools.
To tell the truth, we did not use the visualizer when implementing this.
Here is the screenshot of the Graphene in \Cref{graphene-fac4}.
Its is a state in the calculation process of factorial of 4.

\begin{figure}
  \begin{center}
    \includegraphics[width=\textwidth]{./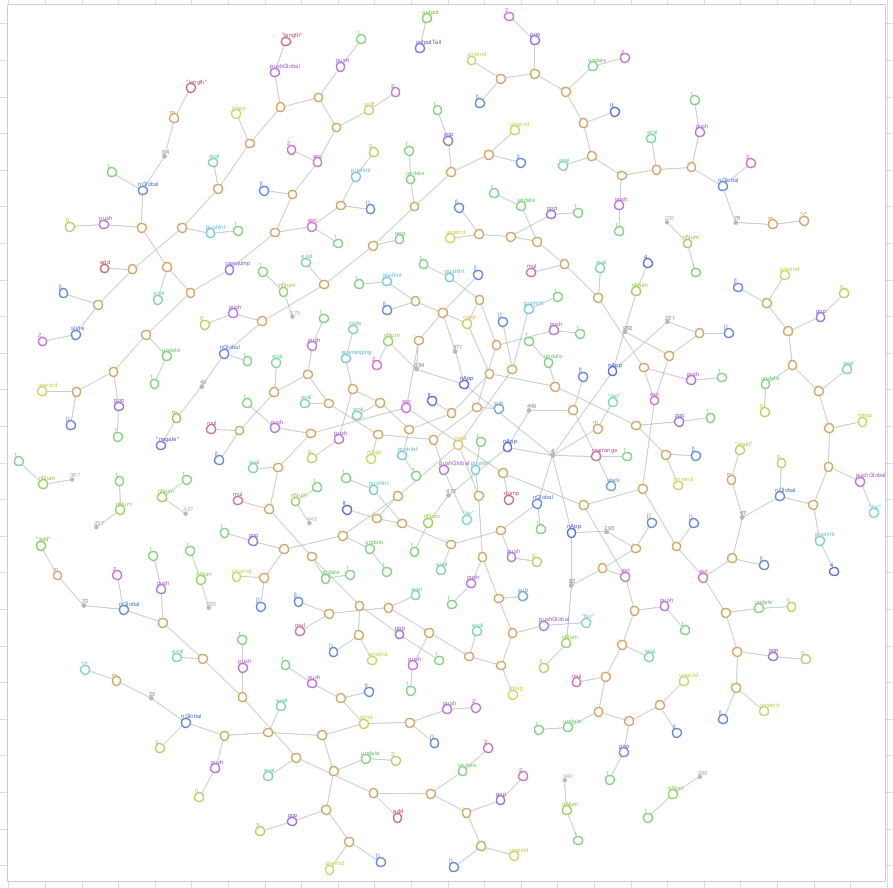}
  \end{center}
  \caption{A state in the calculation process of factorial of 4}
  \label{graphene-fac4}
\end{figure}
As you can see, this is that easy-to-understand output.
This is simply because that it output too much information.
Thus, we can improve this by abstracting the state and reducing the information.

One possible way to achieve this is to write some rules for the abstraction and apply them to each of the states in the execution steps.
For example, in G-machine, we have a global node,
which has a pretty long code, a list of instructions, which does not changes through the execution process.
Therefore, we may like to discard all the information about the global nodes.
Then for example, we can have a rule as the following:
\begin{lstlisting}[language=lmntal]
  clearGlobalNode @@
  !A = nGlobal(Arity, Code)
    :- int(Arity), ground(Code)|.
\end{lstlisting}
Applying the rule will let the output much clearer.
However, we may want to restore the previous state.

For example, we may do not want to delete the global node entirely but want to simplify to one atom with the following rule and want to restore it if the user clicked the simplified atom in the visualizer.
\begin{lstlisting}[language=lmntal]
  abstractGlobalNode @@
  !A = nGlobal(Arity, Code)
    :- int(Arity), ground(Code)|
      | !A = nGlobal.
\end{lstlisting}
Of course we cannot restore the previous graph if we simply discarded it.
We should probably store the previous graph.
However, even we did so, it is not obvious to restore the graph that has degenerated to the atom.
Well there is no certain way in my thought but maybe the technique in the \emph{Reversible computing} help.
We may like to restrict the degenerating rule syntactically, for example, the grammar used in the LMNtal Shape Type\cite{lmntalshapetype2016} or CSLMNtal\cite{cslmntal}, in order to achieve this.

% 「ビジュアライザがある」、「可視化ができる」
% 、、、と言っても、現実的には全く役に立たなかった
% それは単純に情報量が多すぎるから
% したがって、ビジュアライズ、ダンプするときに、情報量を減らして抽象化する工夫をすれば良い
%  一番？簡単なのは、トレースして得られたそれぞれの状態に、更にルールを適用してして、抽象化すること
%   例えば、GM の例で行くと、グローバルノード（supercombinator のノード）はすっごい邪魔
% （内部に結構長い実行用のコード（命令のリスト）を持っている＆特に実行過程において変化するわけでもない）
% したがって、グローバルノードは全削除してしまうようなルールを書いておいて、それをトレース結果に適用（できるようにする）することが考えられる
% 完全に削除までいかなくとも、内部のコードは削除してしまうとか

% そうすれば、だいぶ結果が見やすくなる

% 複雑なデータ構造を shape type、CSLMNtal での型定義のルールを左右ひっくり返したようなようなもので単純化して一つのアトムに代表化させたり

% 特にビジュアライザにおいては、↑のようにすると、その「代表化されたアトム」をクリックしたときに、元の複雑な構造が見えるようになる（元に戻す）と更に嬉しい

% 単純に、元のデータはどっかで保管しておくにしても、ユーザがクリックしたアトムに縮退したデータ構造を復元する…と言うのは、めちゃめちゃ自明というわけではない気がする

\section{Else If statement in a rule}

In LMNtal, we cannot use the \emph{otherwise} pattern.
That is, there does not exists the \emph{else if statements}.
% LMNtal では otherwise が使えない（存在しない）
The runtime environment SLIM\cite{slim} does attempt the matching and the rewriting from the top of the rules to the bottoms.
% slim は現状、逐次実行では、上のルールから順に適用していくことになっている
And this, else if statement, can be actually achieved by just denoting rules downwards.
However, this is not guaranteed as a language feature:
this is just a behavior of the processing system.

% したがって、「上のパターンにマッチしないもの」と言ったように（関数型のパターンマッチングのように）マッチできてしまう
% が、これは言語として保証されているわけでもない

% GM のコンパイラ実装においては、一応、わざわざ他の全パターンを全て記述した
Therefore, in the implementation of the compiler for G-machine, we have written all the \emph{otherwise} patterns.
However, this is very inconvenient.
% これは非常に面倒くさい

Also, we have observed that the existence of the rule that matches partially affects the execution time heavily.
The program used in the experiment if the following:

% また、「一部がマッチする」ルールが何度もマッチングを試みると、実行時間に悪影響が出ることを計測した
\begin{lstlisting}[language=lmntal, caption=The benchmark program]
incr(n(1)).

% Partially matches
irrelevantRule1 @@
incr(a) :- . 

% Does not matches at all
irrelevantRule2 @@
decr(a) :- . 

increment @@
incr(n(N)) :- N < 1000000, N1 = N + 1 | incr(n(N1)).
\end{lstlisting}

In our experiment, the program is compiled with the options
\verb|--slimcode -O3 --use-swaplink|.
And experimented in the environment shown in the \Cref{tab:environment}.
\Cref{partial-match} shows the execution time of the program.

\begin{table}[htb]
  \begin{center}
    \caption{Experiment Conditions}
    \begin{tabular}{l|l} \hline
      Processor & 2.2 GHz Dual-Core Intel Core i7 \\ 
      Memory & 8 GB \ 1600 MHz \ DDR3 \\
      LMNtal Version & LMNtal Compiler 1.50 \\ 
      SLIM Version & 2.4.0 \\ \hline
    \end{tabular}
    \label{tab:environment}
  \end{center}
\end{table}

\begin{figure}
  \begin{center}
    \includegraphics[width=\textwidth]{./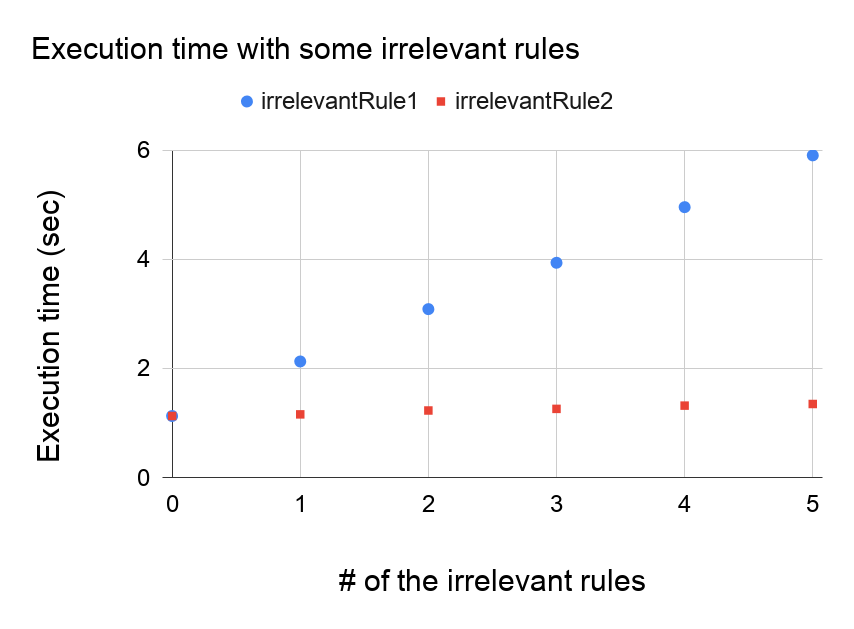}
  \end{center}
  \caption{Execution time with the arbitrary number of the irrelevant rules}
  \label{partial-match}
\end{figure}

% The result tells us that the existence of the rule that matches partially 
We can clearly see that the number of the partially matching rule make the performance worse.
This is a very big issue since we often write the partially matching rules.
For example, in the G-machine we have implemented, we firstly search for the atom \verb|code| and this eventually succeeds, since we always have the atom.
However, we might backtrack depending on the head of the list of the code, for example, \verb|unwind|, \verb|push|, \verb|add|, etc.
If we failed to apply the rule, then it was a rule that partially matched.
And this happens so many times.
Thus, this problem affects the overall performance.

Incorporating the \emph{else if statement} gives one solution to this problem.
For the example above, we do not want to search for the \verb|code| again and again.
CSLMNtal\cite{cslmntal} has a rule with multiple guards but we want to exploit the else if statement in more general matching.
Thus, we may introduce a new syntax as the \Cref{table:syntax}:

\begin{figure}
  \hrulefill
  \begin{center}
    \begin{tabular}{ lrclll } 
      (Rule) & \(\mathit{rule}\) &$::=$&
      \(P \means [\mathit{rule}_1 \verb' || ' \ldots \verb' || ' \mathit{rule}_n]\)
      & \(n > 1\) & Else If statement \\
      &&$|$& \(P \means P \verb| @@ | \mathit{RuleName}\) && Rule \\
    \end{tabular}
  \end{center}
  \hrulefill
  \caption{Syntax of Flat HyperLMNtal}
  \label{table:syntax}
\end{figure}

% 考察：
% 「n :- ...」の場合（全く一致しない場合）はほとんど性能が悪化しない
% が、「n(X, a) :- ...」の場合（グラフが一部だけマッチした場合）は線形に性能が悪化した
% 何度も、同じようなパターンマッチングを試みるルールを書くのは、プログラマの人格的にも、処理系の効率としても、非常に悪い
% したがって、「else-if 節」を導入したい
% CSLMNtal の「else-if 節」は、「複数のガード」ということだが、ガードだけじゃなくてマッチング部分も複数パターン書けるようにしたい
% 意味論的には、これは、他の全パターン（ソースコード中に現れる全てのファンクタ）を記述するのと同等なので、問題ない
% 何か言われたら、「展開」してやる
% というか、ルールスキームとかがまさにそんな感じで定義されている気がする
% 追記：非連結なアトムの場合、これは「展開」とかできないので、無理だな…
% 導入する構文としてはいろいろなものが考えられるが、例えば、こんな感じ（かなり見辛いし、これで納得がいっているわけではない）

For example, the following program 
\begin{lstlisting}[
    language=lmntal,
    caption=makeSpine,
    basicstyle={\ttfamily \footnotesize}
  ]
  % Make application spines
  makeSpineApp @@
  H = [ makeSpine((E1, E2)) | T] 
  :- H = [ makeSpine(E1), E2 | T].
  
  % The default cases (other than application) for the "makeSpine"
  makeSpineDefaultInt @@
  R = makeSpine(Int) 
  :- int(Int) | R = Int.    
  
  makeSpineDefaultVar @@
  R = makeSpine(Var) 
  :- string(Var) | R = Var.    
  
  makeSpineDefaultCase @@
  R = makeSpine(case(E, Alts)) 
  :- R = case(E, Alts).

  makeSpineDefaultIf @@
  R = makeSpine(if(E1, E2, E3)) 
  :- R = if(E1, E2, E3).
  
  makeSpineDefaultPlus @@
  R = makeSpine(E1 + E2) 
  :- R = E1 + E2.
  
  makeSpineDefaultTimes @@
  R = makeSpine(E1 * E2) 
  :- R = E1 * E2.
  
  makeSpineDefaultMinus @@
  R = makeSpine(E1 - E2) 
  :- R = E1 - E2.
  
  makeSpineDefaultDiv @@
  R = makeSpine(E1 / E2) 
  :- R = E1 / E2.
  
  makeSpineDefaultNeg @@
  R = makeSpine(negate(E)) 
  :- R = negate(E).
  
  makeSpineDefaultLessThan @@
  R = makeSpine(E1 < E2) 
  :- R = (E1 < E2).
  
  makeSpineDefaultNeq @@
  R = makeSpine(E1 \= E2) 
  :- R = (E1 \= E2).
  
  makeSpineDefaultLet @@
  R = makeSpine(let(Defns, E)) 
  :- R = let(Defns, E).

  makeSpineDefaultLetRec @@
  R = makeSpine(letrec(Defns, E)) 
  :- R = letrec(Defns, E).
  
  makeSpineDefaultConstr @@
  R = makeSpine(eConstr(Tag, Arity)) 
  :- R = eConstr(Tag, Arity).
\end{lstlisting}

can be rewritten as the following
\begin{lstlisting}[language=lmntal, caption=makeSpine with an else if syntax]
  % Make spines
  H = [ makeSpine(Exp) | T] :- [
    % Make application spines
    Exp = (E1, E2) :- H = [ makeSpine(E1), E2 | T] @@ makeSpineApp
    % The default cases (other than application) for the "makeSpine"
    || :- H = [Exp | T] @@ makeSpineDefault
  ]
\end{lstlisting}
which is certainly easy to read and write.

% slim においては、jump 命令を入れれば良いだけ
We can implement this easily; just adding a \verb|branch| instruction to the HyperLMNtal runtime environment SLIM.

% \lstinputlisting[
%   language=lmntal,
%   basicstyle={\ttfamily \footnotesize}
% ]{example/or.lmn}
% can be expressed as the following
% 
% 
% \begin{lstlisting}[language=lmntal, caption=nFib]
% R = eval(E) 
%   :{ E = app(E1, E2) :- R = thisIsApplication
%  }:{ E = lambda(X, Body) :- string(X) | R = youAreLambda
%  }:{ string(E) | R = thisIsVariable
%  }:
% \end{lstlisting}
% 
% - 例えば、「a(B, C) の B に b は繋がっていない場合にマッチしたい」とすると、こんな感じになる
% \begin{lstlisting}[language=lmntal, caption=nFib]
%  a(B, C) 
%  }:{ B = b :- ... // なんもしない
%  }:{ :- ...
%  }:
% \end{lstlisting}
% 
% 
% このままだと、結構書かなきゃいけないし、こっから更に略記法を適用したりする必要があるかも知れない
% 
% しかし、この記法すらないと、「a(a, C) :- 」、「a(c, C) :- 」、「a(d, C) :- 」、、、と、「a」の第一引数に、「b」以外の全てのアトムが繋がれているパターンについて記述する必要が出てくる
%

\section{Directed HyperLMNtal}
The hyperlinks in HyperLMNtal are undirected.
However, considering to deal with the directed hyperlinks is important from the view of
(\romannumeral 1) the efficiency (the one-way pointer is more efficient that the mutual links) and
(\romannumeral 2) the programmer to program with directed edges.
Thus, we introduce a language extended HyperLMNtal, Directed HyperLMNtal.

\Cref{hl-deref} is a instructions obtained by compiling the HyperLMNtal program
\lstinputlisting[language=lmntal]{example/hl-deref.lmn}.
Notice that we should perform \verb|findatom| twice to obtain the atoms ``a'' and ``b''  which are connected by a hyperlink.
That is, we search two atoms randomly and checks whether it is connected to the same hyperlink (by the instruction \verb|samefunc|).
This is obviously inefficient.
We should let this to execute \verb|findatom| once and then \verb|deref|erence through the connected hyperlink.
To implement this, it will be much easier if we let hyperlink as an one-way pointer, which is referred as a backward hyperarc in \cite{towards}, rather than an undirected hyper edge.

Capability typing\cite{towards} gives one solution to this problem but it is too much strict.
Since the capability typing does not allow a port to be connected to a hyperlink that the number of the end-points changes through the calculation process.

\begin{example}{List traversal}
The following program is il-typed in the capability typing system but it does not yields any null/dangling pointers and should be allowed.
\begin{align*}
    & \overline{R} \mapsto walk(Prev), \overline{Prev} \mapsto cons(Elem, Next)\\
    & \vdash \overline{R} \mapsto walk(Next), \overline{Prev} \mapsto cons(Elem, Next)
\end{align*}
\end{example}

Thus, we shall firstly introduce the Directed HyperLMNtal, which simply let hyperlinks in our HyperLMNtal to be directed.
% then discuss the problem which arises in the Directed HyperLMNtal and the solution to resolve that.
Then discuss the condition that the Directed HyperLMNtal should satisfy.
Again, we shall call a hyperlink as a link for the simplicity from now on in this section.

\begin{figure}
  \begin{center}
    \begin{lstlisting}
      spec         [1, 5]
      findatom     [1, 0, 'a'_1]
      derefatom    [3, 1, 0]
      ishlink      [3]
      isunary      [3]
      findatom     [2, 0, 'b'_1]
      derefatom    [4, 2, 0]
      isunary      [4]
      samefunc     [3, 4]
      jump         [L103, [0], [1, 2, 3, 4], []]
    \end{lstlisting}
  \end{center}
  \caption{Example of an intermediate code of HyperLMNtal}
  \label{hl-deref}
\end{figure}

\subsection{Syntax and operational semantics of Directed HyperLMNtal}
\label{subsec:dhl-syntax-opsem}

$X$ denotes a link name and $p$ denotes an atom name. 
The only reserved name is $\mapsto$.
The syntax is given in \Cref{table:dhl-syntax}. 
An atom \(\overline{X} \mapsto Y\) is called a \emph{indirection}.
Given a rule $(P \vdash Q)$, rules must not appear in $P$ and it should satisfy \fn{P} $\supseteq$ \fn{Q}.
Notice that the name of the head and the tail of the link are in the same name space.
For example, the free link names of \(p(X, Y, \overline{Z})\) is \(\{X, Y, Z\}\).

\begin{figure}
  \hrulefill
  \begin{center}
    \begin{tabular}{ rlclll } 
      (Process) &$P$&$::=$& \(\zero\) && Null \\
      &&$|$& \(p (\mathit{L}_1, \ldots ,\mathit{L}_m)\) & $m \geq 0$ & Atom \\
      &&$|$& \((P, P)\) && Molecule \\
      &&$|$& \(\nu X.P\) && Link Creation \\
      &&$|$& \((P \vdash P)\) && Rule \\
      \\
      (Link) & $L$&$::=$& \(X\) && Tail \\
      &&$|$& \(\overline{X}\) && Head \\
    \end{tabular}
  \end{center}
  \hrulefill
  \caption{Syntax of Directed HyperLMNtal}
  \label{table:dhl-syntax}
\end{figure}

We define the relation $\equiv$ on processes as the minimal equivalence relation satisfying the rules shown in \Cref{table:dhl-equiv}.
Where $P[Y / X]$ is a link name substitution that replaces all free occurrences of $X$ with $Y$ (and $\overline{X}$ with $\overline{Y}$).
If a free occurrence of $X$ occurs in a location where $Y$ would not be free, $\alpha$-conversion may be required.

\begin{figure}
  \hrulefill
  \begin{center}
    \begin{tabular}{ ll } 
      (E1) & \((\mathbf{0}, P) \equiv P\) \\
      (E2) & \((P, Q) \equiv (Q, P)\) \\
      (E3) & \((P, (Q, R)) \equiv ((P, Q), R)\) \\
      (E4) & \(P \equiv P' \Rightarrow (P, Q) \equiv (P', Q)\) \\
      (E5) & \(P \equiv Q \Rightarrow \nu X.P \equiv \nu X.Q\) \\
      (E6) & \(\nu X.(\overline{X} \mapsto Y, P) \equiv \nu X.P[Y / X]\)\\
      & where \(X \in \mathit{fn}(P) \lor Y \in \mathit{fn}(P)\) \\
      (E7) & \(\nu X.\nu Y.\overline{X} \mapsto Y \equiv \zero\) \\
      (E8) & \(\nu X.\zero \equiv \zero\)\\
      (E9) & \(\nu X.\nu Y.P \equiv \nu Y.\nu X.P\)\\
      (E10)& \(\nu X.(P,Q) \equiv (\nu X.P,Q)\)\\
      & where \(X \notin \mathit{fn}(Q)\) \\
    \end{tabular}
  \end{center}
  \hrulefill
  \caption{Structural congruence on Directed HyperLMNtal processes}
  \label{table:dhl-equiv}
\end{figure}

We define the reduction relation $\longrightarrow$ on processes as the minimal relation satisfying the rules in \Cref{table:dhl-trans}.

\begin{figure}
  \hrulefill
  \begin{center}
    \begin{tabular}{ ll } 
      (R1) & \(\dfrac{P \longrightarrow P'}{(P, Q) \longrightarrow  (P', Q)} \)
      \vspace{1em}\\
      (R2) & \(\dfrac{P \longrightarrow P'}{\nu X.P \longrightarrow  \nu X.P'} \)
      \vspace{1em} \\
      (R3) & \(\dfrac{Q \equiv P \hspace{16pt} P \longrightarrow P' \hspace{16pt} P' \equiv Q'}{Q \longrightarrow Q'} \)
      \vspace{0.5em} \\
      (R4) & \( (P, (P \vdash Q)) \longrightarrow (Q, (P \vdash Q)) \) \\
    \end{tabular}
  \end{center}
  \hrulefill
  \caption{Reduction relation on Directed HyperLMNtal processes}
  \label{table:dhl-trans}
\end{figure}

\subsection{Conditions for Directed HyperLMNtal}
\label{subsec:dhl-condition}

\subsubsection*{Functional (right-unique) condition}
To let a hyperlink to be a backward hyperarc, we should let the head of the link to occur just once.
Thus, we can add a following condition.

\begin{condition}[Functional condition]
  For all link named \(X \in \mathit{fn}(P)\) in a process $P$, 
  the \textit{head} of the link ($\overline{X}$) must not occur free more than once in $P$. 
\end{condition}

Notice the congruence rules should be applied preserving this condition.
For example, we cannot transform \(\nu X.p(\overline{X})\) to
\(\nu X.(\overline{X} \mapsto Y, p(\overline{X}))\) with the reversed rule of (E6).

By introducing this condition, the indirection (link fusion) is no longer symmetry.
Since we have proved the symmetry of the fusion via the process
\(\nu Z. (Z \bowtie X, Z \bowtie Y)\),
which is now disallowed in Directed HyperLMNtal
because \(\nu Z. (\overline{Z} \mapsto X, \overline{Z} \mapsto Y)\), in which the head of a link $Z$, $\overline{Z}$, occurs twice, is prohibited\footnote{
However, we cannot say it is asymmetry just because we cannot prove the symmetricity.
% とはいえ、「対称性を証明できてない」からと言って、「非対称である」と言い切れるわけではないが
}.

\subsubsection*{Serial (left-total) condition}
We definitely do not want to allow null/dangling pointers.
To achieve this, we should add a condition as the following.

\begin{condition}[Serial condition 1 (pre-revised)]
  \label{insufficient-serial-condition}
  For all the process $\nu X.P$ where \(X \in \mathit{fn}(P)\) in a given process, the \textit{head} of the $X$ must exist in $P$ .
\end{condition}

However, this is insufficient considering the congruence rules.
For example,
\(\nu X.(\overline{X} \mapsto X, p(X))\)
satisfies the both functional and serial conditions,
but the congruent process \(\nu X. p(X)\) obtained by applying the rule (E6) does not satisfy this.
To resolve this, we should first consider a process without indirection.
The serial condition is now revised as the following.

\begin{condition}[Serial condition 2 (revised)]
  \label{revised-serial-condition}
  % Given a process $\nu X.P$, there should exist $\nu X.Q$ where $\nu X.P \equiv \nu X.Q$ and 
  % The process $\nu X.P$ where \(X \in \mathit{fn}(P)\) and the indirection \(\overline{X} \mapsto Y\) does not occur in $P$,
  % this condition is satisfied if and only if the \textit{head} of the $X$ must exist in $P$.
  % Also, we define that a congruent process $Q$ where \(P \equiv Q\) satisfies this condition.
  
  This condition is satisfied if
  the \textit{head} of the link $X$ exists in $P$
  and the indirection \(\overline{X} \mapsto Y\) (whose $X$ occurs free in $P$) does not occur in $P$
  for all the process $\nu X.P$ where \(X \in \mathit{fn}(P)\) in the given process.

  Also, $Q$ satisfies this condition if $P$ satisfies this condition where $P \equiv Q$.
\end{condition}

% Intuitively, this means that all links are reachable to the port of the atom which is not an indirection.

Intuitively, this means that if you follow the link whose scope is closed from a port of an atom except an indirection possibly via indirections, you will always reach a port (head of the link) of an atom except an indirection.
We shall call it ``reachable''.

\subsubsection*{An additional condition for a rule}

We should add an additional condition to store the serial condition after reduction.

\begin{condition}[An additional condition for a rule]
 If the \textit{head} of a link $X$ occurs free in $P$, the \textit{head} of the link $X$ must also occur free in $Q$.
\end{condition}

\begin{theorem}{The preservation of serial condition in a reduction}{}
  \(P\) where \(Q \mapsto P\) satisfies serial condition if the rule used in (R4) does not have an indirection.
\end{theorem}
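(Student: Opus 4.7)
My approach is structural induction on the derivation of $Q \longrightarrow P$ using the four reduction rules (R1)--(R4), treating the serial condition as a property that must hold at every $\nu X.S$-subterm of a process.

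For the contextual rules (R1) and (R2), the induction hypothesis yields the serial condition at the (modified) redex position, and every other $\nu$-subterm is syntactically unchanged; hence the condition is preserved globally. For (R3), preservation follows immediately from the congruence-closure clause that is baked into the definition of the revised serial condition.

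The substantive case is (R4), where $(P', (P' \vdash Q')) \longrightarrow (Q', (P' \vdash Q'))$. I would pick an arbitrary $\nu X.S$-subterm of the reduct and split on its position relative to the redex. If it sits entirely inside the rule, or entirely inside the newly-pasted $Q'$ (inherited from a well-formed rule body), the condition holds for the same reason it held in the rule itself. If $\nu X.S$ properly encloses the redex, the head $\overline{X}$ it requires was either contributed by the surrounding context, which is untouched by the rewrite, or by the consumed $P'$; in the latter case $\overline{X}\in\mathit{fn}(P')$, and the additional condition for a rule guarantees $\overline{X}\in\mathit{fn}(Q')$, so a head for $X$ survives the replacement. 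For the ``no forbidden indirection'' clause, any indirection in the reduct was either already present (and permitted) in the unchanged context, or lies inside $Q'$; the hypothesis that the rule contains no indirection rules out the latter possibility.

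The main obstacle I anticipate is precisely the sub-case where the head of a $\nu$-bound link straddling the rewrite site is ``carried'' by an indirection atom inside $P'$: if such rules were allowed, the reduction could delete $\overline{X}$ by consuming its sole indirection atom, destroying the serial condition in the reduct. This is exactly why the no-indirection hypothesis on the rule is both necessary and sufficient: it prevents $P'$ from swallowing an indirection whose head lay on the other side, which $Q'$ would not be obliged to replace. Standard $\alpha$-renaming of bound names in $Q'$, following the link-substitution scheme already formalized for HyperLMNtal, suffices to handle accidental capture when the rewrite occurs underneath a $\nu$.
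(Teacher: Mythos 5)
Your proof is correct and takes essentially the same route as the paper's, which is an informal prose rendering of your (R4) case: local links of the rule are matched completely and so cause no trouble, the additional condition on rules preserves the heads of free links across the rewrite, and the no-indirection hypothesis disposes of the second clause of the revised serial condition (your handling of (R1)--(R3) is routine scaffolding the paper simply omits). One small quibble: the scenario in your ``main obstacle'' paragraph --- $P'$ consuming an indirection atom carrying a head that $Q'$ need not restore --- is already excluded by the additional condition for a rule; the danger the no-indirection hypothesis actually averts is an indirection in the \emph{right-hand side} creating a free-to-free alias that the congruence rules can then absorb, orphaning a link, as the paper's subsequent ``example that collapses'' illustrates.
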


\begin{proof}{}{}
  Firstly, we do not have to worry about the local link (the link but with the free link name).
  The local links on LHS should match local links outside of the rule ``completely''.
  And there should be no chance that the link would be unreachable.
  Since we cannot even start traversing.
  The local links on RHS will generate a new local links, which should have satisfied the serial condition.
  Thus it will obviously satisfy the condition after generation by the rule.
  For the free links, we have to need to pay more attention.
  Is there a chance that the head of a free link would disappear?
  Well, this is denied by the condition we added on a rule;
  We are to store the head of the links in the rewriting process.
  If we are allowed to have an indirection in the rule, then it may disappear after rewriting and the process may cannot satisfy the serial condition.
  However, we just have prohibited this and that would not occur.
\end{proof}

\subsubsection*{An example that collapses}
Firstly, we shall introduce an abbreviation rule for the convenience.
\begin{mdframed}
  \(
  \nu X.(p(\ldots, X, \ldots),
  q(\ldots, \overline{X}))
  \)
  where the number of occurrences of the link named \(X\) is 2
  can be written as
  \(
   p(\ldots, q(\ldots), \ldots)
  \)
  .

  For example, \(\nu X.(\overline{R} \mapsto X, append(L_1, L_2, \overline{X}))\) can be abbreviated as \(\overline{R} \mapsto append(L_1, L_2)\)
\end{mdframed}

We have shown that a rule without an indirection will be a safe to rewrite a process.
What about a rule with indirection ?
Sadly, it yields an annoying problem.

\begin{example}{}{}
The append of the nil and the list should just return the latter list.
\[
\overline{R} \mapsto append(nil, L) \vdash \overline{R} \mapsto L
\]
\end{example}

What happens if we apply the rule to the process ?
\[
\overline{R} \mapsto append(nil, R), p(R)
\]

Then it will be like the following. 
\begin{align*}
& \nu R. (\overline{R} \mapsto append(nil, R), p(R), (\ldots \vdash \ldots)) \\
& \equiv_{\mbox{\scriptsize (E7)}}
\nu R. \nu L. (\overline{L} \mapsto R, \overline{R} \mapsto append(nil, L), p(R), (\ldots \vdash \ldots)) \\
& \longrightarrow
\nu R. \nu L. (\overline{L} \mapsto R, \overline{R} \mapsto L, p(R), (\ldots \vdash \ldots)) \\
& \equiv_{\mbox{\scriptsize (E7)}} 
\nu R. (\overline{R} \mapsto R, p(R), (\ldots \vdash \ldots)) \\
& \equiv_{\mbox{\scriptsize (E7)}} 
\nu R. p(R), (\ldots \vdash \ldots)
\end{align*}

Here the link $R$ is now unreachable, which is surely undesirable.

\subsubsection*{The solutions}

As shown in former example, the indirection from a free link to a free link is somewhat dangerous. How should we avoid this ?

There should be at least 2 solutions.

\begin{itemize}
    \item 
    Prohibit the aliasing on the right-hand side of the rule.
    
    Then, for example, the above append-nil-rule can be rewritten with 2 rules
    \begin{align*}
      (\overline{R} \mapsto &\ append(nil, cons(H, T))
      \vdash \overline{R} \mapsto cons(H, T)), \\
      (\overline{R} \mapsto &\  append(nil, nil) \vdash \overline{R} \mapsto nil)
    \end{align*}
    
    This is the simplest solution though it limits the expressiveness power.
    
    \item 
    Check the rules (and the processes) can be well-typed with the capability-typing as following.
    
    Then, for example, the above append-nil-rule will be il-typed.
    \begin{align*}
      & \overline{R} \mapsto append(nil, R), p(R),
      &  \mbox{\textcircled{\scriptsize 1}} \\
      & (\overline{R} \mapsto append(nil, L) \vdash \overline{R} \mapsto L)
      &  \mbox{\textcircled{\scriptsize 2}}
    \end{align*}

    By KCL, this should satisfy
    \begin{align*}
    & - append/3 + append/2 + p/1 = 0 & \mbox{by \textcircled{\scriptsize 1}} \\
    & append/3 =\ \mapsto/1 & \mbox{by \textcircled{\scriptsize 2}} \\
    & append/2 =\ \mapsto/2 & \mbox{by \textcircled{\scriptsize 2}} \\
    \end{align*}

    By Conn, \(\mapsto/1 =\ \mapsto/2\), therefore, \(p/1 = 0\), which is il-typed.
    
\end{itemize}

\subsection{Implementation}

We have implemented a proof of concept of Directed HyperLMNtal.
The Github repository is \verb|https://github.com/sano-jin/vertex|.
Current implementation lacks the capability typing system and restricted so that the head of the link can only appear at the left-hand side of \verb| -> | (\(\mapsto\)).
However, we have implemented programs including G-machine and FL interpreter
and are confident with its usefulness.

Visualizer : 
\begin{center}
  \begin{tabular}{c}
    \begin{minipage}{0.5\hsize}
      \begin{center}
        \lstinputlisting[language=lmntal]{example/dhl/cube.dhl}
      \end{center}
    \end{minipage}
    \begin{minipage}{0.5\hsize}
      \begin{center}
        \includegraphics[clip, width=4cm]{./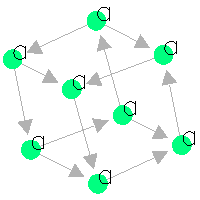}
      \end{center}
    \end{minipage}
  \end{tabular}
\end{center}

Non-deterministic execution and the visualizer for the state transition graph:
\begin{center}
  \begin{tabular}{c}
    \begin{minipage}{0.5\hsize}
      \begin{center}
        \lstinputlisting[language=lmntal]{example/dhl/cycle.dhl}
      \end{center}
    \end{minipage}
    \begin{minipage}{0.5\hsize}
      \begin{center}
        \includegraphics[clip, width=4cm]{./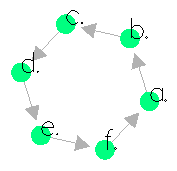}
      \end{center}
    \end{minipage}
  \end{tabular}
\end{center}

\chapter{Summary and Conclusion}

\section{Summary}

The contributions of this research are basically the following 2:
\begin{itemize}
\item
  We formalized the syntax and the semantics of HyperLMNtal
\item
  We implemented G-machine and a compiler for it in HyperLMNtal
\end{itemize}

\subsection*{Formalizing the syntax and the semantics of HyperLMNtal }
HyperLMNtal is extended from the graph rewriting language/calculus model LMNtal.
However, it lacked the rigid definition: it was more an extension of the implementation than on the calculus model.
The semantics of LMNtal features fine-grained concurrency based on local rewriting.
However, since we could not determine the locality of the hyperlink in HyperLMNtal, we couldn't incorporate it into the LMNtal semantics.
Thus, we first introduced a scope (link creation) adopted from the \(\pi\)-calculus\cite{pi} and defined the locality of a hyperlink to formalize the syntax/semantics.
Now, HyperLMNtal is not just a programming language extended from the basic calculation model, but also a concurrent calculation model based on strict and formal definitions.

\subsection*{Implementing G-machine in HyperLMNtal}
% G-machine is a virtual machine that performs lazy evaluation, which is the basis of implementation of lazy functional programming languages such as Haskell.
% The implementation of G-machine requires a heap, which is a more general graph than just a tree so as can share subgraphs.
% Therefore, HyperLMNtal is ideal to implement this.

In our research, we implemented a compiler which translates the source language, the core language, into the execution code for G-machine and G-machine, in HyperLMNtal.
We have succeeded to implement the compiler in 404 lines and G-machine in 570 lines and showed that we can implement the language processing system that handles complex data structures in graph rewriting language tersely.
In addition, we achieved to visualize G-machine using the HyperLMNtal visualizer\cite{graphene}\cite{lavit}.

\section{Future tasks}

We can possibly prove the correctness of the encodings of \(\lambda\)-calculus, System F and so on in HyperLMNtal with the formalized semantics.
For example, in the encodings of \(\lambda\)-calculus, \(\alpha\)-equivalent expressions should be encoded to the structural congruent processes.
And the \(\beta\)-converted expression should be encoded into those of the reduced process in our new semantics.
We believe that we can give a more formal and simple proof without referring the graph homomorphism.

% 形式化された HyperLMNtal 意味論を用いて，過去に行われたラムダ計算などのエンコード\cite{hyperlambda}\cite{unificationhyperlambda}\cite{namebindingiseasy}の正当性を証明するなどのことが考えられる.
% つまり，アルファ同値である式をエンコードした際に，それぞれのプロセスが，HyperLMNtal の構造合同関係にあること，ベータ変換をした式が，HyperLMNtal の遷移関係にあることなど.
% グラフ同型性に言及する必要がなく，より厳密かつわかりやすい証明を与えることができると期待している.

For the G-machine we have implemented, we can firstly eliminate the duplication and a deletion of a large graph and make it more efficient:
currently we are copying the ``code'' each time when we \texttt{pushGlobal} but this can be eliminated by replacing the ``code'' with a ``pointer'' to the code; instruction pointer.
% 実装したG-machine については，まずはグラフの大規模な複製や削除を削減して性能を向上させることが考えられる．
Then we may compare with the implementation in other languages.
% また，他言語での G-machine との比較も行いたい．
Also, there exists a more efficient G-machines, spineless G-machine\cite{spineless} and spineless tag-less G-machine\cite{stg} (this is the machine used in Glaskaw Haskell Compiler) and a Parallel G-machine \cite{hdg}
so we may like to implement these.
% 更に，G-machine の派生として，並列実行可能な G-machine\cite{hdg} や，より効率の良い spineless G-machine\cite{spineless}，spineless tag-less G-machine\cite{stg} などが存在するので，これらの実装も行っていきたい.
With a new semantics, HyperLMNtal is now a concurrent calculus model with a fine-grained concurrency.
Thus, we believe that it is a great significance to demonstrate how parallel G-machine can be modeled in HyperLMNtal.
% HyperLMNtal は形式化された意味論によって，細粒度の並行性を持つ計算モデルになっており，特に並列な G-machine のモデリングがどのように行えるのかを実証することの意義は大きいと期待している.

\chapter*{Acknowledgments}
\addcontentsline{toc}{chapter}{Acknowledgments}
%  本研究を進めるにあたり様々な方の指導，助言をいただきました. まず, ご指導を賜わった上田和紀教授に深く感謝致します.
% 最後に，入学から現在に至るまで，叱咤激励を私にかけてくれた家族に深く感謝致します．

% I would like to thank Professor Ueda for useful discussions.
Throughout the writing of this paper I have received a great deal of support and assistance.

I would first like to thank my supervisor, Professor Ueda.
I could not even start my research without your patient support.

In addition, I would like to thank my parents for the encouragement.

Finally, I would like to thank my colleagues, Masaki Nakata, Kunihiro Hata, Uzawa Seishiro and Kota Fukui and also other members in our lab including Yamamoto-san, Yamada-san, Tsunekawa-san and Walter-san.

% I would like to acknowledge my colleagues from my internship at Central P. for their wonderful collaboration. I would particularly like to single out my supervisor at Central P., Phoebe Buffay. Phoebe, I want to thank you for your patient support and for all of the opportunities I was given to further my research.

% I would also like to thank my tutors, Dr. Ross Geller and Dr. Chandler Bing, for their valuable guidance throughout my studies. You provided me with the tools that I needed to choose the right direction and successfully complete my dissertation.

\begin{figure}[htbp]

  \begin{flushright}
January 2021, Jin Sano
\end{flushright}
	\end{figure}

% References
\bibliographystyle{plain}
\begin{CJK}{UTF8}{ipxm}
\bibliography{ref}
\end{CJK}

% Appendix
\appendix
\addcontentsline{toc}{chapter}{Appendices}
\chapter{Source codes}
\label{appendix:lmn-gm}

\section{Implementation of the compiler for G-Machine in HyperLMNtal}
\label{sec:gm-compiler-source}

\lstinputlisting[
  language=lmntal,
  basicstyle={\ttfamily\footnotesize}
]{example/compiler15.lmn}

\section{Implementation of G-Machine in HyperLMNtal}
\label{sec:gm-source}

\lstinputlisting[
  language=lmntal,
  basicstyle={\ttfamily\footnotesize} % \footnotesize
]{example/gm15.lmn}

\chapter{Experiments}

% \section{The number of the steps of G-machine for some programs}
% \label{sec:number-of-steps}

\begin{figure}[h]
  \begin{center}
  \begin{tabular}{rrr}
    \hline
    input & result & number of the steps of G-machine \\
    \hline\hline
    1 & 3   & 98    \\
    2 & 5   & 168   \\
    3  & 9   & 308   \\
    4  & 15  & 518   \\
    5  & 25  & 868   \\
    6  & 41  & 1428  \\
    7  & 67  & 2338  \\
    8  & 109 & 3808  \\
    9  & 177 & 6188  \\
    10 & 287 & 10038 \\
    \hline
  \end{tabular}
  \end{center}  
  \caption{The number of the steps took in \texttt{nfib}}
\end{figure}

\begin{center}
\begin{figure}
\begin{tabular}{rrr}
\hline
The index of the prime & the prime & the number of the steps of G-machine \\
\hline\hline
0  & 2   & 79    \\
1  & 3   & 218   \\
2  & 5   & 479   \\
3  & 7   & 788   \\
4  & 11  & 1341  \\
5  & 13  & 1746  \\
6  & 17  & 2395  \\
7  & 19  & 2896  \\
8  & 23  & 3641  \\
9  & 29  & 4678  \\
10 & 31  & 5323  \\
11 & 37  & 6456  \\
12 & 41  & 7393  \\
13 & 43  & 8182  \\
14 & 47  & 9215  \\
15 & 53  & 10588 \\
16 & 59  & 11961 \\
17 & 61  & 12942 \\
18 & 67  & 14411 \\
19 & 71  & 15684 \\
20 & 73  & 16809 \\
21 & 79  & 18470 \\
22 & 83  & 19887 \\
23 & 89  & 21596 \\
24 & 97  & 23645 \\
25 & 101 & 25206 \\
26 & 103 & 26619 \\
27 & 107 & 28276 \\
28 & 109 & 29785 \\
29 & 113 & 31538 \\
30 & 127 & 34655 \\
31 & 131 & 36504 \\
\hline
\end{tabular}
\caption{The number of the steps took in the Sieve of Eratosthenes}
\end{figure}
\end{center}

\begin{center}
  \begin{figure}
  \begin{tabular}{rrr}
    \hline
    Number of the irrelevant rules
    &
    \begin{minipage}{0.3\textwidth}
    \vspace{0.5em}
      execution time (sec)\\
      with the irrelevantRule1
    \vspace{0.2em}
    \end{minipage}
    &
    \begin{minipage}{0.3\textwidth}
      \vspace{0.5em}
      execution time (sec)\\
      with the irrelevantRule2
    \vspace{0.2em}
    \end{minipage}
    \\
    \hline\hline
    0 & 1.13 & 1.13\\
    1 & 2.13 & 1.16\\
    2 & 3.09 & 1.23\\
    3 & 3.94 & 1.26\\
    4 & 4.96 & 1.32\\
    5 & 5.91 & 1.35\\
  \hline
  \end{tabular}
  \caption{The execution time with the patial maching rules}
\end{figure}
\end{center}

\end{document}